\documentclass[11pt, a4paper]{article} 

\usepackage[german,english]{babel} 
\usepackage{booktabs} 
\usepackage{comment} 
\usepackage[utf8]{inputenc} 
\usepackage[T1]{fontenc} 

\usepackage{amsmath,amsfonts,amsthm, amssymb,mathrsfs} 
\usepackage{extpfeil}
\usepackage{tikz} 
\usepackage{tikz-cd}
\usetikzlibrary{positioning,arrows} 
\usetikzlibrary{decorations.pathreplacing} 
\usepackage{tikzsymbols} 
\usepackage{blindtext} 
\usepackage{bm, bbm}
\usepackage{hyperref}
\usepackage[shortlabels]{enumitem}
\usepackage[ruled,vlined]{algorithm2e}

\usepackage{geometry} 

\usepackage{setspace} 
\usepackage[T1]{fontenc} 
\usepackage[utf8]{inputenc} 

\usepackage{XCharter} 

\usepackage{fancyhdr} 
\usepackage{setspace}
\usepackage{graphicx}
\usepackage{amsmath}
\usepackage{tabularx}
\usepackage{multirow}
\usepackage{longtable}
\usepackage{booktabs} 
\usepackage{threeparttable} 
\usepackage{footnote}
\usepackage[round]{natbib}
\usepackage{pdflscape}

\usepackage{tikz}
\usetikzlibrary{shapes.geometric}

\usepackage{natbib} 
\usepackage{har2nat} 

\newtheorem{assumption}{Assumption}

\newtheorem{definition}{Definition}
\newtheorem{theorem}{Theorem}
\newtheorem{lemma}{Lemma}

\newtheorem*{remark}{Remark}

\title{Robust Variance Estimation in Linear Regression: \\ A Projection-Geometry Perspective} 
\author{
Yanping Chen\thanks{Department of Economics, Indiana University Bloomington. Email: \texttt{chenyanp@iu.edu}.}
}
\date{\today}

\begin{document}


\maketitle 

\begin{abstract}
Inference in linear regression commonly treats OLS residuals as proxies for unobserved errors. This approximation can fail when the regression projection is nonlocal relative to the error-dependence structure. Residualization then shifts covariance information across observations and clusters, while conventional heteroskedasticity-consistent (HC) and cluster-robust variance estimators (CRVE) retain only diagonal or within-cluster residual moments and may therefore understate sampling uncertainty.

This paper develops a projection-geometry framework for robust variance estimation. The variance of the OLS estimator is represented exactly as a Riesz functional of latent covariance blocks, and observable residual moments are linked to the target through a linear operator determined by the full regression projection. This formulation reduces variance estimation to a linear inverse problem. I propose a Riesz variance estimator that combines within- and cross-cluster residual moments. Conventional HC and CRVE emerge as restricted approximations whose validity depends on negligible projection spillovers.

The estimator remains well defined when cluster-specific leverage matrices are singular and is computed by an iterative algorithm that avoids explicit matrix inversion. Simulations show substantial undercoverage by conventional methods under projection spillovers, whereas the proposed estimator restores near-nominal coverage. In an application to colonial governor promotions, the correction changes the significance of four of five reported coefficients.

\bigskip
\noindent\textbf{Keywords:}
robust variance estimation; projection spillovers; cluster-robust inference; high-dimensional controls; multi-way fixed effects; Riesz representation; linear inverse problem.
\end{abstract}

\newpage
\section{Introduction} \label{sec:introduction}
Empirical researchers routinely rely on heteroskedasticity-robust (HC) and cluster-robust variance estimators (CRVE) to conduct inference in linear regression. Their validity rests on a plug-in principle: OLS residuals approximate unobserved regression errors, so residual moments estimate the sampling variance of the estimator. In heteroskedastic settings, this logic traces back to \citet{eicker1963asymptotic} and \citet{white1980heteroskedasticity}, with finite-sample refinements yielding HC2/HC3 \citep{mackinnon1985some} that correct for leverage through $(1-h_{ii})^{-1}$ and $(1-h_{ii})^{-2}$ respectively. In clustered settings, the foundational estimator is due to \citet{liang1986longitudinal} and \citet{arellano1987computing}, with bias-corrected extensions CR2 and CR3 \citep{bell2002bias} that adjust for within-cluster leverage through $(I_{N_g} - H_{gg})^{-1/2}$ and $(I_{N_g} - H_{gg})^{-1}$. \citet{imbens2016robust} and \citet{pustejovsky2018small} develop practical small-sample recommendations based on CR2, and \citet{cameron2008bootstrap} propose the wild clustering bootstrap as an alternative to asymptotic inference. \citet{mackinnon2023fast} show that CR3-based inference is substantially more reliable than CR1 when cluster sizes are heterogeneous or the number of regressors is moderate. \citet{cameron2015practitioner} and \citet{mackinnon2023cluster} provide comprehensive empirical guides.

A crucial but often implicit requirement behind this plug-in argument is geometric: the OLS projection must be approximately \emph{local} --- nearly diagonal under heteroskedasticity and block-diagonal under clustering --- so that residuals primarily reflect errors within their own observation or cluster. Many modern empirical designs violate this locality property. Rich control sets, shared latent factor directions, multi-way fixed effects, and non-nested cluster structure induce substantial \emph{projection spillovers}: OLS residuals mechanically mix errors across observations or clusters through non-negligible off-diagonal mass in the residual-maker $M = I - X(X'X)^{-1}X'$. When the number of covariates grows with the sample size, cross-leverage terms remain first-order objects \citep{anatolyev2017asymptotics, anatolyev2024off}; when regressors share latent directions, the projection matrix couples observations globally even when the number of regressors is moderate; and when mobility graphs are dense, off-diagonal mass grows with connectivity regardless of dimension. In all these settings, classical HC and CRVE fail through two distinct channels.

The first failure is \emph{inadequate leverage adjustment}. Classical estimators correct for self-attenuation through diagonal leverage terms --- $(1-h_{ii})$ or $(I_{N_g}-H_{gg})$ --- but use only the diagonal $h_{ii}$, which determines the total off-diagonal mass $h_{ii}(1-h_{ii})=\sum_{j\neq i}M_{ij}^2$ but carries no information about which observations that mass attaches to or what variances $\{\sigma_j^2\}$ they carry. This variance pattern is a property of $\mathrm{col}(X)^{\perp}$, orthogonal to the leverage geometry in $\mathrm{col}(X)$, and is therefore invisible to any correction based on $h_{ii}$ alone: the resulting bias, $\sum_{j\neq i}h_{ij}^2\sigma_j^2/(1-h_{ii})^2$, is an inner product between the off-diagonal projection weights and the heteroskedasticity pattern that no function of $h_{ii}$ can recover. Under homoskedasticity, $\sigma_j^2=\sigma^2$ factors out and the bias reduces to $\sigma^2 h_{ii}(1-h_{ii})$, fully removable by a diagonal adjustment; under heteroskedasticity, the failure is structural.

The second failure is \emph{discarding cross-residual products}. Classical estimators use only squared residuals $\{\hat{u}_i^2\}$, discarding cross-products $\{\hat{u}_i \hat{u}_j\}$ entirely. These cross-products satisfy $\mathbb{E}[ \hat{u}_i \hat{u}_j \mid X] = \sum_k M_{ik} M_{jk} \sigma_k^2$ and carry identifying information about the same variance components through different, \emph{signed} linear combinations of the projection weights --- combinations the squared-residual system, whose coefficient matrix has non-negative entries $\{M_{ik}^2\}$, cannot form. When the diagonal moment system loses rank or is poorly conditioned in the direction relevant for $\mathrm{var} (\hat{\beta})$, cross-moments provide the identifying variation the diagonal system structurally lacks.

Recent work addresses the first failure partially. \citet{cattaneo2018inference} correct for cross-leverage contamination in the scalar heteroskedastic case by inverting the Hadamard square of the residual-maker from controls, $(M_W \odot M_W)^{-1}$. \citet{jochmans2022heteroscedasticity} addresses the same setting through a cross-fit construction: he scales the residual by the diagonal of the controls-only projection, $\tilde{u}_i = \hat{u}_i/(M_W)_{ii}$, and uses the product $y_i\tilde{u}_i$ as an unbiased signal for $\sigma_i^2$, approximating the leave-one-out idea of \citet{kline2020leave} via the Sherman--Morrison formula and extending consistency to $\limsup_n q_n/n<1$ without Hadamard inversion. Both rely only on the projection of controls $M_W$, ignoring leverage from the regressors of interest $X$, and using $y_i$ as a proxy for $u_i$ introduces sensitivity to the conditional mean under misspecification. \citet{anatolyev2026many} extend to the clustered setting via a leave-cluster-out cross-fit estimator using the full residual-maker $M=M_{[X,W]}$, accounting for leverage from both $X$ and $W$, but require $M_{gg}$ to be invertible within each cluster. Moreover, all three methods retain only within-observation or within-cluster residual moments, leaving the second failure --- discarding cross-residual products --- entirely unaddressed.

This paper develops a unified framework that addresses both failures simultaneously. The central insight is that $\mathrm{var}(\hat{\beta})$ admits an exact finite-sample representation as a Riesz functional of latent covariance blocks, with observable residual moments linked to this target through a linear operator determined by projection geometry. Variance estimation therefore reduces to a linear inverse problem. We propose the \emph{Riesz variance estimator} using both within- and cross-cluster residual moments with the full projection geometry. Classical HC, CRVE, and \citet{cattaneo2018inference} all arise as restricted solutions differing in which residual moments they retain and how much projection geometry they incorporate. The Riesz estimator does not require invertibility of the Hadamard square or of $M_{gg}$, remaining valid precisely when \citet{anatolyev2026many} and classical bias-corrected estimators break down.

The variance-recovery system has dimension driven by cluster sizes, making direct inversion of the associated operator systems computationally infeasible at empirically relevant scales. We show that the problem admits a matrix-free formulation amenable to the LSQR algorithm of \citet{paige1982lsqr}, based on Golub--Kahan bidiagonalization \citep{golub1965calculating}, which requires only operator actions and scales to large datasets with dense projection geometry without forming or inverting large matrices.

The paper establishes consistency and asymptotic normality of the resulting $t$-statistics under heteroskedasticity, arbitrary within-cluster dependence, heterogeneous and possibly growing cluster sizes, and high-dimensional controls, assuming independent clusters. Classical HC, CRVE, and the many-controls correction of \citet{cattaneo2018inference} are shown to be restricted solutions within the Riesz framework, with their validity characterized by a projection sparsity condition. Simulations confirm that classical estimators severely undercover under projection spillovers, while the proposed estimator restores nominal coverage. An empirical application to colonial governor promotions \citep{xu2018costs} shows that correcting for projection spillovers reverses the significance of four out of five reported coefficients.

\paragraph{Contributions.}
\textit{First}, this paper provides a geometric diagnosis of why classical HC and CRVE estimators fail, developed fully in Section~\ref{sec:motivation}. The diagnosis reframes the failure identified above --- that $h_{ii}$ alone cannot detect the variance pattern across neighboring observations --- not as a finite-sample bias but as a structural identification problem rooted in the orthogonality of $\mathrm{col}(X)$ and $\mathrm{col}(X)^{\perp}$. This explains why many controls, shared latent factors, and dense mobility graphs all generate the same failure mode despite their superficial differences: each concentrates off-diagonal projection mass on pairs of observations whose error variances may differ, a condition no diagonal correction can detect. To our knowledge, this geometric characterization is new to the literature, which has previously focused on leverage bounds \citep{chesher1987bias} and asymptotic rates \citep{anatolyev2017asymptotics} rather than the identification structure of the problem.

\textit{Second}, this paper develops a unified projection-geometry framework for residual-based variance estimation and establishes a strict existence hierarchy within it. Rather than estimating individual error variances, we characterize $\mathrm{var}(\hat{\beta})$ directly as a Riesz functional of the full residual-moment system indexed by the OLS projection matrix, unifying classical HC, CRVE, and many-controls estimators as restricted solutions within a common operator system. This clarifies the geometric conditions under which classical restrictions are first-order valid and when they are not. The Riesz estimator exists under strictly weaker conditions than classical bias-corrected estimators, remaining valid when cluster-specific leverage matrices are singular --- precisely the settings where CR2, CR3, and \citet{anatolyev2026many} break down. This failure arises in the empirical application \citep{xu2018costs}, where some dyadic clusters are perfectly nested within the high-dimensional fixed effects, making CR2 and CR3 undefined.

\textit{Third}, the paper shows that the Riesz inverse problem admits a matrix-free formulation amenable to the LSQR algorithm of \citet{paige1982lsqr}, making the estimator computationally feasible in large-scale designs with dense projection geometry, requiring only operator actions rather than explicit matrix formation or inversion.

\paragraph{Organization.}
Section~\ref{sec:motivation} provides geometric intuition for projection spillovers and illustrates why classical variance estimators fail.
Section~\ref{sec:model} introduces the model and variance functional.
Section~\ref{sec:framework} develops the operator framework.
Section~\ref{sec:estimation} presents the LSQR estimator.
Section~\ref{sec:theory} establishes asymptotic results.
Section~\ref{sec:partial} links the partial system to classical HC and CRVE methods and characterizes the existence hierarchy.
Section~\ref{sec:simulation} presents simulation results.
Section~\ref{sec:application} presents the empirical application.

\section{Motivation} \label{sec:motivation}
This section develops geometric intuition for the two failures of classical variance estimators, which motivates this paper. For expositional clarity, we work in the scalar case $(N_g=1,\,G=n)$ and abstract from nuisance controls; Section~\ref{sec:model} formalizes the econometric model with both the clustered setting and high-dimensional controls.

OLS splits $\mathbb{R}^n$ into two orthogonal subspaces: the column space $\mathrm{col}(X)$ of dimension $p$, onto which $H=X(X'X)^{-1}X'$ projects, and its orthogonal complement $\mathrm{col}(X)^{\perp}$ of dimension $n-p$, onto which $M=I-H$ projects. Fitted values $\hat{y}=Hy$ live entirely in $\mathrm{col}(X)$; residuals $\hat{u}=Mu$ live entirely in $\mathrm{col}(X)^{\perp}$. The true errors $u\in\mathbb{R}^n$, however, are not confined to either subspace: their component $Hu$ lies in $\mathrm{col}(X)$, contributes to $\hat{\beta} - \beta =(X'X)^{-1} X'u$, and is annihilated by $M$ --- leaving no trace in $\hat{u}$ and making it unrecoverable from residuals alone. Only the component $(I-H) u = \hat{u}$ is observable. The sampling uncertainty in $\hat{\beta}$ therefore originates in $\mathrm{col}(X)$, while the only observable evidence lives in $\mathrm{col}(X)^{\perp}$: inference requires using the observed component to learn about the unobserved one, and the two subspaces are orthogonal.

Let $h_{ij} = [X(X'X)^{-1}X']_{ij}$ denote the $(i,j)$ entry of the hat matrix, and $M_{ij} = \mathbf{1}_{i=j} - h_{ij}$ the corresponding entry of the residual-maker $M$. The OLS residual satisfies $\hat{u} = Mu$, so for each observation $i$,
\begin{equation}
    \hat{u}_i \;=\; M_{ii} \, u_i \;+\; \sum_{j\neq i} M_{ij} \, u_j. \label{eq:residual-decomp}
\end{equation}
The structure of \eqref{eq:residual-decomp} reflects the two-space geometry directly. The diagonal term $M_{ii}=1-h_{ii}$ shrinks $u_i$: the larger $h_{ii}$, the more observation $i$ pulls the fit toward itself and the less of its own error survives in $\hat{u}_i$. The off-diagonal terms $M_{ij}=-h_{ij}$ import the errors of other observations: when $x_i$ and $x_j$ are close in $\mathrm{col}(X)$, the regression cannot cleanly separate their contributions, so $\hat{u}_i$ mechanically absorbs $-h_{ij}u_j$.

Under heteroskedasticity with $\mathrm{Var}(u_k)=\sigma_k^2$, the exact finite-sample moments are
\begin{align}
    \mathbb{E}[\hat{u}_i^2] 
        &= M_{ii}^2 \, \sigma_i^2 + \underbrace{\sum_{j\neq i} M_{ij}^2 \, \sigma_j^2}_{\text{projection spillover}}, \label{eq:diag-moment} \\
    \mathbb{E}[\hat{u}_i \hat{u}_j] 
        &= \sum_{k=1}^{n} M_{ik} M_{jk} \, \sigma_k^2, \qquad i\neq j. \label{eq:cross-moment}
\end{align}

\subsection{Failure 1: Inadequate Leverage Adjustment} \label{subsec:failure1}

The projection spillover in \eqref{eq:diag-moment} is not an incidental nuisance --- its magnitude is pinned down exactly by the geometry of $M$. Since $H$ is symmetric and idempotent, taking the $(i,i)$ entry of $[H^2]_{ii}=H_{ii}$ and splitting diagonal from off-diagonal yields
\begin{equation}
    \sum_{j\neq i} M_{ij}^2 \;=\; \|P_i\|^2 \cdot \|Q_i\|^2 \;=\; h_{ii} (1-h_{ii}), \label{eq:indiv-offdiag}
\end{equation}
where $P_i=He_i$ is the shadow of $e_i$ onto $\mathrm{col}(X)$ and $Q_i=Me_i$ is its residual direction. The off-diagonal mass $h_{ii}(1-h_{ii})$ measures how much of observation $i$'s leverage leaks to other observations. It vanishes at $h_{ii}\in\{0,1\}$: at $h_{ii}=1$, the observation pulls the fit completely toward itself, leaving nothing to share; at $h_{ii}=0$, there is no leverage to spread. The leakage is maximized at $h_{ii}=1/2$. Paradoxically, it is intermediate-leverage observations, not the most extreme ones, that generate the largest off-diagonal mass and therefore the largest potential for projection spillover.

The HC family corrects for the self-attenuation $M_{ii}^2=(1-h_{ii})^2$ in the first term of \eqref{eq:diag-moment}, but uses only $h_{ii}$ and therefore leaves the projection spillover in the second term unaddressed:
\begin{equation}
    \mathbb{E} \! \left[ \frac{\hat{u}_i^2}{(1-h_{ii})^2} \right] \;=\; \sigma_i^2 \;+\; \frac{\sum_{j\neq i} M_{ij}^2 \, \sigma_j^2}{(1-h_{ii})^2}. \label{eq:hc3-bias}
\end{equation}
The residual bias is the inner product between the projection weights $\{M_{ij}^2\}$ and the heteroskedasticity pattern $\{\sigma_j^2\}$. The diagonal $h_{ii}$ determines the total off-diagonal mass $h_{ii} (1-h_{ii}) = \sum_{j\neq i} M_{ij}^2$ but carries no information about which observations that mass attaches to or what variances they carry: the variance pattern across neighbors is a property of $\mathrm{col}(X)^{\perp}$, orthogonal to the leverage geometry encoded in $\mathrm{col}(X)$, and is therefore undetectable from $h_{ii}$ alone regardless of how the correction is designed.

Under homoskedasticity $\sigma_j^2=\sigma^2$, equation \eqref{eq:indiv-offdiag} shows the inner product collapses to $\sigma^2 h_{ii}(1-h_{ii})$, a function of $h_{ii}$ alone that is fully removable by a diagonal adjustment --- HC2 with weight $(1-h_{ii})^{-1}$ is exactly unbiased in this case. Under heteroskedasticity, the failure is structural rather than a matter of finite-sample precision, and is severe precisely when off-diagonal mass concentrates on neighbors with heterogeneous $\sigma_j^2$. Because proximity in regressor space and variance heterogeneity are determined by entirely separate mechanisms and can coincide arbitrarily, this case is empirically common. Three structural mechanisms generate first-order inner product $\sum_{j\neq i}M_{ij}^2\sigma_j^2$.

\paragraph{Many regressor directions.}
Summing \eqref{eq:indiv-offdiag} over all $i$ gives the aggregate off-diagonal mass
\begin{equation}
    \sum_{i=1}^{n} \sum_{i\neq j} M_{ij}^2 \;=\; p - \sum_i h_{ii}^2 \; \le p  \! \left(1-\frac{p}{n}\right), \label{eq:offdiag-mass}
\end{equation}
with equality under uniform leverage $h_{ii}=p/n$ for all $i$. When $p/n\to \alpha \in(0,1)$, the residual space $\mathrm{col}(X)^{\perp}$ has dimension $n-p$ --- a shrinking fraction of $\mathbb{R}^n$. The $n$ residual directions $\{Q_i\}$ are crowded into this thin space and cannot all be mutually orthogonal: by \eqref{eq:offdiag-mass}, the aggregate off-diagonal mass is $O(n)$ even under uniform leverage. With heteroskedasticity bounded away from homoskedasticity, the spillover in \eqref{eq:hc3-bias} then contributes at the same order as the signal $\sigma_i^2$ regardless of the design. This arises in regressions with many controls, saturated interactions, or many instruments \citep{cattaneo2018inference, anatolyev2017asymptotics, anatolyev2024off}.

\paragraph{Shared latent directions.}
When $p$ regressors share a common latent factor --- as in interactive fixed effects \citep{bai2009panel} or common correlated effects \citep{pesaran2006estimation} --- the factor structure makes $X'X$ nearly low-rank, so $(X'X)^{-1}$ has large eigenvalues in the near-collinear directions spanned by the factor. Since $h_{ij} = x_i' (X'X)^{-1} x_j$, this amplifies the inner product between any two observations loading on the shared factor, making $h_{ij}$ globally large regardless of raw Euclidean distance. Ignoring the latent structure and working with the full $p$-dimensional $X$ directly converts a benign $k$-dimensional signal --- where the effective projection rank is $k$ and off-diagonal mass is bounded by $k$ --- into a dangerous $p$-dimensional projection that inverts through near-zero eigenvalues, inflating both $h_{ij}$ and $\mathrm{Var} (\hat{\beta})$ simultaneously. The inner product $\sum_{j\neq i} M_{ij}^2 \sigma_j^2$ is then first-order when pairs with strong factor exposure also have heterogeneous error variances. This mechanism compounds with the many-regressors mechanism when $p/n \to \alpha$: the factor structure concentrates the $O(n)$ aggregate off-diagonal mass on pairs where the inner product is largest, producing bias larger than either mechanism in isolation.

\paragraph{Graph connectivity in non-nested fixed effects.}
In two-way or multi-way fixed effects designs, $[M_D]_{ij}\neq 0$ if and only if observations $i$ and $j$ belong to the same connected component of the mobility graph \citep{verdier2020estimation}, so worker mobility across firm boundaries generates pairwise off-diagonal mass that grows with graph density. As with the shared-directions mechanism, this generates bias only when connected observations have heterogeneous error variances. In dyadic designs --- such as the colonial governor application \citep{xu2018costs} in Section~\ref{sec:application} --- pairs sharing a network node differ systematically in their connection status and hence plausibly in their error variances, making the heteroskedasticity condition empirically relevant.

The formal conditions under which each mechanism makes the inner product $\sum_{j\neq i} M_{ij}^2 \sigma_j^2$ first-order for $\hat{V}$ are characterized in Section~\ref{sec:partial} via the projection-sparsity condition. Failure~1 is therefore neither about projection geometry alone nor heteroskedasticity alone, but their structural interaction: any leverage correction based solely on $h_{ii}$ is blind to it.

\subsection{Failure 2: Discarding Cross-Residual Moments} \label{subsec:failure2}

Besides the projection spillover in \eqref{eq:diag-moment}, classical estimators commit a second, logically independent failure: they discard the cross-product equations \eqref{eq:cross-moment} entirely. This section shows that discarding these equations is not merely inefficient --- it can be identifying, particularly in clustered designs.

The diagonal equations \eqref{eq:diag-moment} form a linear system in the unknown variances $\mathbf{s} = (\sigma_1^2, \ldots, \sigma_n^2)'$:
\begin{equation}
    A\mathbf{s} = \mathbf{q}, \qquad A_{ik} = M_{ik}^2, \quad q_i = \mathbb{E} [\hat{u}_i^2]. \label{eq:diag-system}
\end{equation}
The coefficient matrix $A$ has an algebraic limitation: its rows $\{M_{ik}^2\}_k$ are elementwise non-negative, so the system can form only non-negative linear combinations of $\{\sigma_k^2\}$. The cross-moment equations \eqref{eq:cross-moment} extend this system with rows $B_{(ij),k} = M_{ik} M_{jk}$, providing $\binom{n}{2}$ additional equations for the same $n$ unknowns. Unlike the rows of $A$, the rows of $B$ are \emph{signed}: $M_{ik}M_{jk}$ is positive when $i$ and $j$ have projection weights of the same sign onto direction $k$, and negative otherwise. This sign variation spans directions in $\mathbb{R}^n$ that the non-negative rows of $A$ cannot reach, which is the algebraic source of additional identifying power. Once $\mathbf{s}$ is recovered, $V$ is a known linear combination of its entries determined by how much each observation contributes to variation in the regressor of interest. The diagonal system is sufficient if and only if the target combination lies in the row space of $A$ and $A$ is well-conditioned in that direction. Three failure modes arise when this condition fails.

\paragraph{Failure mode 2a: Rank deficiency.}
The diagonal system loses rank whenever some row $A_{i\cdot}$ is a linear combination of other rows --- the squared projection weights of observation $i$ carry no information about $\{\sigma_k^2\}$ beyond what the other diagonal equations already contain. This occurs when rows $i$ and $j$ of $M$ are nearly proportional ($M_{i\cdot}\approx cM_{j\cdot}$), which happens precisely when $i$ and $j$ are close in $\mathrm{col}(X)$ --- the same condition that drives Failure~1. Certain contrasts in $\{\sigma_k^2\}$ entering $V$ then become invisible to all diagonal equations simultaneously, and no bias correction can recover them: the squared residuals $\{\hat{u}_i^2\}$ are insufficient for identifying $V$ regardless of how they are reweighted. Cross-moment equations involve $M_{ik}M_{jk}$ rather than $M_{ik}^2$, generating new signed linear combinations of $\{\sigma_k^2\}$ that can restore identification when the diagonal system fails. Geometrically, a third observation $l$ with $M_{l\cdot}$ not aligned with $M_{i\cdot}$ or $M_{j\cdot}$ acts as a probe: the cross-moment $\mathbb{E} [\hat{u}_i \hat{u}_l] - \mathbb{E} [\hat{u}_j  \hat{u}_l]$ amplifies the difference between rows $i$ and $j$ even when $\| M_{i\cdot} - M_{j\cdot} \|$ is small.

\paragraph{Failure mode 2b: Ill-conditioning.}
Even when $A$ has full rank, $\hat{\mathbf{s}}=A^{-1}\mathbf{q}$ may be imprecisely estimated when $A$ is ill-conditioned --- precisely the setting where observations are close in $\mathrm{col}(X)$ and $M_{ij}$ is large, as in Failure~1. The rows of the diagonal system generate quadratic forms $(M_{ik}^2)_k$ that are always non-negative and therefore span a restricted subset of all linear forms in $\{\sigma_k^2\}$. When observations concentrate on few directions in $\mathrm{col}(X)$, these squared forms become nearly collinear. Cross-moment rows $\{M_{ik}M_{jk}\}_k$ generate mixed quadratic forms with sign variation, providing the linearly independent equations that stabilize the system. Cross-moments are therefore most valuable in exactly the designs where Failure~1 is most severe: clustered regressors, shared factor loadings, and overlapping group memberships. The two failures compound rather than operate independently.

\paragraph{Failure mode 2c: Low power in the variance-relevant direction.}
Failure mode 2b concerns global conditioning of $A$; this is its directional sharpening. Even when $A$ is full rank and globally well-conditioned, the variance of $\hat{V} = \mathbf{r}' A^{-1} \hat{\mathbf{q}}$ is governed by how well $A$ is conditioned specifically in the direction of $\mathbf{r}$ --- a system can be stable on average while still weak in the one direction that determines $V$. When the regressor of interest varies primarily between clusters, $V$ loads on between-cluster contrasts in $\{\sigma_k^2\}$. The diagonal weights $M_{ik}^2$ are nearly identical for observations within the same cluster, making $A$ nearly block-constant within clusters and poorly conditioned in the between-cluster directions that $\mathbf{r}$ loads on. Cross-moment weights $M_{ik}M_{jk}$ for observations in \emph{different} clusters involve between-cluster projection weights directly, providing the signal the diagonal system lacks in precisely those directions.

Failure~2 is therefore not resolved by better bias corrections to $\hat{u}_i^2$, but requires cross-residual products $\hat{u}_i\hat{u}_j$ that carry identifying information through signed linear combinations of $\{\sigma_k^2\}$ the diagonal system cannot form. Both failures originate in the same geometry: $M$ projects $u$ onto $\mathrm{col}(X)^{\perp}$, permanently discarding the component $Hu$ that drives $\mathrm{Var}(\hat{\beta})$ and entangling the observable residuals through $h_{ij}$. Failure~1 is the consequence of this entanglement for individual $\hat{u}_i^2$; Failure~2 is the consequence of discarding the additional identifying information the entanglement encodes in $\hat{u}_i\hat{u}_j$. The Riesz framework in Section~\ref{sec:framework} addresses both simultaneously by treating the full residual moment array as a linear inverse problem rather than a collection of scalars to be reweighted.

\section{Econometric Model} \label{sec:model}
We consider a clustered linear regression model. Let clusters be indexed by $g = 1, \ldots, G$, each containing $N_g$ observations, with total sample size $n = \sum_{g=1}^G N_g$. For a generic variable $Z$, let $z_{gi}$ denote the $i$th observation in cluster $g$, $Z_g$ the stacked observations in cluster $g$, and $Z = (Z_1', \ldots, Z_G')'$ the full sample stacked across clusters. Following \citet{cattaneo2018inference}, the data satisfy
\begin{equation} \label{eq:model}
    Y_g = X_g \beta + W_g \gamma + U_g,
\end{equation}
where $Y_g \in \mathbb{R}^{N_g}$ is the outcome vector, $X_g \in \mathbb{R}^{N_g \times k}$ collects the regressors of primary interest, $W_g \in \mathbb{R}^{N_g \times \ell}$ denotes a (possibly high-dimensional) set of control variables, and $U_g \in \mathbb{R}^{N_g}$ represents unobserved disturbances that may exhibit arbitrary dependence within clusters. The parameter of interest is $\beta \in \mathbb{R}^k$, while $\gamma \in \mathbb{R}^\ell$ is treated as a nuisance parameter. The dimension $\ell$ may increase with the sample size $n$, accommodating high-dimensional control designs. We assume clusters are independent conditional on $(X,W)$.

This specification encompasses several well-known cases. When $\ell = 0$, it reduces to the classical low-dimensional regression model; when $\ell/n$ does not vanish, it corresponds to high-dimensional regression settings. The framework includes heteroskedasticity-robust inference as the case $G = n$ and $N_g = 1$, and standard cluster-robust inference when $G \to \infty$ with bounded cluster sizes $N_g$.

We handle the high-dimensional controls by partialling out $W$. Let $M_W := I_n - W(W'W)^{-1}W'$ denote the residual-maker projecting onto the orthogonal complement of $\operatorname{span}(W)$, and define $\widetilde X = M_W X$ and $\widetilde Y = M_W Y$. Applying ordinary least squares to the residualized model yields
\begin{equation} \label{eq:beta_hat}
    \widehat\beta = (X'M_WX)^{-1}(X'M_WY) = \beta + \big( \widetilde X'\widetilde X \big)^{-1}  \big( \widetilde X'U \big),
\end{equation}
and the residuals
\begin{equation} \label{eq:U_hat}
    \widehat U = M_W (Y - X \widehat\beta) = M U,
\end{equation}
where $M := M_{\widetilde X} M_W$ with $M_{\widetilde X} = I_n - \widetilde X(\widetilde X'\widetilde X)^{-1}\widetilde X'$ is the effective OLS residual-maker. All coupling across observations or clusters arises through the projection geometry of $M$.

To conduct inference, an appropriate normalization for $(\widehat\beta - \beta)$ is required, which generally depends on cluster sizes, within-cluster dependence, and regressor variation \citep{carter2017asymptotic, hansen2019asymptotic, djogbenou2019asymptotic}. Let $\mu_n$ denote the effective scaling of the estimator, with $\mu_n = n$ under heteroskedastic sampling and $\mu_n$ equal to $G$, $n$, or an intermediate rate under clustered dependence. To avoid estimating $\mu_n$ directly, we focus on the scalar $t$-statistic
\begin{equation} \label{eq:t-statistic}
    t = \frac{\sqrt{\mu_n}\, a'(\widehat\beta - \beta)}{\sqrt{\mu_n \widehat V}} = \widehat V^{-1/2} a'(\widehat\beta - \beta),
\end{equation}
where $a \in \mathbb{R}^k$ is a fixed unit vector and $\widehat V$ is a variance estimator satisfying
\begin{equation} \label{eq:v-target}
    \widehat V = V + o_p(\mu_n^{-1}),
\end{equation}
with
\begin{equation} \label{eq:var}
    V = a' \Gamma^{-1} \bigg( \frac{1}{n^2} \sum_{g=1}^{G} \widetilde X_g \Sigma_g \widetilde X_g' \bigg) \Gamma^{-1} a, \quad \Gamma = \frac{1}{n} \sum_{g=1}^{G} \widetilde X_g' \widetilde X_g, \quad \Sigma_g = \mathbb{E}[U_g U_g' \mid X,W].
\end{equation}
Under suitable regularity conditions, $t \xrightarrow{d} \mathcal{N}(0,1)$.

In practice, variance estimation is challenging because the covariance blocks $\{\Sigma_g\}$ are unknown and the errors $\{U_g\}$ are unobserved, so $V$ in \eqref{eq:var} cannot be computed directly. By \eqref{eq:U_hat},
    \[ \widehat U_g = [MU]_g = \sum_{h=1}^G M_{gh} U_h = M_{gg} U_g + \sum_{h \neq g} M_{gh} U_h, \]
where $M_{gh}$ denotes the $(g,h)$-th block of $M$ conformable with the cluster structure. Classical HC and CRVE correct only the diagonal blocks $\{M_{gg}\}$ and therefore implicitly treat $\widehat U_g \widehat U_g'$ as informative about $\Sigma_g$ alone. This is justified only when the off-block terms $\sum_{h\neq g} M_{gh}U_h$ are negligible, or when they are not but the clusters they draw on share the same covariance structure: in the latter case the contamination is a function of the projection geometry alone and can, in principle, be absorbed into a correction based on $\{ M_{gg} \}$. The difficulty is structural rather than a matter of finite-sample precision when both conditions fail simultaneously --- when $M_{gh}$ is non-negligible \emph{and} $\Sigma_g \neq \Sigma_h$ for the clusters $h$ on which $g$ draws --- so that $\widehat U_g \widehat U_g'$ becomes an entangled mixture of $\Sigma_g$ and the covariance blocks of other clusters $\{\Sigma_h\}_{h\neq g}$, in a way no correction based on $M_{gg}$ alone can disentangle.

Designs with non-negligible off-block mass $\{M_{gh} : g \neq h\}$ are not exotic; examples include the three mechanisms developed in Section~\ref{sec:motivation} --- many controls ($\ell/n \to \alpha \in (0,1)$), shared latent factor directions among the regressors, and dense connectivity in non-nested fixed-effects designs --- though these are illustrative rather than exhaustive, and other sources of non-local projection geometry (e.g., spatial or network-based control structures) can generate the same difficulty. None of these mechanisms guarantees first-order bias on its own: each makes non-negligible $M_{gh}$ possible, but whether the resulting off-block mass actually produces first-order bias for $V$ depends additionally on whether it aligns with heterogeneity in $\{ \Sigma_g \}$, a joint condition we make precise via the projection-sparsity criterion in Section~\ref{sec:partial}.

Rather than attempting to estimate the latent covariance blocks $\{\Sigma_g\}$ or to reconstruct the unobserved errors $\{U_g\}$, the approach developed in this paper targets the variance functional $V$ directly, exploiting the structure of the full residual moment system implied by the projection geometry.

For a scalar contrast, the variance admits the linear representation
\begin{equation} \label{eq:oracle-var}
    V = \frac{1}{n^2} \sum_{g=1}^G \mathrm{tr}\!\big(\Sigma_g \widetilde X_g Q \widetilde X_g'\big), \qquad Q := \Gamma^{-1} a a' \Gamma^{-1}.
\end{equation}
This representation rewrites $V$ as a quantity that is \emph{linear} in the unknown covariance array $\{\Sigma_g\}$, rather than as the quadratic form in \eqref{eq:var}. Once $\{\Sigma_g\}$ is equipped with a suitable inner product, linearity is precisely what permits $V$ to admit a Riesz representer, which we construct in Section~\ref{sec:framework}.

Although $\{\Sigma_g\}$ is unobserved, the feasible residual moments
   \[ \widehat C_{hh'} := \widehat U_h \widehat U_{h'}' \]
have population counterparts, for cluster indices $h,h' \in \{1,\ldots,G\}$,
    \[ C_{hh'} = \mathbb{E}[\widehat C_{hh'} \mid X,W] = \sum_{g=1}^G M_{hg}\Sigma_g M_{gh'}, \]
using cluster independence and symmetry of $M$. This linear system links the full residual moment array $\{C_{hh'}\}$ --- including both within- and cross-cluster moments --- to latent covariance blocks $\{\Sigma_g\}$ through the projection geometry of $M$, motivating the Riesz representation framework developed in the next section. In summary, this section has introduced three objects on which everything that follows depends: the projection geometry of $M$, governing how errors mix across observations and clusters; the unknown target $\{\Sigma_g\}$, which we never attempt to estimate directly; and the observable link $\{C_{hh'}\}$, the full residual moment array through which $\{\Sigma_g\}$ is related to $V$.

\section{The Riesz Representation of the Variance Functional} \label{sec:framework}
Our goal is to recover the scalar variance functional $V=\mathrm{Var}(a'\widehat\beta)$ directly, without estimating the latent covariance blocks $\{\Sigma_g\}_{g=1}^G$. The key observation is that $V$ is a linear functional of $\{\Sigma_g\}$ and therefore admits a Riesz representer under a suitable inner product. We then exploit the linear operator linking $\{\Sigma_g\}$ to the residual moment array $\{C_{hh'}\}$ to obtain a feasible representation and develop a corresponding estimator via observable sample residual moments $\{\widehat C_{hh'}\}$.

Throughout this section, the sample size $n$ and cluster structure $\{N_g\}_{g=1}^G$ are fixed, and we suppress the subscript $n$ on the spaces $\mathcal H$ and $\mathcal K$, and the objects built from them, restoring it from Section~\ref{sec:theory} onward when asymptotic behavior as $n\to\infty$ is the object of interest.

\subsection{Latent covariance space and variance functional}

For a given sample of size $n$ with known cluster structure $\{N_g\}_{g=1}^G$, define the finite-dimensional Hilbert space
    \[ \mathcal H := \bigoplus_{g=1}^G \mathbb R^{N_g \times N_g}, \qquad \langle A,B \rangle_{\mathcal H} := \frac{1}{n^2} \sum_{g=1}^G \mathrm{tr}(A_g'B_g), \qquad \|A\|_{\mathcal H}^2 := \langle A,A \rangle_{\ mathcal H}. \]
The covariance array $\Sigma := \{\Sigma_g\}_{g=1}^G$ is an element of $\mathcal H$.

\begin{lemma}[Riesz applicability] \label{lem:riesz}
Under Assumption~\ref{asmp:A4} (well-conditioned design), the variance functional $\Phi:\mathcal H\to\mathbb R$ defined by 
    \[ \Phi(\Sigma) := V = \frac{1}{n^2} \sum_{g=1}^{G} \mathrm{tr} \! \big( \Sigma_g \, \widetilde X_g Q \widetilde X_g' \big) \]
is linear and bounded on $(\mathcal H, \langle \cdot, \cdot \rangle_{\mathcal H})$. Hence, by the Riesz representation theorem, there exists a unique element $R^\star = \{R_g^\star\} \in \mathcal H$ such that
    \[ \Phi(\Sigma) = \langle \Sigma, R^\star \rangle_{\mathcal H} \quad \text{for all } \Sigma \in \mathcal H. \]
Moreover, $R_g^\star = \widetilde X_g Q \widetilde X_g'$ for all $g$.
\end{lemma}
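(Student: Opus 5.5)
The argument is a direct finite-dimensional verification, and I will carry it out in four steps.

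First I check linearity. For fixed design $(X,W)$, the matrices $\widetilde X_g$ and $Q=\Gamma^{-1}aa'\Gamma^{-1}$ do not depend on $\Sigma$, and the trace is linear. Hence $\Phi(\alpha A+\beta B)=\alpha\Phi(A)+\beta\Phi(B)$ for all $A,B\in\mathcal H$. The only thing needed for $\Phi$ to be well defined is that $\Gamma$ be invertible, and this is exactly what Assumption~\ref{asmp:A4} (well-conditioned design) supplies: it gives a lower bound on the smallest eigenvalue of $\Gamma=\frac1n\widetilde X'\widetilde X$. This is the one place the assumption enters.

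Next I identify the representer. Set $R^\star_g:=\widetilde X_g Q\widetilde X_g'$. This matrix is symmetric, since $Q$ is symmetric. Using $\mathrm{tr}(\Sigma_g R_g)=\mathrm{tr}(R_g'\Sigma_g)$, I get
\[ \langle \Sigma,R^\star\rangle_{\mathcal H}=\frac1{n^2}\sum_g\mathrm{tr}\big((R^\star_g)'\Sigma_g\big)=\frac1{n^2}\sum_g\mathrm{tr}\big(\Sigma_g\widetilde X_gQ\widetilde X_g'\big)=\Phi(\Sigma). \]
This equality holds for every $\Sigma\in\mathcal H$, not only for symmetric covariance arrays.

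Boundedness then follows from Cauchy--Schwarz in $\mathcal H$: $|\Phi(\Sigma)|\le\|R^\star\|_{\mathcal H}\|\Sigma\|_{\mathcal H}$. It remains to show $\|R^\star\|_{\mathcal H}<\infty$. Automatically this holds because $\mathcal H$ is finite-dimensional, so every linear functional is bounded. For a quantitative constant I would bound
\[ \|R^\star\|_{\mathcal H}^2=\frac1{n^2}\sum_g\|\widetilde X_gQ\widetilde X_g'\|_F^2 \le \frac1{n^2}\Big(\sum_g\mathrm{tr}(\widetilde X_gQ\widetilde X_g')\Big)^2 . \]
The inequality uses that each summand is positive semidefinite, so its Frobenius norm is at most its trace. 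Evaluating the trace,
\[ \frac1{n}\sum_g\mathrm{tr}(\widetilde X_gQ\widetilde X_g')=\mathrm{tr}(Q\Gamma)=a'\Gamma^{-1}a\le\lambda_{\min}(\Gamma)^{-1}, \]
which Assumption~\ref{asmp:A4} keeps bounded uniformly in $n$. This gives the operator norm bound $\|\Phi\|\le\lambda_{\min}(\Gamma)^{-1}$.

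Uniqueness follows from the Riesz representation theorem, or directly. If two elements $R,R'$ both represent $\Phi$, then $\langle\Sigma,R-R'\rangle_{\mathcal H}=0$ for all $\Sigma$. Taking $\Sigma=R-R'$ gives $R=R'$.

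I do not expect a real obstacle. The only subtle point is the domain: $\mathcal H$ contains all, possibly non-symmetric, blocks. Uniqueness is therefore over the full space, and the representer must be checked to be the symmetric element $\widetilde X_gQ\widetilde X_g'$, which it is. The second place that needs care is using Assumption~\ref{asmp:A4} to make $\Gamma^{-1}$, and hence the uniform norm bound, meaningful.
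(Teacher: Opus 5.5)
Your proposal is correct and follows essentially the same route as the paper. Both arguments use linearity of the trace, rewrite $\Phi(\Sigma)$ as $\langle \Sigma, \{\widetilde X_g Q\widetilde X_g'\}\rangle_{\mathcal H}$, get boundedness from Cauchy--Schwarz plus finite-dimensionality, and get uniqueness from Riesz. Your use of the symmetry of $\widetilde X_g Q\widetilde X_g'$ to cover non-symmetric blocks, and your explicit bound $\|\Phi\|\le a'\Gamma^{-1}a\le\lambda_{\min}(\Gamma)^{-1}$, are valid refinements the paper omits; note only that Assumption~\ref{asmp:A4} makes $\lambda_{\min}(\Gamma_n)^{-1}$ bounded in probability rather than uniformly bounded.
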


Lemma~\ref{lem:riesz} expresses the target variance as an inner product between the latent covariance array and a known representer $R^\star$. To obtain a feasible representation, we next link $\Sigma$ to observable residual moments via the projection geometry.

\subsection{Residual moment space and the projection operator}

Let $M$ denote the effective OLS residual-maker from Section~\ref{sec:model}, so that $\widehat U=MU$. Define the space of the full residual moment array
    \[ \mathcal K := \bigoplus_{h,h'=1}^G \mathbb R^{N_h\times N_{h'}}, \qquad \langle C,D \rangle_{\mathcal K} := \frac{1}{n^2} \sum_{h,h'=1}^G \mathrm{tr}(C_{hh'}'D_{hh'}), \qquad \|C\|_{\mathcal K}^2 := \langle C,C \rangle_{\mathcal K}. \]
Let $C := \{C_{hh'}\} \in \mathcal K$ denote the population residual moments 
    \[ C_{hh'} := \mathbb E[\widehat U_h\widehat U_{h'}'\mid X,W]. \]
Under cluster independence, these moments satisfy the linear system
    \[ C_{hh'} = \sum_{g=1}^G M_{hg} \Sigma_g M_{gh'}. \] 
This motivates the linear operator $\mathcal A:\mathcal H\to\mathcal K$ defined by
\begin{equation} \label{eq:A-def}
    [\mathcal A(\Sigma)]_{hh'} := \sum_{g=1}^G M_{hg}\Sigma_g M_{gh'}.
\end{equation}
Thus $C = \mathcal A(\Sigma)$.

\begin{lemma}[Adjoint operator] \label{lem:adjoint}
The operator $\mathcal A$ admits a unique Hilbert-space adjoint $\mathcal A^\ast:\mathcal K \to \mathcal H$ satisfying\footnote{Formally, for a bounded linear operator $\mathcal A:\mathcal H\to\mathcal K$ between Banach spaces, the adjoint $\mathcal A'$ is defined on the dual spaces
$\mathcal K^*\to\mathcal H^*$. Because $\mathcal H$ and $\mathcal K$ are finite-dimensional Hilbert spaces, the Riesz representation theorem provides canonical isometric identifications $\mathcal H\simeq\mathcal H^*$ and $\mathcal K\simeq\mathcal K^*$. Throughout, we implicitly use these identifications and work with the induced Hilbert-space adjoint $\mathcal A^\ast:\mathcal K\to\mathcal H$.}
    \[ \langle \mathcal A(\Sigma),D\rangle_{\mathcal K} = \langle \Sigma,\mathcal A^\ast(D) \rangle_{\mathcal H} \quad \text{for all } \Sigma \in \mathcal H,\; D \in \mathcal K. \]
Moreover, $\mathcal A^\ast$ has the explicit block form
\begin{equation}\label{eq:Astar-def}
    [\mathcal A^\ast(D)]_g = \sum_{h,h'=1}^G M_{gh} \, D_{hh'} \, M_{h'g}.
\end{equation}
\end{lemma}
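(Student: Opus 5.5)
Existence and uniqueness of $\mathcal A^\ast$ come from finite-dimensional linear algebra. $\mathcal H$ and $\mathcal K$ are finite-dimensional real inner-product spaces, so $\mathcal A$ is automatically bounded. For fixed $D\in\mathcal K$, the map $\Sigma\mapsto\langle\mathcal A(\Sigma),D\rangle_{\mathcal K}$ is a linear functional on $\mathcal H$. Riesz representation, applied as in Lemma~\ref{lem:riesz}, then gives a unique element $\mathcal A^\ast(D)\in\mathcal H$ with $\langle \mathcal A(\Sigma),D\rangle_{\mathcal K}=\langle\Sigma,\mathcal A^\ast(D)\rangle_{\mathcal H}$ for all $\Sigma$. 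Linearity of $D\mapsto\mathcal A^\ast(D)$ follows from uniqueness. This part needs no assumption beyond the definitions.

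To get the explicit form \eqref{eq:Astar-def}, I will not solve for the representer. Instead, I will check directly that the proposed formula satisfies the defining identity; uniqueness then identifies it with $\mathcal A^\ast$. Expand the left side:
\[
\langle\mathcal A(\Sigma),D\rangle_{\mathcal K}=\frac{1}{n^2}\sum_{h,h'}\mathrm{tr}\Big(\big(\textstyle\sum_g M_{hg}\Sigma_gM_{gh'}\big)'D_{hh'}\Big)
=\frac{1}{n^2}\sum_{g}\sum_{h,h'}\mathrm{tr}\big(M_{gh'}'\Sigma_g'M_{hg}'D_{hh'}\big).
\]
Next, rewrite the blocks using symmetry of $M$. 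Since $M=M_{\widetilde X}M_W$ is a product of commuting symmetric projections, it is itself the symmetric projection onto $\mathrm{span}(X,W)^\perp$. Hence $M_{hg}'=M_{gh}$ and $M_{gh'}'=M_{h'g}$. Cyclicity of the trace then gives
\[
\mathrm{tr}\big(M_{h'g}\Sigma_g'M_{gh}D_{hh'}\big)=\mathrm{tr}\big(\Sigma_g'\,M_{gh}D_{hh'}M_{h'g}\big).
\]
Summing over $h,h'$ inside the trace, and over $g$ with the $1/n^2$ weight, yields $\langle\Sigma,\mathcal A^\ast(D)\rangle_{\mathcal H}$ with $[\mathcal A^\ast(D)]_g=\sum_{h,h'}M_{gh}D_{hh'}M_{h'g}$. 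This block is $N_g\times N_g$, so it lies in $\mathcal H$.

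The only delicate step is justifying the symmetry of $M$, which is needed for the block transposes $M_{hg}'=M_{gh}$. I will show that $M_W$ and $M_{\widetilde X}$ are symmetric idempotents with $M_{\widetilde X}M_W=M_WM_{\widetilde X}$. Commutation holds because $\widetilde X=M_WX$ lies in $\mathrm{span}(W)^\perp$, so $M_W\widetilde X=\widetilde X$. The same fact was already used implicitly to write $C_{hh'}=\sum_g M_{hg}\Sigma_gM_{gh'}$ in Section~\ref{sec:model}. The remaining work is careful index and transpose bookkeeping, with no use of the normalization $1/n^2$, because it appears identically in both inner products.
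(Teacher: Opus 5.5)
Your proposal is correct and follows essentially the same route as the paper's proof. Like the paper, you expand $\langle \mathcal A(\Sigma),D\rangle_{\mathcal K}$, use symmetry of $M$ to transpose the blocks, apply trace cyclicity to read off $[\mathcal A^\ast(D)]_g=\sum_{h,h'}M_{gh}D_{hh'}M_{h'g}$, and get uniqueness from finite dimensionality. Your bookkeeping is slightly tighter in two places. You keep $\Sigma_g'$ rather than invoking symmetry of $\Sigma_g$, so the identity is checked on all of $\mathcal H$ and not only its symmetric elements. You also justify symmetry of $M$ via $M_W\widetilde X=\widetilde X$, which the paper only asserts here and proves via Frisch--Waugh--Lovell in Lemma~\ref{pre:A-operator-bdd}.
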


Lemma~\ref{lem:adjoint} allows us to transfer the Riesz representer from the latent-covariance space $\mathcal H$ to the residual-moment space $\mathcal K$.

\subsection{Residual-moment representer and the Riesz variance estimator}

We seek weights $D=\{D_{hh'}\}\in\mathcal K$ such that the residual-moment inner product  $\langle C,D \rangle_{\mathcal K}$ reproduces the target variance functional $\Phi(\Sigma)$ for all admissible covariance arrays $\Sigma$ satisfying  $C = \mathcal A(\Sigma)$. By Lemmas~\ref{lem:riesz}--\ref{lem:adjoint},
    \[ \langle C,D \rangle_{\mathcal K} = \langle \mathcal A(\Sigma),D \rangle_{\mathcal K} = \langle \Sigma, \mathcal A^\ast(D) \rangle_{\mathcal H}. \]
Hence, any $D\in\mathcal K$ satisfying
    \begin{equation}\label{eq:adjoint-system}
        \mathcal A^\ast(D) = R^\star = \{\widetilde X_g Q\widetilde X_g'\}_{g=1}^G
    \end{equation}
ensures
    \[ \Phi(\Sigma) = \langle \Sigma,R^\star\rangle_{\mathcal H} = \langle C,D\rangle_{\mathcal K}. \]

Equation~\eqref{eq:adjoint-system} is the central identification equation of the paper: it characterizes all residual-moment weighting schemes that exactly recover the target variance functional. Classical HC and CRVE estimators correspond to particular restricted solutions to this equation (see Section~\ref{subsec:HC-CRVE}).

Because the residual-moment space $\mathcal K$ collects all within- and cross-cluster residual moments, it has higher dimension than the latent covariance space $\mathcal H$, the operator $\mathcal A: \mathcal H \to \mathcal K$ is generically non-square and tall. Its adjoint $\mathcal A^\ast: \mathcal K \to \mathcal H$ is therefore fat, so the adjoint equation \eqref{eq:adjoint-system} need not admit a unique solution. Moreover, the projection geometry encoded in $\{M_{gh}\}$ may induce linear dependencies among the residual moment equations, so $\mathcal A$ need not have full column rank.
If $\ker(\mathcal A^\ast)\neq\{0\}$, the solution set of \eqref{eq:adjoint-system} is an affine subspace: for any $Z\in\ker(\mathcal A^\ast)$, we have $\langle C,Z\rangle_{\mathcal K}=0$ for all $C\in\mathrm{range}(\mathcal A)$, so that $D+Z$ induces the same functional on $\mathrm{range}(\mathcal A)$.

The variance recovery problem is therefore a linear inverse problem: observable residual moments overidentify the latent covariance blocks, so the adjoint equation admits multiple solutions unless a normalization is imposed. We select the representer of smallest Frobenius norm.

\begin{definition}[Minimum-Frobenius-norm representer] \label{def:minnorm}
Let $\|\cdot\|_F$ denote the Frobenius norm induced by  $\langle \cdot, \cdot \rangle_{\mathcal K}$. Define
\begin{equation} \label{eq:min-norm}
     D^\star := \arg\min_{D\in\mathcal K} \Big\{\|D\|_F:\ \mathcal A^\ast(D)=R^\star\Big\}.
\end{equation}
Equivalently, $D^\star=(\mathcal A^\ast)^+R^\star$, where $(\mathcal A^\ast)^+$ denotes the Moore--Penrose pseudoinverse under the chosen inner products \citep{golub1965calculating}.
\end{definition}

The minimum-norm choice selects the unique representer orthogonal to $\ker(\mathcal A^\ast)$, eliminating redundant residual-moment weightings that do not affect the variance functional.

\begin{definition}[Riesz variance estimator] \label{def:riesz-est}
Let $D^\star$ denote the minimum-norm representer defined in \eqref{eq:min-norm}. Given sample residual moments  $\widehat C_{hh'}=\widehat U_h\widehat U_{h'}'$, define the Riesz variance estimator
\begin{equation}\label{eq:Vhat-Riesz}
    \boxed{ \widehat V_{\mathrm{Riesz}} := \langle \widehat C, D^\star \rangle_{\mathcal K} = \frac{1}{n^2} \sum_{h,h'=1}^G \widehat U_h' D_{hh'}^\star \widehat U_{h'}. }
\end{equation}
\end{definition}

The characterization above reduces variance recovery to solving the linear system $\mathcal A^\ast(D)=R^\star$, a potentially least-square inverse problem. Because $\mathcal K$ can be high-dimensional and the operator is available only through evaluations of $(\mathcal A,\mathcal A^\ast)$, we compute the canonical minimum-norm solution using an iterative matrix-free Krylov method. Numerical implementation is discussed in Section~\ref{sec:estimation}.

\begin{remark}[Finite-sample Hilbert spaces and asymptotic sequences] 
All objects in this section are defined at each fixed $n$ in finite-dimensional Hilbert spaces, so existence and uniqueness of the Riesz representer are immediate. Asymptotic results in Section~\ref{sec:theory} analyze the sequence $\{(\mathcal H_n,\mathcal K_n,\mathcal A_n,R_n^\star,D_n^\star)\}_{n\ge1}$ under uniform regularity conditions. No infinite-dimensional embedding is required.
\end{remark}

\section{Feasible Estimation and Numerical Implementation} \label{sec:estimation}
This section describes how to compute the minimum-norm residual-moment representer $D^\star$ defined in \eqref{eq:min-norm}, and hence the feasible variance estimator \eqref{eq:Vhat-Riesz}. In principle, this linear inverse system could be solved via direct pseudoinversion (e.g.\ via QR or SVD). However, such approaches would require explicit construction of large projection-induced operators, thus computationally infeasible at the cluster scales considered here. Instead, we exploit the operator structure of the problem and compute the canonical minimum-norm solution using a matrix-free Krylov method.

Specifically, we employ the LSQR algorithm of \citet{paige1982lsqr}. LSQR is mathematically equivalent, in exact arithmetic, to applying conjugate gradients to the normal equation \eqref{eq:normal-residmom}, but it never forms the composite operator $\mathcal A\mathcal A^\ast$ explicitly and offers improved numerical stability in the ill-conditioned regimes we consider. LSQR computes the minimum-norm solution using only repeated evaluations of the operators $\mathcal A$ and $\mathcal A^\ast$, without explicitly forming the associated normal operators. This matrix-free structure makes LSQR particularly well suited to our setting, where the operators are large and potentially ill-conditioned.
\subsection{Normal equations and Krylov geometry}

Recall from Section~\ref{sec:framework} that the canonical residual-moment representer $D^\star \in \mathcal K$ is the minimum-norm solution to the adjoint equation $\mathcal A^\ast(D) = R^\star$. Equivalently, $D^\star$ solves the least-squares problem
\begin{equation} \label{eq:ls-prob}
    D^\star = \arg \min_{D\in\mathcal K} \|\mathcal{A}^\ast D - R^\star \|_{\mathcal H}, \qquad \text{with } \|D^\star\|_F \text{ minimal among minimizers.}
\end{equation}
The associated normal equation expressed in the residual-moment space is
\begin{equation}\label{eq:normal-residmom}
    (\mathcal A\mathcal A^\ast)\,D^\star = \mathcal A R^\star, \qquad D^\star\in\mathrm{range}(\mathcal A).
\end{equation}

Because $\mathcal A\mathcal A^\ast$ is self-adjoint and positive semidefinite, and positive definite on $\mathrm{range}(\mathcal A)$, the normal equation \eqref{eq:normal-residmom} has a unique solution on $\mathrm{range}(\mathcal A)$, which coincides with the Moore--Penrose solution $D^\star=(\mathcal A^\ast)^+R^\star$.

\begin{lemma}[Positive semidefiniteness of $\mathcal A\mathcal A^\ast$] \label{lem:AAstar-psd}
Let $\mathcal A:\mathcal H\to\mathcal K$ be a bounded linear operator with adjoint $\mathcal A^\ast$. Then:
\begin{enumerate}[(i)]
    \item $\mathcal A\mathcal A^\ast$ is self-adjoint and positive semidefinite, since for any $D\in\mathcal K$,
        \[ \langle D,\mathcal A\mathcal A^\ast D\rangle_{\mathcal K} = \|\mathcal A^\ast D\|_{\mathcal H}^2 \ge 0. \]
    \item $\mathcal A\mathcal A^\ast$ is positive definite on $\mathrm{range}(\mathcal A)$: if $D\in \mathrm{range}(\mathcal A)$ and $\langle D,\mathcal A\mathcal A^\ast D\rangle_{\mathcal K}=0$, then $\mathcal A^\ast D=0$ and hence $D=0$.
\end{enumerate}
\end{lemma}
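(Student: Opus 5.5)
The plan is to use only the defining identity of the Hilbert-space adjoint from Lemma~\ref{lem:adjoint}, $\langle \mathcal A\Sigma,D\rangle_{\mathcal K}=\langle \Sigma,\mathcal A^\ast D\rangle_{\mathcal H}$, together with the finite-dimensionality of $\mathcal H$ and $\mathcal K$. No specific features of $M$ or the block structure are needed. The argument is generic for any bounded linear operator between finite-dimensional inner-product spaces.

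For part (i), I first establish self-adjointness. For $D,E\in\mathcal K$, apply the adjoint identity twice, together with symmetry of the real inner products:
\[
\langle D,\mathcal A\mathcal A^\ast E\rangle_{\mathcal K}=\langle \mathcal A^\ast D,\mathcal A^\ast E\rangle_{\mathcal H}=\langle \mathcal A\mathcal A^\ast D,E\rangle_{\mathcal K}.
\]
Hence $(\mathcal A\mathcal A^\ast)^\ast=\mathcal A\mathcal A^\ast$. Taking $E=D$ in the middle expression gives $\langle D,\mathcal A\mathcal A^\ast D\rangle_{\mathcal K}=\|\mathcal A^\ast D\|_{\mathcal H}^2\ge 0$, which is positive semidefiniteness.

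For part (ii), suppose $D\in\mathrm{range}(\mathcal A)$ and $\langle D,\mathcal A\mathcal A^\ast D\rangle_{\mathcal K}=0$. By part (i), $\|\mathcal A^\ast D\|_{\mathcal H}=0$, so $D\in\ker(\mathcal A^\ast)$.

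The key step is to show $\ker(\mathcal A^\ast)\perp\mathrm{range}(\mathcal A)$. Let $Z\in\ker(\mathcal A^\ast)$ and $\Sigma\in\mathcal H$. Then $\langle \mathcal A\Sigma,Z\rangle_{\mathcal K}=\langle \Sigma,\mathcal A^\ast Z\rangle_{\mathcal H}=0$. Now write $D=\mathcal A\Sigma_0$ and apply this with $Z=D$ and $\Sigma=\Sigma_0$:
\[
\|D\|_{\mathcal K}^2=\langle \mathcal A\Sigma_0,D\rangle_{\mathcal K}=\langle \Sigma_0,\mathcal A^\ast D\rangle_{\mathcal H}=0.
\]
Therefore $D=0$.

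Finally, I will state the consequence used in the text. Since $\mathcal A\mathcal A^\ast$ is self-adjoint and maps $\mathcal K$ into $\mathrm{range}(\mathcal A)$, it restricts to an operator on the finite-dimensional space $\mathrm{range}(\mathcal A)$. That restriction is injective and therefore invertible. It follows that the normal equation \eqref{eq:normal-residmom}, whose right-hand side $\mathcal AR^\star$ lies in $\mathrm{range}(\mathcal A)$, has a unique solution there.

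There is no genuine obstacle in this argument. The only step requiring care is the orthogonality $\ker(\mathcal A^\ast)\perp\mathrm{range}(\mathcal A)$. It must be derived from the adjoint identity, not assumed, and it should be checked that the weighted inner products (with the $1/n^2$ factors) are the same ones used in Lemma~\ref{lem:adjoint}.
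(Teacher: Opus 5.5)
Your proposal is correct and follows essentially the same route as the paper: the quadratic form $\langle D,\mathcal A\mathcal A^\ast D\rangle_{\mathcal K}=\|\mathcal A^\ast D\|_{\mathcal H}^2$ for (i), and $\ker(\mathcal A^\ast)\perp\mathrm{range}(\mathcal A)$ for (ii). The only difference is cosmetic. You verify self-adjointness and the orthogonality directly from the adjoint identity, whereas the paper cites the involution $(\mathcal A^\ast)^\ast=\mathcal A$ and the standard fact $\ker(\mathcal A^\ast)=\mathrm{range}(\mathcal A)^\perp$, and its closing remark concerns invariance of $\mathrm{range}(\mathcal A)$ under Krylov iterates rather than unique solvability of the normal equation.
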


Direct solution of \eqref{eq:normal-residmom} is computationally infeasible because the composite operator $\mathcal A\mathcal A^\ast$ is large and may be ill-conditioned. A key structural feature of our framework is that the operators $\mathcal A$ and $\mathcal A^\ast$ can be evaluated without explicitly forming large matrices. Their actions admit the blockwise expressions
\begin{equation}\label{eq:actions}
\begin{aligned}
    [\mathcal A(H)]_{hh'} &= \sum_{g=1}^G M_{hg}H_gM_{gh'},  && H\in\mathcal H,\\[0.3em]
    [\mathcal A^\ast(D)]_g  &= \sum_{h,h'=1}^G M_{gh}D_{hh'}M_{h'g},  && D\in\mathcal K.
\end{aligned}
\end{equation}
Thus all computations reduce to repeated multiplications involving the projection blocks $\{ M_{gh} \}$. This matrix-free structure naturally motivates Krylov subspace methods, which approximate the solution by exploring subspaces generated by successive applications of the normal operator $\mathcal T=\mathcal A\mathcal A^\ast$.

Since the solution $D^\star$ lies in the residual-moment space $\mathcal K$, while the least-squares residual $r(D)=R^\star-\mathcal A^\ast D$ lies in the covariance space $\mathcal H$, the geometry of the problem naturally involves both spaces. In particular, for the operator $\mathcal T=\mathcal A\mathcal A^\ast$ and the initial iterate $D^{(0)}=0$, the residual of the normal equation equals $r_0=\mathcal A R^\star$. The associated residual-moment Krylov subspace is
    \[ \mathscr K_t(\mathcal T,r_0) = \mathrm{span}\{r_0,\mathcal T r_0,\ldots,\mathcal T^{t-1}r_0\}. \]
Since the normal equation~\eqref{eq:normal-residmom} also admits a dual formulation in the covariance space,
    \[ (\mathcal A^\ast\mathcal A)H^\star = R^\star, \qquad D^\star=\mathcal A(H^\star), \]
the associated Krylov directions lie in
    \[ \mathscr K_t(\mathcal A^\ast\mathcal A,R^\star). \]

Thus the operators $\mathcal A\mathcal A^\ast$ and $\mathcal A^\ast\mathcal A$ govern the geometry of the inverse problem in the residual-moment and covariance spaces, respectively. This dual geometry is exactly what LSQR exploits and a naive application of conjugate gradients to a single normal equation would not: rather than fixing one space and forming $\mathcal T$ or its dual explicitly, the algorithm alternates between the two spaces at every iteration, applying $\mathcal A$ and $\mathcal A^\ast$ in turn. It is this alternation --- detailed next --- that allows the composite operators $\mathcal A\mathcal A^\ast$ and $\mathcal A^\ast \mathcal A$ to govern the convergence theory without ever being formed.

\subsection{LSQR algorithm and numerical implementation}

Based on the Krylov geometry described above, we compute the representer $D^\star$ using the classical LSQR algorithm of \citet{paige1982lsqr}. LSQR is designed for large-scale least-squares problems and is mathematically equivalent (in exact arithmetic) to applying conjugate gradients to the normal equations, while avoiding explicit formation of the composite operator $\mathcal A\mathcal A^\ast$ and providing improved numerical stability.

LSQR applies Golub--Kahan bidiagonalization to the operator pair $(\mathcal A,\mathcal A^\ast)$, generating orthonormal bases in the covariance space $\mathcal H$ and residual-moment space $\mathcal K$. Starting from the normalized vector
    \[ u_1 = \frac{R^\star}{\|R^\star\|_{\mathcal H}}, \]
the algorithm alternates between applications of $\mathcal A$ and $\mathcal A^\ast$ to construct sequences of orthonormal vectors $\{u_t\}_{t\ge1}\subset\mathcal H$ and $\{v_t\}_{t\ge1}\subset\mathcal K$. These vectors respectively span the Krylov subspaces
    \[ \mathscr K_t(\mathcal A^\ast\mathcal A, R^\star) \qquad \text{and}\qquad \mathscr K_t (\mathcal A\mathcal A^\ast, \mathcal A R^\star). \]
    
At iteration $t$, the LSQR iterate satisfies
    \[ D_t \in \mathscr K_t(\mathcal A\mathcal A^\ast, \mathcal A R^\star), \]
so that the approximate representer is expressed as a linear combination of the residual-moment Krylov directions
    \[ D_t = \sum_{j=1}^t y_{t,j}\, v_j . \]
The coefficients $y_t$ are obtained by solving a small least-squares problem associated with the bidiagonal matrix produced by the Golub–Kahan bidiagonalization.

Standard Krylov theory implies that the error decreases geometrically with the effective condition number $\kappa(\mathcal T)=\lambda_{\max}(\mathcal T)/\lambda_{\min}^+(\mathcal T)$ (where $\lambda_{\min}^+$ denotes the smallest \emph{nonzero} eigenvalue, since $\mathcal T$ is only positive semidefinite on all of $\mathcal K$ by Lemma~\ref{lem:AAstar-psd}) on $\mathrm{range}(\mathcal A)$. In particular,
    \[ \|D_t - D^\star\|_{\mathcal T} \le 2 \Big( \frac{\sqrt{\kappa(\mathcal T)}-1} {\sqrt{\kappa(\mathcal T)}+1} \Big)^t \|D_0 - D^\star\|_{\mathcal T}, \]
where $\|x\|_{\mathcal T}^2 = \langle x,\mathcal T x\rangle_{\mathcal K} = \| \mathcal{A}^\ast x \|^2_{\mathcal H}$. Since $D_0=0$ and $\mathcal A^\ast D^\star = R^\star$, we have $\|D_0-D^\star\|_{\mathcal T}=\|R^\star\|_{\mathcal H}$. Therefore, we terminate the iteration when the relative residual satisfies
    \[ \frac{\|\mathcal A^\ast D_t - R^\star \|_{\mathcal H}} {\|R^\star\|_{\mathcal H}} \le \tau_n. \]
Under the conditions in Section~\ref{sec:theory}, choosing $\tau_n$ to decrease sufficiently fast ensures that the numerical error introduced by the iterative solution is asymptotically negligible.

\begin{lemma}[LSQR residual--error relationship] \label{lem:lsqr-residual-error}
Let $\mathcal T := \mathcal A\mathcal A^\ast$. There exists a constant $C>0$, depending only on spectral bounds of $\mathcal T$ on $\mathrm{range}(\mathcal A)$, such that for every iteration $t$,
\begin{equation} \label{eq:lsqr-det-bound}
    \|D_t-D^\star\|_{\mathcal T} \;\le\; C\, \|R^\star - \mathcal A^\ast D_t \|_{\mathcal H}.
\end{equation}
Hence the stopping rule above implies
\begin{equation} \label{eq:lsqr-det-bound-tau}
    \|D_t-D^\star\|_{\mathcal T} \;\le\; C\,\tau_n\,\|R^\star\|_{\mathcal H}.
\end{equation}
Moreover, under Assumptions~\ref{asmp:A2} and~\ref{asmp:A4},
\begin{equation} \label{eq:lsqr-det-bound-mu}
    \|D_t-D^\star\|_{\mathcal T} = O_p \! \left( \tau_n \sqrt{\frac{\sup_g N_g}{n}} \right).
\end{equation}
\end{lemma}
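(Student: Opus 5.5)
The plan is to prove the first inequality \eqref{eq:lsqr-det-bound} directly from the definition of the $\mathcal T$-seminorm, and in fact with $C=1$. Then \eqref{eq:lsqr-det-bound-tau} follows from the stopping rule, and \eqref{eq:lsqr-det-bound-mu} reduces to bounding $\|R^\star\|_{\mathcal H}$ under Assumptions~\ref{asmp:A2} and~\ref{asmp:A4}.

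\textbf{Step 1: the deterministic bound.} By Lemma~\ref{lem:AAstar-psd}(i), for any $E\in\mathcal K$ we have $\|E\|_{\mathcal T}^2=\langle E,\mathcal A\mathcal A^\ast E\rangle_{\mathcal K}=\|\mathcal A^\ast E\|_{\mathcal H}^2$. Apply this with $E=D_t-D^\star$ to get $\|D_t-D^\star\|_{\mathcal T}=\|\mathcal A^\ast D_t-\mathcal A^\ast D^\star\|_{\mathcal H}$.

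The case split is whether $R^\star$ lies in $\mathrm{range}(\mathcal A^\ast)$.
\begin{itemize}
\item If $R^\star\in\mathrm{range}(\mathcal A^\ast)$, then $\mathcal A^\ast D^\star=R^\star$ and the claim holds with equality and $C=1$. This is the case implicitly used in Section~\ref{sec:estimation} when writing $\|D_0-D^\star\|_{\mathcal T}=\|R^\star\|_{\mathcal H}$.
\item In general, let $P$ be the orthogonal projector in $\mathcal H$ onto $\mathrm{range}(\mathcal A^\ast)$. Since $D^\star$ is the least-squares solution of \eqref{eq:ls-prob}, $\mathcal A^\ast D^\star=PR^\star$. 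Because $\mathcal A^\ast D_t\in\mathrm{range}(\mathcal A^\ast)$, we get $\mathcal A^\ast D_t-\mathcal A^\ast D^\star=P(\mathcal A^\ast D_t-R^\star)$. Its norm is at most $\|R^\star-\mathcal A^\ast D_t\|_{\mathcal H}$, since $P$ is a contraction.
\end{itemize}
So $C=1$ works in both cases, which trivially depends only on the spectral bounds of $\mathcal T$. The stopping rule $\|\mathcal A^\ast D_t-R^\star\|_{\mathcal H}\le\tau_n\|R^\star\|_{\mathcal H}$ then gives \eqref{eq:lsqr-det-bound-tau} immediately.

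\textbf{Step 2: the size of $\|R^\star\|_{\mathcal H}$.} This is the only substantive part. Use the rank-one structure of the representer from Lemma~\ref{lem:riesz}. With $v_g:=\widetilde X_g\Gamma^{-1}a\in\mathbb R^{N_g}$ we have $R^\star_g=v_gv_g'$, so $\|R^\star_g\|_F^2=\|v_g\|^4$. Hence
\[
\|R^\star\|_{\mathcal H}^2=\frac{1}{n^2}\sum_{g}\|v_g\|^4\le\frac{1}{n^2}\Big(\max_g\|v_g\|^2\Big)\sum_g\|v_g\|^2 .
\]
The sum telescopes through the definition of $\Gamma$: $\sum_g\|v_g\|^2=a'\Gamma^{-1}\big(\sum_g\widetilde X_g'\widetilde X_g\big)\Gamma^{-1}a=n\,a'\Gamma^{-1}a$. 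This is $O_p(n)$ because Assumption~\ref{asmp:A4} bounds $\lambda_{\min}(\Gamma)$ away from zero. For the maximum, $\|v_g\|^2\le\lambda_{\min}(\Gamma)^{-2}\|\widetilde X_g\|^2$.

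\textbf{Main obstacle.} I expect the main work to be the bound $\max_g\|\widetilde X_g\|^2=O_p(\sup_gN_g)$. The plan is to obtain it from the moment/boundedness condition on the regressors in Assumption~\ref{asmp:A2}, using $\|\widetilde X_g\|^2\le\|\widetilde X_g\|_F^2=\sum_{i\in g}\|\tilde x_{gi}\|^2$ with uniformly bounded (conditional) second moments or bounded rows of the partialled-out regressors. If Assumption~\ref{asmp:A2} only gives moment bounds rather than uniform bounds, I would instead avoid the maximum. The fallback uses $\sum_g\|v_g\|^4\le\sum_g\|v_g\|^2\cdot N_g\max_{i}(\tilde x_{gi}'\Gamma^{-1}a)^2$, or equivalently a Markov-inequality argument on $\sum_g\|\widetilde X_g\|_F^4$, to get the same $O_p(n\sup_gN_g)$ rate.

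Combining, $\|R^\star\|_{\mathcal H}^2=O_p(n^{-2}\cdot\sup_gN_g\cdot n)=O_p(\sup_gN_g/n)$. Substituting into \eqref{eq:lsqr-det-bound-tau} yields \eqref{eq:lsqr-det-bound-mu}.
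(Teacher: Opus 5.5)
Your proposal is correct, provided Step 2 is carried out with the Markov argument (see below). Step 1 takes a genuinely different and sharper route than the paper.

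\textbf{Step 1.} The paper subtracts the normal equations to get $\mathcal T(D_t-D^\star)=-\mathcal A(R^\star-\mathcal A^\ast D_t)$. It then uses that the LSQR iterates and $D^\star$ lie in $\mathrm{range}(\mathcal A)$. Cauchy--Schwarz in $\mathcal K$ plus the lower spectral bound $\|x\|_{\mathcal K}\le\lambda_{\min}^{-1/2}\|x\|_{\mathcal T}$ on that subspace give $C=\lambda_{\min}^{-1/2}$ (after using $\|\mathcal A\|_{\mathrm{op}}\le 1$).

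You instead use the identity $\|E\|_{\mathcal T}=\|\mathcal A^\ast E\|_{\mathcal H}$, valid for every $E\in\mathcal K$, together with $\mathcal A^\ast D^\star=PR^\star$. This gives $C=1$ without the Krylov/range property of the iterates and without any spectral lower bound.

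This matters for \eqref{eq:lsqr-det-bound-mu}. With the paper's constant, the $O_p$ rate requires $\lambda_{\min}^{-1/2}$ to stay bounded as $n$ grows. That is in effect Assumption~\ref{asmp:A5}(ii), which is not among the hypotheses listed for that display. With $C=1$, the rate follows from Assumptions~\ref{asmp:A2} and~\ref{asmp:A4} alone. The paper's own chain would also give $C=1$ if it moved $\mathcal A$ across the inner product: with $e=D_t-D^\star$ and $r=R^\star-\mathcal A^\ast D_t$, one has $\langle e,\mathcal A r\rangle_{\mathcal K}=\langle \mathcal A^\ast e,r\rangle_{\mathcal H}$. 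The spectral constant genuinely enters only when converting the $\mathcal T$-seminorm to $\|\cdot\|_{\mathcal K}$, which happens in the proof of Theorem~\ref{thm:feasible}. The side benefit of the paper's route is that it sets up this comparison in advance.

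\textbf{Step 2.} Drop the maximum. Assumption~\ref{asmp:A2} gives only moment bounds, and $\max_g\|\widetilde X_g\|^2=O_p(\sup_g N_g)$ fails in general. With $N_g=1$ it would require $\max_i\|\tilde x_i\|^2=O_p(1)$, which is false even for Gaussian regressors.

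Your first fallback still needs $\max_{g,i}(\tilde x_{gi}'\Gamma^{-1}a)^2=O_p(1)$, unless you bound it in expectation, at which point it becomes the second argument. So the two are not equivalent.

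The Markov argument is the right one:
\begin{itemize}
\item $\|R^\star_g\|_F=\|v_g\|^2\le\|\Gamma^{-1}\|_{\mathrm{op}}^2\|\widetilde X_g\|_F^2$.
\item By Assumption~\ref{asmp:A2}, $\mathbb E\|\widetilde X_g\|_F^4\le N_g\sum_i\mathbb E\|\tilde x_{gi}\|^4=O(N_g^2)$.
\item Hence $\sum_g\|\widetilde X_g\|_F^4=O_p(\sum_g N_g^2)=O_p(n\sup_gN_g)$.
\end{itemize}
This is essentially the paper's Lemma~\ref{pre:riesz-bdd}. Working at the level of the sum is also the sound way to justify that lemma. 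The paper routes it through $\sup_g N_g^{-2}\|\widetilde X_g'\widetilde X_g\|_F^2=O_p(1)$ from Lemma~\ref{pre:bound}(i), whose Markov step controls only $\sup_g\Pr(\cdot)$, not $\Pr(\sup_g\cdot)$. That is the same issue you flagged.
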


Equation~\eqref{eq:lsqr-det-bound} is a deterministic bound, valid at every fixed $t$ regardless of sample size: it converts the readily computable relative residual $\|\mathcal A^\ast D_t-R^\star\|_{\mathcal H}$ into a bound on the estimation error $\|D_t-D^\star\|_{\mathcal T}$ via the spectral properties of $\mathcal T$ alone. Equation~\eqref{eq:lsqr-det-bound-tau} then specializes this bound to the terminal iteration, using only the stopping rule itself. Equation~\eqref{eq:lsqr-det-bound-mu} further translates this into a probabilistic rate by bounding $\|R^\star\|_{\mathcal H}$ under the sampling assumptions of Section~\ref{sec:theory}, and is the form used in the asymptotic results that follow.

By Theorem~\ref{thm:feasible} and Assumption~\ref{asmp:A3}, a practical sufficient condition is $\tau_n \ \mu_n^{\frac{2+3\lambda}{4+4\lambda}}\to 0$. Since $\mu_n\le n$, this condition is satisfied for any $\tau_n=n^{-\alpha}$ with $\alpha\ge 3/4$. In ill-conditioned settings, numerical stability may be improved by standard techniques such as Tikhonov regularization or preconditioning; see, e.g., \citet{saad2003iterative,hansen1998rank}. Algorithm~\ref{alg:lsqr-riesz} summarizes the LSQR implementation used in our estimator. \\

\begin{algorithm}[H]
\caption{LSQR implementation for the minimum-norm Riesz representer}
\label{alg:lsqr-riesz}
\DontPrintSemicolon
\KwIn{Operator actions $\mathcal A:\mathcal H\to\mathcal K$ and $\mathcal A^\ast:\mathcal K\to\mathcal H$; target $R^\star\in\mathcal H$; tolerance $\tau_n$}
\KwOut{Approximate representer $D_t \approx D^\star=(\mathcal A^\ast)^+R^\star$}

\BlankLine
\textbf{(0) Initialization:}\;
\Indp
    $v_0 \gets 0 \in \mathcal K$, \quad $D_0 \gets 0 \in \mathcal K$ \;
    $u_1 \gets R^\star/\|R^\star\|_{\mathcal H}$, \quad $b_1 \gets \|R^\star\|_{\mathcal H}$\;
\Indm

\BlankLine
\For{$t=1,2,\ldots$}{
\Indp
\textbf{(1) Golub--Kahan bidiagonalization update:}\;
\Indp
    $\tilde v_t \gets \mathcal A \, u_t - b_t v_{t-1}$ \;
    $a_t \gets \|\tilde v_t\|_{\mathcal K}$,\quad $v_t \gets \tilde v_t/a_t$\;
    $\tilde u_{t+1} \gets \mathcal A^\ast \, v_t - a_t u_t$\;
    $b_{t+1} \gets \|\tilde u_{t+1}\|_{\mathcal H}$, \quad $u_{t+1} \gets \tilde u_{t+1}/b_{t+1}$\;
\Indm

\textbf{(2) Small least-squares update:}\;
\Indp
Form the bidiagonal matrix
    \[ B_t= \begin{bmatrix}  a_1 \\ b_2 & a_2 \\ & b_3 & \ddots \\ && \ddots & a_t \\ &&& b_{t+1}  \end{bmatrix},  \qquad e_1=(1,0,\ldots,0)'. \]

Compute $y_t \gets \arg\min_y \|B_t y - b_1 e_1\|$\;

Update $D_t \gets \sum_{j=1}^t y_{t,j}\, v_j$\;
\Indm

\textbf{(3) Stopping rule:}\;
\Indp
    \If{$\|\mathcal A^\ast D_t - R^\star \|_{\mathcal H}/\|R^\star\|_{\mathcal H} \le \tau_n$}{
        \textbf{break}\;
    }
\Indm
\Indm
}
\Return{$D_t$}\;
\end{algorithm}

\section{Asymptotic Theory}\label{sec:theory}

\subsection{Assumptions}\label{subsec:assumption}
Let $\|\cdot\|_F$ denote the Frobenius norm and $\|\cdot\|_{\mathrm{op}}$ the operator (spectral) norm.
\begin{assumption}[Sampling]\label{asmp:A1}
The sample consists of $G=G_n$ clusters with $\sum_{g=1}^G N_g=n$ and $G\to\infty$. Conditional on $(X, W)$, errors $\{U_g\}$ are independent across clusters  and satisfy
    \[ \mathbb E[U_g\mid X,W]=0, \qquad \mathbb E[U_gU_g'\mid X,W]=\Sigma_g, \]
where $\Sigma_g$ is positive semi-definite.
\end{assumption}

\begin{assumption}[Moment bounds]\label{asmp:A2}
For some $0<\lambda\leq 2$,
    \[ \sup_{g,i}\mathbb E\!\left[ |u_{gi}|^{4+2\lambda}\mid X,W\right] < \infty, \qquad \sup_g \frac{1}{N_g} \sum_{i=1}^{N_g} \bigl(\mathbb{E} \|\tilde x_{gi}\|^{4+2\lambda} + \mathbb{E} \|w_{gi}\|^{4+2\lambda} \bigr) < \infty. \]
\end{assumption}

\begin{assumption}[Cluster sizes]\label{asmp:A3}
There exists a deterministic sequence $\mu_n\to\infty$ and a finite constant $\nu_a>0$ such that 
    \[ \mu_n\, V\;\to\; \nu_a, \]
and the maximal cluster size obeys
    \[ \mu_n^{\frac{2+\lambda}{2+2\lambda}} \, \sup_g \frac{N_g}{n} \;\to\; 0. \]
\end{assumption}
\begin{assumption}[Design regularity]\label{asmp:A4}
The control and residualized design matrices satisfy
    \[ \lambda_{\min} \! \Big( \frac{1}{n}\sum_{g=1}^G W_g'W_g \Big) \ge c_W > 0, \qquad \Gamma_n := \frac{1}{n} \sum_{g=1}^G \widetilde X_g' \widetilde X_g \xrightarrow{p} \Gamma_0 \succ 0. \]
\end{assumption}

\begin{assumption}[Riesz identification and stability] \label{asmp:A5}
Let $R_n^\star := \{\widetilde X_g Q_n \widetilde X_g'\}_{g=1}^G$ with $Q_n := \Gamma_n^{-1} a a' \Gamma_n^{-1}$, and let $\mathcal A_n^\ast$ be the adjoint operator in~\eqref{eq:Astar-def}.
\begin{enumerate}[(i)]
    \item (Exact solvability.) For all sufficiently large $n$,
        \[ R_n^\star \in \mathrm{range}(\mathcal A_n^\ast). \]
    \item (Stability.) The Moore--Penrose pseudoinverse of $\mathcal A_n^\ast$ is uniformly bounded on its range:
        \[ \sup_n \big\| (\mathcal A_n^\ast|_{\mathrm{Range} (\mathcal A_n^\ast)})^{+} \big\|_{\mathrm{op}} \;\le\; c < \infty. \]
\end{enumerate}
\end{assumption}

Assumptions~\ref{asmp:A1}--\ref{asmp:A5} jointly ensure that the variance functional $\Phi(\Sigma) = \mathrm{Var}(a'\widehat\beta)$ is identifiable from the full observable residual moment array and that the resulting Riesz-based $t$-statistic is asymptotically normal. Assumptions~\ref{asmp:A1}--\ref{asmp:A4} provide standard conditions for cluster asymptotics, following \citet{djogbenou2019asymptotic} and related work, and guarantee that the target representer $R_n^\star$ is well defined and $\|R_n^\star\|_{\mathcal H_n}$ is stochastically bounded. While Assumptions~\ref{asmp:A1}--\ref{asmp:A4} restrict the data-generating process, Assumption~\ref{asmp:A5} restricts the deterministic operator $\mathcal A_n^\ast$ induced by the projection geometry of the design. Part~(i) ensures existence of an exact solution to $\mathcal A_n^\ast(D)=R_n^\star$; part~(ii) ensures well-posedness through a uniform bound on the Moore--Penrose pseudoinverse restricted to $\mathrm{range}(\mathcal A_n^\ast)$. Together, these conditions guarantee that the canonical minimum-norm solution $D_n^\star$ exists, is unique, and remains stochastically bounded.

Part~(i) admits a simple sufficient condition that depends only on the dimensions of the design: it holds whenever $\mathrm{rank}(\mathcal A_n^\ast)=\dim(\mathcal H_n)$, i.e., when the adjoint operator is surjective, so that every element of $\mathcal H_n$ --- including $R_n^\star$ --- lies in its range. This rank condition is generic when $G$ is large relative to $\sup_g N_g$, since $\dim(\mathcal K_n)=\sum_{h,h'}N_hN_{h'}\gg\sum_g N_g^2=\dim(\mathcal H_n)$ in that regime, though it is a sufficient and not a necessary condition: exact solvability can hold for the specific $R_n^\star$ at hand even when $\mathcal A_n^\ast$ fails to be surjective.

Assumption~\ref{asmp:A5}(i) is in fact stronger than what the asymptotic results in this section require: an approximate version, replacing exact range membership with a sufficiently fast-decaying approximation error, suffices for both Theorem~\ref{thm:riesz-t} and Theorem~\ref{thm:feasible}. We use the exact version throughout this section for simplicity, and return to the distinction between exact and approximate solvability --- and to its relation to first-order approximation more generally --- when relating the Riesz estimator to classical restricted solutions in Section~\ref{sec:partial}.

\subsection{Main Results}

\begin{theorem}[Asymptotic normality]\label{thm:riesz-t}
Suppose Assumptions~\ref{asmp:A1}--\ref{asmp:A5} hold. Then,
\begin{align}
    & V^{-1/2}\, a'(\widehat\beta-\beta) \;\xrightarrow{d}\; \mathcal N(0,1), \label{eq:normality} \\
    & \widehat V_{\mathrm{Riesz}}/V \;\xrightarrow{p}\; 1, \label{eq:consistency} \\
    & \widehat V_{\mathrm{Riesz}}^{-1/2}\, a'(\widehat\beta-\beta) \;\xrightarrow{d}\; \mathcal N(0,1). \label{eq:t-Riesz}
\end{align}
\end{theorem}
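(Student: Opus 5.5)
The proof has three parts, one for each display in the statement.

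\emph{Normality \eqref{eq:normality}.} By \eqref{eq:beta_hat},
\[
a'(\widehat\beta-\beta)=\sum_{g=1}^G \xi_g,\qquad \xi_g:=\tfrac1n a'\Gamma_n^{-1}\widetilde X_g'U_g .
\]
Conditional on $(X,W)$ the $\xi_g$ are independent with mean zero (Assumption~\ref{asmp:A1}), and $\sum_g \mathrm{Var}(\xi_g\mid X,W)=V$. I would apply a Lyapunov CLT for triangular arrays with exponent $2+\lambda$. The sufficient condition is $V^{-(2+\lambda)/2}\sum_g \mathbb E|\xi_g|^{2+\lambda}\to0$. Minkowski's inequality together with the $(4+2\lambda)$ moments in Assumption~\ref{asmp:A2} and the bound on $\Gamma_n^{-1}$ from Assumption~\ref{asmp:A4} gives
\[
\mathbb E|\xi_g|^{2+\lambda}\lesssim (N_g/n)^{2+\lambda}.
\]
Summing, and using $\sum_g N_g^{2+\lambda}\le n(\sup_g N_g)^{1+\lambda}$ and $V\asymp\mu_n^{-1}$ (Assumption~\ref{asmp:A3}), the Lyapunov ratio is bounded by
\[
\mu_n^{(2+\lambda)/2}\,(\sup_g N_g/n)^{1+\lambda}.
\]
This is exactly what the cluster-size rate in Assumption~\ref{asmp:A3} sends to zero.

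\emph{Consistency \eqref{eq:consistency}.} This is the core of the proof. Exact solvability (Assumption~\ref{asmp:A5}(i)) gives $\mathcal A^\ast D^\star=R^\star$, so $\mathbb E[\widehat V_{\mathrm{Riesz}}\mid X,W]=\langle C,D^\star\rangle_{\mathcal K}=\langle\Sigma,R^\star\rangle_{\mathcal H}=V$; the estimator is conditionally unbiased. It then suffices to show $\mathrm{Var}(\widehat V_{\mathrm{Riesz}}\mid X,W)=o_p(\mu_n^{-2})$ and apply Chebyshev. Write
\[
\widehat V_{\mathrm{Riesz}}=\tfrac1{n^2}\,\widehat U'\mathbf D^\star\widehat U=\tfrac1{n^2}\,U'M\mathbf D^\star M U,
\]
where $\mathbf D^\star$ is the $n\times n$ matrix assembled from the blocks $D^\star_{hh'}$. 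This is a quadratic form in cluster-independent errors with kernel $B=M\mathbf D^\star M$. The standard variance bound for such forms, using fourth moments within clusters, gives
\[
\mathrm{Var}\lesssim n^{-4}\,\sup_g N_g\,\|B\|_F^2\,\sup_g\|\Sigma_g\|_{\mathrm{op}}\cdot(\text{moment constants}).
\]
Since $\|M\|_{\mathrm{op}}\le1$, $\|B\|_F\le\|\mathbf D^\star\|_F$. Assumption~\ref{asmp:A5}(ii) gives $\|D^\star\|_{\mathcal K}\le c\|R^\star\|_{\mathcal H}$, and the computation behind \eqref{eq:lsqr-det-bound-mu} gives $\|R^\star\|_{\mathcal H}=O_p(\sqrt{\sup_g N_g/n})$. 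Combining these yields
\[
\mathrm{Var}=O_p\big((\sup_g N_g/n)^2\big).
\]
This is $o_p(\mu_n^{-2})$ by Assumption~\ref{asmp:A3}, because $\mu_n\sup_g N_g/n\to0$ follows from the stated rate, whose exponent $(2+\lambda)/(2+2\lambda)$ lies in $(0,1]$.

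\emph{Main obstacle.} The hard step is the variance bound for the quadratic form. With within-cluster dependence and unbounded $\sup_g\|\Sigma_g\|_{\mathrm{op}}$ (possibly of order $N_g$), the naive bound can lose a factor $\sup_g N_g$. I would first try to absorb it using the slack in the Assumption~\ref{asmp:A3} rate. If that fails, I would work directly with the normalization $\mu_n^{-1}\asymp V$, bounding $\|\mathbf D^\star\|_F^2$ relative to $V^2$ rather than through crude norms. For heavier tails I would truncate at the $(4+2\lambda)$ moment level.

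\emph{Studentized statistic \eqref{eq:t-Riesz}.} This follows from \eqref{eq:normality} and \eqref{eq:consistency} by Slutsky's lemma, writing
\[
\widehat V_{\mathrm{Riesz}}^{-1/2}\,a'(\widehat\beta-\beta)=(V/\widehat V_{\mathrm{Riesz}})^{1/2}\cdot V^{-1/2}\,a'(\widehat\beta-\beta).
\]
The replacement of $\Gamma_n$ by its limit $\Gamma_0$, and the randomness of $(X,W)$, are handled by conditioning throughout and then using dominated convergence.
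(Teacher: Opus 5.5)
Your architecture matches the paper's. You use a Lyapunov CLT with exponent $2+\lambda$ for \eqref{eq:normality}, and your Lyapunov ratio $\mu_n^{(2+\lambda)/2}(\sup_g N_g/n)^{1+\lambda}$ is exactly the paper's. For \eqref{eq:consistency} you use conditional unbiasedness $\mathbb E[\widehat V_{\mathrm{Riesz}}\mid X,W]=\langle C,D^\star\rangle_{\mathcal K}=V$ plus a conditional Chebyshev bound on the quadratic form with kernel $n^{-2}M\mathbf D^\star M$. Slutsky gives \eqref{eq:t-Riesz}.

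The gap is in how you close the consistency step. Two things go wrong.

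\emph{The arithmetic.} You have $\|\mathbf D^\star\|_F^2=n^2\|D^\star\|_{\mathcal K}^2=O_p(n\sup_g N_g)$. Your own variance bound therefore evaluates to $n^{-3}(\sup_g N_g)^2\sup_g\|\Sigma_g\|_{\mathrm{op}}$, not $(\sup_g N_g/n)^2$.

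\emph{The rate comparison, which is decisive.} The claim ``$\mu_n\sup_g N_g/n\to0$ follows from Assumption~\ref{asmp:A3}'' runs the wrong way. Since $\mu_n\to\infty$ and $(2+\lambda)/(2+2\lambda)<1$, you have $\mu_n^{(2+\lambda)/(2+2\lambda)}\le\mu_n$, so Assumption~\ref{asmp:A3} is weaker than what you need. Worse, what you need contradicts the maintained assumptions:
\begin{itemize}
\item Assumption~\ref{asmp:A2} gives $\|\Sigma_g\|_{\mathrm{op}}\le\mathrm{tr}\,\Sigma_g\lesssim N_g$.
\item Hence $V\lesssim n^{-2}\sum_g N_g\|\widetilde X_g\|_F^2\lesssim \sup_g N_g/n$, using $\sum_g\|\widetilde X_g\|_F^2=n\,\mathrm{tr}\,\Gamma_n$.
\item So $\mu_n V\to\nu_a>0$ forces $\mu_n\sup_g N_g/n$ to stay bounded away from zero. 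In the heteroskedastic case $N_g=1$, $\mu_n=n$, it equals one.
\end{itemize}
A conditional variance bound of order $(\sup_g N_g/n)^2$ is therefore never $o_p(\mu_n^{-2})$, and the proof as written does not go through.

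The repair is to keep the extra power of $\sup_g N_g/n$. Insert $\sup_g\|\Sigma_g\|_{\mathrm{op}}=O(\sup_g N_g)$ into your display. Equivalently, follow the paper: split $Z_n$ into the orthogonal pieces $A_g$ and $B_{gh}=U_g'H_{gh}U_h$, and use $\mathbb E[\|U_hU_g'\|_F^2\mid X,W]\lesssim N_gN_h$ from Lemma~\ref{pre:bound}(ii) with $\theta=2$. Either way, $\mathrm{Var}(\widehat V_{\mathrm{Riesz}}\mid X,W)=O_p((\sup_g N_g/n)^3)$.

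Then $\mu_n^2(\sup_g N_g/n)^3=(\mu_n^{2/3}\sup_g N_g/n)^3\to0$, because $2/3\le(2+\lambda)/(2+2\lambda)$ exactly when $\lambda\le 2$. This is where the upper bound on $\lambda$ in Assumption~\ref{asmp:A2} is used.

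So the worst-case factor $\sup_g N_g$ from $\|\Sigma_g\|_{\mathrm{op}}$, which your ``main obstacle'' paragraph worries about, is harmless. No truncation or renormalization by $V$ is needed. The actual defects were the dropped factor $n^{-1}$ and the reversed exponent comparison.
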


Theorem~\ref{thm:riesz-t} establishes consistency and asymptotic normality for the Riesz variance estimator and the associated $t$-statistic when the minimum-norm representer $D_n^\star$ is used. As discussed following Assumption~\ref{asmp:A5}, Part~(i) of that assumption is stronger than necessary for this result; Section~\ref{sec:partial} states the weaker approximate-solvability condition the proof in fact needs.

\begin{theorem}[Feasible implementation via LSQR]\label{thm:feasible}
Let $\widehat D_n$ be the LSQR approximation to the minimum-norm solution $D_n^\star$ of $\mathcal A_n^\ast(D_n)=R_n^\star$, computed as in Algorithm~\ref{alg:lsqr-riesz} with the \emph{relative} stopping rule
\begin{equation}\label{eq:lsqr-stop}
    \frac{\big\|R_n^\star - \mathcal A_n^\ast(\widehat D_n)\big\|_{\mathcal H_n}}{\|R_n^\star\|_{\mathcal H_n}}
    \;\le\; \tau_n,
\end{equation}
where the tolerance sequence $\tau_n$ satisfies
\begin{equation}\label{eq:tau-rate}
    \mu_n\,\tau_n\,\sqrt{\frac{\sup_g N_g}{n}} \;\longrightarrow\; 0.
\end{equation}
Define $\widehat V^{\mathrm{LSQR}}_{\mathrm{Riesz}}$ as in \eqref{eq:Vhat-Riesz} with $D_n^\star$ replaced by $\widehat D_n$. Then, under Assumptions~\ref{asmp:A1}--\ref{asmp:A5},
\begin{align}
    & \widehat V^{\mathrm{LSQR}}_{\mathrm{Riesz}}/V \;\xrightarrow{p}\; 1, \label{eq:consistency-lsqr} \\
    & \big(\widehat V^{\mathrm{LSQR}}_{\mathrm{Riesz}}\big)^{-1/2}\, a'(\widehat\beta-\beta) \; \xrightarrow{d} \; \mathcal N(0,1). \label{eq:t-lsqr}
\end{align}
\end{theorem}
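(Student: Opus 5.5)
The plan is to reduce Theorem~\ref{thm:feasible} to Theorem~\ref{thm:riesz-t} by a perturbation argument. Write
\[
\widehat V^{\mathrm{LSQR}}_{\mathrm{Riesz}}-\widehat V_{\mathrm{Riesz}} = \langle \widehat C,\widehat D_n-D_n^\star\rangle_{\mathcal K_n}.
\]
By \eqref{eq:consistency} we already have $\widehat V_{\mathrm{Riesz}}/V\xrightarrow{p}1$. Since $\mu_n V\to\nu_a>0$ by Assumption~\ref{asmp:A3}, it suffices to show
\[
\mu_n\langle \widehat C,\widehat D_n-D_n^\star\rangle_{\mathcal K_n}=o_p(1).
\]
Then \eqref{eq:consistency-lsqr} follows. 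Combining it with \eqref{eq:normality} and Slutsky's lemma gives \eqref{eq:t-lsqr}.

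\textbf{Decomposition of the perturbation term.} I split it as
\[
\langle \widehat C,\Delta\rangle=\langle C,\Delta\rangle+\langle \widehat C-C,\Delta\rangle, \qquad \Delta:=\widehat D_n-D_n^\star.
\]

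\emph{Mean term.} Because $C=\mathcal A_n(\Sigma)$, Lemma~\ref{lem:adjoint} gives
\[
\langle C,\Delta\rangle_{\mathcal K}=\langle \Sigma,\mathcal A_n^\ast\Delta\rangle_{\mathcal H}.
\]
Cauchy--Schwarz then bounds this by $\|\Sigma\|_{\mathcal H}\,\|\mathcal A_n^\ast\Delta\|_{\mathcal H}=\|\Sigma\|_{\mathcal H}\|\Delta\|_{\mathcal T}$. Lemma~\ref{lem:lsqr-residual-error}, specifically \eqref{eq:lsqr-det-bound-mu}, gives $\|\Delta\|_{\mathcal T}=O_p(\tau_n\sqrt{\sup_g N_g/n})$. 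The moment bounds in Assumption~\ref{asmp:A2} bound the entries of $\Sigma_g$ uniformly, so $\|\Sigma\|_{\mathcal H}^2=n^{-2}\sum_g\|\Sigma_g\|_F^2\lesssim n^{-2}\sum_g N_g^2\le \sup_g N_g/n$. Hence the mean term is $O_p(\tau_n \sup_g N_g/n)$, and after multiplying by $\mu_n$ it is $o_p(1)$ by \eqref{eq:tau-rate}, since $\sqrt{\sup_gN_g/n}\le 1$.

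Actually, there is a cleaner route that makes the mean term trivial. Apply the residual bound directly: $\mathcal A_n^\ast\Delta = \mathcal A_n^\ast\widehat D_n - R_n^\star$, whose norm is at most $\tau_n\|R_n^\star\|_{\mathcal H}$ by the stopping rule \eqref{eq:lsqr-stop}. This uses Assumption~\ref{asmp:A5}(i), so that $\mathcal A_n^\ast D_n^\star=R_n^\star$ exactly. The same Cauchy--Schwarz step then applies with this bound in place of $\|\Delta\|_{\mathcal T}$.

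\emph{Fluctuation term (main obstacle).} The difficulty is that $\widehat C$ is not in $\mathrm{range}(\mathcal A_n)$, so the adjoint trick does not apply directly. I would first project. Note that $\widehat D_n$ is a Krylov iterate in $\mathscr K_t(\mathcal A\mathcal A^\ast,\mathcal A R^\star)\subset\mathrm{range}(\mathcal A_n)$, and $D_n^\star\in\mathrm{range}(\mathcal A_n)$ as well. Hence $\Delta\in\mathrm{range}(\mathcal A_n)$, and we may replace $\widehat C-C$ by its projection $P(\widehat C-C)$ onto that range. On $\mathrm{range}(\mathcal A_n)$ we have $\|\Delta\|_F\le \|(\mathcal A_n^\ast|)^{+}\|_{\mathrm{op}}\,\|\mathcal A_n^\ast\Delta\|_{\mathcal H}\le c\,\tau_n\|R_n^\star\|_{\mathcal H}$ by Assumption~\ref{asmp:A5}(ii). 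Therefore
\[
|\langle \widehat C-C,\Delta\rangle|\le c\,\tau_n\,\|R_n^\star\|_{\mathcal H}\,\|\widehat C-C\|_{\mathcal K}.
\]
Now $\|\widehat C\|_{\mathcal K}=n^{-1}\|\widehat U\|^2\le n^{-1}\|U\|^2=O_p(1)$. Similarly $\|C\|_{\mathcal K}\le n^{-1}\mathrm{tr}(\Sigma)=O(1)$, using that $M$ is a contraction. Finally, $\|R_n^\star\|_{\mathcal H}=O_p(\sqrt{\sup_gN_g/n})$, which is the bound underlying \eqref{eq:lsqr-det-bound-mu} and follows from Assumptions~\ref{asmp:A2} and~\ref{asmp:A4}.

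\textbf{Conclusion.} The whole difference is therefore $O_p(\tau_n\sqrt{\sup_gN_g/n})$. Multiplied by $\mu_n$, it vanishes exactly under \eqref{eq:tau-rate}. This proves \eqref{eq:consistency-lsqr}, and \eqref{eq:t-lsqr} follows via Slutsky as noted above.

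The crude bound $\|\widehat C\|_{\mathcal K}=O_p(1)$ is where the rate could be tightened. It is also the step I would double-check: the $\mathcal K$-norm of a rank-one array $\widehat U\widehat U'$ equals $\|\widehat U\|^2/n$, and I need that to be bounded rather than growing.
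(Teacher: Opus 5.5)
Your argument is correct and is essentially the paper's proof. The paper bounds $|\langle \widehat C_n,\widehat D_n-D_n^\star\rangle_{\mathcal K_n}|\le\|\widehat C_n\|_{\mathcal K_n}\,\|\widehat D_n-D_n^\star\|_{\mathcal K_n}$ directly, using $\|\widehat C_n\|_{\mathcal K_n}=n^{-1}\|\widehat U\|^2=O_p(1)$ and $\|\widehat D_n-D_n^\star\|_{\mathcal K_n}=O_p(\tau_n\sqrt{\sup_g N_g/n})$. That is exactly your fluctuation-term bound, so your mean/fluctuation split is harmless but unnecessary.

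Your use of Assumption~\ref{asmp:A5}(ii) to pass from the residual to $\|\widehat D_n-D_n^\star\|_{\mathcal K_n}$ is equivalent to the paper's constant $\lambda_{\min}^{-1/2}$, which is the reciprocal of the smallest nonzero singular value of $\mathcal A_n^\ast$. Your version has the minor merit of making that constant's uniformity in $n$ explicit, which the paper leaves implicit.
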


Theorem~\ref{thm:feasible} shows that the LSQR iterate $\widehat D_n$ stays close enough to the exact minimum-norm solution $D_n^\star$, at the rate controlled by $\tau_n$, that the feasible estimator $\widehat V^{\mathrm{LSQR}}_{\mathrm{Riesz}}$ inherits the conclusion of Theorem~\ref{thm:riesz-t}. The rate condition~\eqref{eq:tau-rate} governing $\tau_n$ controls the estimator-level error $| \widehat V^{\mathrm{LSQR}}_{\mathrm{Riesz}} -\widehat V_{\mathrm{Riesz}}|$ directly; the proof, detailed in Appendix~D, proceeds by bounding this error in terms of $\| \widehat D_n - D_n^\star \|_{\mathcal K_n}$ and the stochastic order of $\| \widehat C_n \|_{\mathcal K_n}$. Section~\ref{sec:partial} revisits this same approximation logic --- how close a representer must be to $D_n^\star$ to preserve first-order validity --- in the context of classical restricted estimators, where the relevant representer is not an LSQR iterate but a diagonal or block-diagonal weighting scheme.

\section{Partial Riesz Equivalence and Limits} \label{sec:partial}

This section relates the full Riesz estimator to the classical estimators it generalizes. First, we formalize the diagonal partial-information restriction underlying HC and CRVE and characterize when it reproduces the full Riesz representer \emph{exactly} at fixed $n$ (Lemma~\ref{lem:part-exact}), and when it does so only \emph{first-order} (Lemma~\ref{lem:part-firstorder}) --- two related but distinct questions that recur throughout the section. Second, we compare the full and partial systems at the level of matrix rank, isolating the scalar-cluster (heteroskedastic) case in which the two systems coincide under a mild richness condition. Third, we place HC2--HC5 and CR0--CR3 within the framework as specific diagonal or block-diagonal weighting schemes, give a computable diagnostic for when such restrictions are reliable, and close with the resulting existence hierarchy --- making precise how Failures 1 and 2 of Section~\ref{sec:motivation} are resolved by the full Riesz system, and re-emerge, in restricted form, under each classical estimator.

\subsection{The diagonal restriction: exact and first-order equivalence} \label{subsec:diag-restriction}

This subsection studies \emph{partial Riesz representations} of the variance functional, obtained by restricting the full residual moment array used to construct the Riesz representer. Throughout, the \emph{full Riesz} refers to the representation based on the entire residual moment array $\{ C_{hh'} \}_{h,h'=1}^G$, while the \emph{partial Riesz} focuses on the most common partial-information restriction: retaining only the \emph{within--cluster} residual moments $\{C_{hh}\}$ and discards the cross-cluster blocks $\{C_{hh'}:h\neq h'\}$. This is precisely the information set underlying classical HC and CR variance estimators, which treat feasible residuals as informative only through within-cluster quadratic forms.

To formalize this diagonal partial-information restriction, define the diagonal moment space
     \[ \mathcal K_n^{\mathrm{diag}} := \bigoplus_{h=1}^G \mathbb R^{N_h\times N_h}, \qquad [P_{\mathrm{diag}}(C)]_{hh'} := \begin{cases} C_{hh}, & h=h',\\ 0, & h\neq h', \end{cases} \]
and the associated \emph{partial Riesz operator}
    \[ \mathcal A_n^{\mathrm{diag}} := P_{\mathrm{diag}} \circ \mathcal A_n : \mathcal H_n \to \mathcal K_n^{\mathrm{diag}}, \]
which maps a covariance collection $\{\Sigma_g\}$ to the induced diagonal residual cross-moments $\{C_{hh}\}$. Note that although $P_{\mathrm{diag}}$ discards empirical cross-cluster moments $\{ C_{hh': h \neq h'} \}$, the diagonal moments $\{C_{hh}\}$ still aggregate information about all covariance blocks $\{\Sigma_g\}$ through the off-diagonal project blocks $\{M_{gh}: h \neq g\}$.

With the same inner-product geometry as in Section~\ref{sec:framework}, the adjoint $(\mathcal A_n^{\mathrm{diag}})^\ast:\mathcal K_n^{\mathrm{diag}}\to \mathcal H_n$ admits the block representation
    \[ \big[(\mathcal A_n^{\mathrm{diag}})^\ast(\{D_{hh}\})\big]_g = \sum_{h=1}^G M_{gh}\, D_{hh}\, M_{hg}, \qquad g=1,\dots,G. \]
A partial Riesz representer under the diagonal restriction is any $\{D_{hh}\} \in \mathcal K_n^{\mathrm{diag}}$ solving
    \begin{equation} \label{eq:part-Riesz}
        (\mathcal A_n^{\mathrm{diag}})^\ast(\{D_{hh}\}) = R_n^\star := \{\widetilde X_g Q_n \widetilde X_g'\}_{g=1}^G.
    \end{equation}
Vectorizing it yields the block--Khatri--Rao system
    \begin{equation} \label{eq:Khatri-Rao}
        \operatorname{vec}(\widetilde X_g Q_n \widetilde X_g') = \sum_{h=1}^G (M_{gh} \otimes M_{gh})\, \operatorname{vec}(D_{hh}) \quad \Longleftrightarrow\quad (M\!\ast\! M)\, d = r,
    \end{equation}
where $M\!\ast\!M$ stacks the block Kronecker products $\{M_{gh}\otimes M_{gh}\}$, $d := \operatorname{vec} \big( \{D_{hh}\}_{h=1}^G \big)$ denotes the column vector formed by stacking $\operatorname{vec} (D_{hh})$ over $h=1,\dots,G$, and $r := \operatorname{vec} \big( \{\widetilde X_g Q_n \widetilde X_g'\}_{g=1}^G \big)$ is the analogous stacked column for the target blocks. When all clusters have size one, this reduces to the inverse Hadamard system of \citet{cattaneo2018inference}; for general clusters, it becomes a natural ``inverse Khatri--Rao'' problem.

\paragraph{Exact equivalence.}

The partial and full Riesz systems coincide \emph{exactly} at fixed $n$ whenever
    \begin{equation} \label{eq:part-exact}
        R_n^\star \in \mathrm{Range} \big((\mathcal A_n^{\mathrm{diag}})^\ast\big),
    \end{equation}
so that~\eqref{eq:part-Riesz} admits at least one solution $D_n^{\mathrm{diag}}\in\mathcal K_n^{\mathrm{diag}}$. In this case, the diagonal weights $D_n^{\mathrm{diag}}$ provide an alternative representation of the same Riesz functional as the full-system solution $D_n^\star$. Intuitively, this corresponds to designs in which all variance-relevant directions of $R_n^\star$ are already visible in the diagonal residual moments $\{C_{hh}\}$, so that off-diagonal residual blocks $\{C_{hh'}:h\neq h'\}$ do not contribute additional information for the variance functional.

In terms of Section~\ref{sec:motivation}, condition~\eqref{eq:part-exact} is precisely a statement that Failure 2 is absent for this design: the cross-cluster moments $\{C_{hh'}: h\neq h'\}$ are redundant with, rather than informative beyond, the diagonal moments $\{C_{hh}\}$, so discarding them costs nothing. This is a property of the projection geometry alone --- it does not require homoskedasticity or any restriction on $\{\Sigma_g\}$ --- and it fails precisely when the mechanisms of Section~\ref{sec:motivation} (rich controls, shared latent factors, dense non-nested fixed effects) place genuine identifying information for $V$ in the off-diagonal blocks.

\begin{lemma}[Exact equivalence of full and partial Riesz systems]\label{lem:part-exact}
Suppose Assumption~\ref{asmp:A5} holds and there exists $D_n^{\mathrm{diag}}\in\mathcal K_n^{\mathrm{diag}}$ such that $(\mathcal A_n^{\mathrm{diag}})^\ast D_n^{\mathrm{diag}} = R_n^\star$. Then for any covariance array $\Sigma_n$,
    \[ \big\langle \mathcal A_n(\Sigma_n), D_n^\star \big\rangle_{\mathcal K_n} = \big\langle \mathcal A_n^{\mathrm{diag}}(\Sigma_n), D_n^{\mathrm{diag}} \big\rangle_{\mathcal K_n^{\mathrm{diag}}}. \]
In particular, for $C_n := \mathcal A_n(\Sigma_n)$ and $C_n^{\mathrm{diag}} := \mathcal A_n^{\mathrm{diag}}(\Sigma_n)$,
    \[ \mathrm{Var}(a'\widehat\beta \mid X, W) = \big\langle C_n, D_n^\star\big\rangle_{\mathcal K_n} = \big\langle C_n^{\mathrm{diag}}, D_n^{\mathrm{diag}}\big\rangle_{\mathcal K_n^{\mathrm{diag}}}. \]
\end{lemma}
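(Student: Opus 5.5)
The plan is to route both sides of the identity through the adjoint relations and Lemma~\ref{lem:riesz}, so that each inner product collapses to $\langle \Sigma_n, R_n^\star\rangle_{\mathcal H_n}$. No comparison between the weights $D_n^\star$ and $D_n^{\mathrm{diag}}$ themselves is needed; in general they differ.

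\emph{Full side.} By Assumption~\ref{asmp:A5}(i), $R_n^\star\in\mathrm{range}(\mathcal A_n^\ast)$. Hence the minimum-norm representer $D_n^\star=(\mathcal A_n^\ast)^+R_n^\star$ of Definition~\ref{def:minnorm} satisfies $\mathcal A_n^\ast D_n^\star=R_n^\star$ exactly, not merely in the least-squares sense. This is the one place where exact solvability matters. Lemma~\ref{lem:adjoint} then gives
\[
\langle \mathcal A_n(\Sigma_n),D_n^\star\rangle_{\mathcal K_n}=\langle \Sigma_n,\mathcal A_n^\ast D_n^\star\rangle_{\mathcal H_n}=\langle\Sigma_n,R_n^\star\rangle_{\mathcal H_n}.
\]

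\emph{Diagonal side.} First I identify $\mathcal K_n^{\mathrm{diag}}$ with the subspace of $\mathcal K_n$ consisting of arrays whose off-diagonal blocks vanish, carrying the restricted inner product $\frac{1}{n^2}\sum_h\mathrm{tr}(C_{hh}'D_{hh})$. Under this identification $P_{\mathrm{diag}}$ is the orthogonal projection onto that subspace, and it is self-adjoint. For $D\in\mathcal K_n^{\mathrm{diag}}$ this yields
\[
\langle P_{\mathrm{diag}}\mathcal A_n\Sigma_n,D\rangle=\langle\mathcal A_n\Sigma_n,D\rangle_{\mathcal K_n}=\langle\Sigma_n,\mathcal A_n^\ast D\rangle_{\mathcal H_n}.
\]
I then check that $\mathcal A_n^\ast D$ for such block-diagonal $D$ agrees with the stated block formula $\sum_h M_{gh}D_{hh}M_{hg}$; this follows directly from \eqref{eq:Astar-def} with $D_{hh'}=0$ for $h\neq h'$. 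So $(\mathcal A_n^{\mathrm{diag}})^\ast=\mathcal A_n^\ast|_{\mathcal K_n^{\mathrm{diag}}}$. Using the hypothesis $(\mathcal A_n^{\mathrm{diag}})^\ast D_n^{\mathrm{diag}}=R_n^\star$, the diagonal inner product also equals $\langle\Sigma_n,R_n^\star\rangle_{\mathcal H_n}$. Equating the two sides proves the first display.

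\emph{Second display.} By Lemma~\ref{lem:riesz}, $\langle\Sigma_n,R_n^\star\rangle_{\mathcal H_n}=\Phi(\Sigma_n)=V$, which is the conditional variance $\mathrm{Var}(a'\widehat\beta\mid X,W)$ from \eqref{eq:oracle-var}. Setting $C_n=\mathcal A_n(\Sigma_n)$ and $C_n^{\mathrm{diag}}=P_{\mathrm{diag}}C_n$ then gives the second display immediately.

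The only real obstacle is bookkeeping: I have to verify that the inner product on $\mathcal K_n^{\mathrm{diag}}$ is exactly the restriction of $\langle\cdot,\cdot\rangle_{\mathcal K_n}$, with the same $1/n^2$ normalization. If it were not, the stated adjoint formula would acquire a scaling factor.
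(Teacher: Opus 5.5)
Your proof is correct and matches the paper's argument: both inner products are pushed through the adjoint identities for $\mathcal A_n$ and for $\mathcal A_n^{\mathrm{diag}} = P_{\mathrm{diag}}\circ\mathcal A_n$ down to $\langle \Sigma_n, R_n^\star\rangle_{\mathcal H_n}$, which equals $V$ by Lemma~\ref{lem:riesz}. Your explicit check that $(\mathcal A_n^{\mathrm{diag}})^\ast$ is the restriction of $\mathcal A_n^\ast$ to block-diagonal arrays is a useful addition, and it also shows that the hypothesis alone already gives $R_n^\star = \mathcal A_n^\ast D_n^{\mathrm{diag}} \in \mathrm{range}(\mathcal A_n^\ast)$, so your appeal to Assumption~\ref{asmp:A5}(i) is not actually needed (that assumption only holds for sufficiently large $n$).
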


Thus, whenever~\eqref{eq:part-Riesz} is solvable, the partial and full Riesz representations of the variance coincide exactly at fixed $n$. This includes, for example, block-diagonal projections $M$ (no cross-cluster residual leakage), in which case all off-diagonal cross-moments are identically zero and $R_n^\star$ automatically lies in $\mathrm{Range}\big((\mathcal A_n^{\mathrm{diag}})^\ast\big)$.

\paragraph{First-order equivalence.}

Beyond exact equality, in more general designs the partial system may fail to reproduce $R_n^\star$ exactly: certain components of $R_n^\star$ may be driven primarily by off-diagonal residual interactions and therefore be difficult or impossible to represent using diagonal weights alone. In such cases the full Riesz system remains well defined, while the diagonal restriction~\eqref{eq:part-Riesz} must be interpreted as an approximation.

Nevertheless, the partial system can still deliver a \emph{first-order equivalent} Riesz estimator when it reproduces the variance-relevant components of $R_n^\star$ at the scale governing studentization and when the associated diagonal weights remain well behaved. Formally, this requires that the diagonal adjoint approximates the full Riesz representer up to an asymptotically negligible error:
\begin{assumption}[Partial approximation] \label{asmp:PartApprox}
\begin{equation} \label{eq:PartApprox}
    \big\| (\mathcal A_n^{\mathrm{diag}})^\ast D_{n}^{\mathrm{diag}} - R_n^\star \big\|_{\mathcal H_n} = o_p \Big( \mu_n^{-1} \! \sqrt{\tfrac{n}{\sup_g N_g}} \Big), \qquad \|D_n^{\mathrm{diag}}\|_{\mathcal K_n} = O_p\!\Big(\sqrt{\tfrac{\sup_g N_g}{n}} \Big).
\end{equation}
\end{assumption}

The first norm bound guarantees that $(\mathcal A_n^{\mathrm{diag}})^\ast D_{n}^{\mathrm{diag}}$ approximates the full Riesz representer at the $\mu_n^{-1}$ scale that determines the first-order behavior of the studentized statistic. Importantly, this condition is automatically implied by closeness of the weights in the $\mathcal K_n$ geometry. Indeed, the bound holds whenever\footnote{ Since $\mathcal A_n^{\mathrm{diag}} = P_{\mathrm{diag}} \circ \mathcal A_n$ with $P_{\mathrm{diag}}$ an orthogonal projection, we have $(\mathcal A_n^{\mathrm{diag}})^\ast = \mathcal A_n^\ast \circ P_{\mathrm{diag}}$ and $\|(\mathcal A_n^{\mathrm{diag}})^\ast \|_{\mathrm{op}} \le \|\mathcal A_n^\ast \|_{\mathrm{op}} \, \|P_{\mathrm{diag}}\|_{\mathrm{op}} \le 1$. Therefore, $\big\|(\mathcal A_n^{\mathrm{diag}})^\ast D_n^{\mathrm{diag}} - R_n^\star \big\|_{\mathcal H_n} = \big\| (\mathcal A_n^{\mathrm{diag}})^\ast (D_n^{\mathrm{diag}} - D_n^\star) \big\|_{\mathcal H_n} \le \|D_n^{\mathrm{diag}} - D_n^\star\|_{\mathcal K_n}$.}
    \begin{equation} \label{eq:PartApprox-suff}
        \|D_n^{\mathrm{diag}} - D_n^\star\|_{\mathcal K_n} = o_p \Big( \mu_n^{-1} \! \sqrt{\tfrac{n}{\sup_g N_g}} \Big).
    \end{equation}

The second norm bound on $D_n^{\mathrm{diag}}$ is mild and auxiliary: it controls the second-order sampling variability of the partial estimator and ensures that empirical cross-moment fluctuations do not dominate the studentized statistic. It does not restrict the form of the partial estimator and does not require the partial system to be fully identified. As a sufficient condition, it is implied by a diagonal analogue of Assumption~\ref{asmp:A5}---namely, restricted invertibility and stability of $(\mathcal A_n^{\mathrm{diag}})^\ast$ on its range---but such a condition is not imposed here.

\begin{lemma}[First-order equivalence of full and partial Riesz estimators] \label{lem:part-firstorder}
Let $D_n^\star$ be the minimum-norm solution to $\mathcal A_n^\ast D = R_n^\star$, and let $D_n^{\mathrm{diag}}\in\mathcal K_n^{\mathrm{diag}}$ satisfy Assumption~\ref{asmp:PartApprox}. Define
    \[ \widehat V_{\mathrm{Riesz},n} := \langle \widehat C_n, D_n^\star\rangle_{\mathcal K_n}, \qquad \widehat V_{\partial,n} := \langle \widehat C_n^{\mathrm{diag}}, D_n^{\mathrm{diag}}\rangle_{\mathcal K_n^{\mathrm{diag}}}, \]
where $\widehat C_n^{\mathrm{diag}} = P_{\mathrm{diag}}(\widehat C_n)$. Under Assumptions~\ref{asmp:A1}--\ref{asmp:A5} and~\ref{asmp:PartApprox},
    \[ \widehat V_{\partial,n} - \widehat V_{\mathrm{Riesz},n} = o_p(\mu_n^{-1}), \qquad \mu_n\big(\widehat V_{\partial,n} - \widehat V_{\mathrm{Riesz},n}\big) = o_p(1). \]
Hence $\widehat V_{\partial,n}$ and $\widehat V_{\mathrm{Riesz},n}$ share the same probability limit and induce the same asymptotic distribution for the studentized statistic $\widehat V^{-1/2} a' (\widehat\beta_n - \beta)$.
\end{lemma}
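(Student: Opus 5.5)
The plan is to write the difference of the two estimators as a single inner product in $\mathcal K_n$ and split it into a mean part and a fluctuation part. Since $P_{\mathrm{diag}}$ is an orthogonal projection on $\mathcal K_n$ and $D_n^{\mathrm{diag}}$ is supported on diagonal blocks, $\langle \widehat C_n^{\mathrm{diag}}, D_n^{\mathrm{diag}}\rangle_{\mathcal K_n^{\mathrm{diag}}} = \langle \widehat C_n, D_n^{\mathrm{diag}}\rangle_{\mathcal K_n}$, viewing $D_n^{\mathrm{diag}}$ as an element of $\mathcal K_n$ with zero off-diagonal blocks. Hence
\[
\widehat V_{\partial,n}-\widehat V_{\mathrm{Riesz},n} = \langle \widehat C_n, \Delta_n\rangle_{\mathcal K_n}, \qquad \Delta_n := D_n^{\mathrm{diag}}-D_n^\star .
\]
I would then decompose $\widehat C_n = C_n + (\widehat C_n - C_n)$, with $C_n=\mathcal A_n(\Sigma_n)$, and bound the two resulting terms separately.

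\emph{Mean term.} By Lemma~\ref{lem:adjoint}, $\langle C_n,\Delta_n\rangle_{\mathcal K_n} = \langle \Sigma_n, \mathcal A_n^\ast \Delta_n\rangle_{\mathcal H_n}$. Using $(\mathcal A_n^{\mathrm{diag}})^\ast = \mathcal A_n^\ast\circ P_{\mathrm{diag}}$ and $\mathcal A_n^\ast D_n^\star = R_n^\star$ (Assumption~\ref{asmp:A5}(i)), we get $\mathcal A_n^\ast\Delta_n = (\mathcal A_n^{\mathrm{diag}})^\ast D_n^{\mathrm{diag}} - R_n^\star$. Cauchy--Schwarz then gives
\[
|\langle C_n,\Delta_n\rangle| \le \|\Sigma_n\|_{\mathcal H_n}\,\big\|(\mathcal A_n^{\mathrm{diag}})^\ast D_n^{\mathrm{diag}} - R_n^\star\big\|_{\mathcal H_n}.
\]
It remains to show $\|\Sigma_n\|_{\mathcal H_n} = O_p(\sqrt{\sup_g N_g/n})$. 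This follows from $\|\Sigma_g\|_F^2 \le N_g^2 \sup_{g,i}\mathbb E u_{gi}^2$ (Assumption~\ref{asmp:A2}), which yields $\|\Sigma_n\|_{\mathcal H_n}^2 \le n^{-2}\sum_g N_g^2 \cdot O(1) \le O(\sup_g N_g/n)$. Combined with the first part of Assumption~\ref{asmp:PartApprox}, the mean term is $o_p(\mu_n^{-1})$.

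\emph{Fluctuation term.} Here I would bound $\langle \widehat C_n - C_n, D_n^{\mathrm{diag}}\rangle$ and $\langle \widehat C_n - C_n, D_n^\star\rangle$ separately, since each is already $o_p(\mu_n^{-1})$ given the norm bounds. The second of these is exactly the stochastic term controlled in the proof of Theorem~\ref{thm:riesz-t}: that theorem's consistency $\widehat V_{\mathrm{Riesz}}/V\to_p 1$ together with $\mu_n V\to\nu_a$ gives $\langle\widehat C_n - C_n, D_n^\star\rangle = o_p(\mu_n^{-1})$, with $\|D_n^\star\|_{\mathcal K_n}\le c\|R_n^\star\|_{\mathcal H_n} = O_p(\sqrt{\sup_g N_g/n})$ by Assumption~\ref{asmp:A5}(ii). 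For the diagonal term I would replay the same argument with $D_n^\star$ replaced by $D_n^{\mathrm{diag}}$, which satisfies the same norm bound by the second part of Assumption~\ref{asmp:PartApprox}. Concretely, the conditional variance of the quadratic form $\widehat U' \mathbf D \widehat U/n^2$, where $\mathbf D$ is the $n\times n$ matrix assembled from the blocks, is bounded by cluster independence and fourth moments. Writing $\widehat U = MU$ and $\|M\|_{\mathrm{op}}\le 1$, it suffices to bound $\mathrm{Var}(U'\widetilde{\mathbf D}U)$ with $\widetilde{\mathbf D}=M\mathbf D M$. This variance is of order $\sup_g N_g^2\,\|\widetilde{\mathbf D}\|_F^2$ plus block-diagonal terms. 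Using the truncation / $(4+2\lambda)$-moment argument and Assumption~\ref{asmp:A3}, the result is $o_p(\mu_n^{-1})$ whenever $\|\mathbf D\|_{\mathcal K_n} = O_p(\sqrt{\sup_g N_g/n})$.

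The main obstacle is this last step. The quadratic-form variance bound must be stated uniformly in the weight matrix and depend only on its $\mathcal K_n$-norm, not on the particular structure of $D_n^\star$, so that it transfers verbatim to $D_n^{\mathrm{diag}}$. I would first try to extract such a lemma from the proof of Theorem~\ref{thm:riesz-t}, and otherwise prove it directly by a conditional Chebyshev argument, splitting within-cluster and cross-cluster terms of $U'\widetilde{\mathbf D}U$. Once both terms are $o_p(\mu_n^{-1})$, the conclusion $\mu_n(\widehat V_{\partial,n}-\widehat V_{\mathrm{Riesz},n})=o_p(1)$ follows. The shared probability limit and the identical asymptotic distribution of the studentized statistic then follow from Theorem~\ref{thm:riesz-t}, since $\mu_n V\to\nu_a>0$ and Slutsky's lemma applies.
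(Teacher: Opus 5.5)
Your proposal is correct and follows essentially the same route as the paper. The paper uses the same split of $\langle \widehat C_n, D_n^{\mathrm{diag}}-D_n^\star\rangle_{\mathcal K_n}$ into two terms:

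\begin{itemize}
\item a mean term, handled by the adjoint identity, $(\mathcal A_n^{\mathrm{diag}})^\ast=\mathcal A_n^\ast\circ P_{\mathrm{diag}}$, and Cauchy--Schwarz with $\|\Sigma_n\|_{\mathcal H_n}=O_p(\sqrt{\sup_g N_g/n})$;
\item a fluctuation term, handled by conditional Chebyshev with a second-moment bound of order $(\sup_g N_g)^2 n^{-2}\|D\|_{\mathcal K_n}^2$. This bound depends on the weights only through their $\mathcal K_n$-norm, which is exactly the uniform bound you identified as the key step.
\end{itemize}

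The differences are cosmetic. The paper applies the bound once to the difference via $\|\widetilde D_n^\Delta\|_F^2\le 2\|\widetilde D_n^{\mathrm{diag}}\|_F^2+2\|\widetilde D_n^\star\|_F^2$, rather than treating $D_n^\star$ and $D_n^{\mathrm{diag}}$ separately. It also needs only fourth moments, with no truncation, because $\mu_n^2(\sup_g N_g/n)^3\to 0$ already follows from Assumption~\ref{asmp:A3} when $\lambda\le 2$.
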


Importantly, Assumption~\ref{asmp:PartApprox} is \emph{not} required for validity of the full Riesz estimator. Rather, it characterizes regimes in which diagonal restrictions are asymptotically adequate. When this approximation fails---most notably when the residual projection matrix exhibits substantial cross-cluster leakage---the partial system may discard first-order variance information, while the full Riesz system remains well posed and continues to deliver correct inference.

For reference, Table~\ref{tab:notation-part} fixes notation for the partial-Riesz objects used throughout the remainder of this section, since closely related quantities at different stages of approximation are easy to conflate on a first read.

\begin{table}[h]
\centering
\caption{Notation for partial-Riesz objects (Section~\ref{sec:partial})}
\label{tab:notation-part}
\begin{tabular}{ll}
\toprule
Symbol & Meaning \\
\midrule
$D_n^\star$ & Minimum-norm \emph{full} Riesz representer, $\mathcal A_n^\ast D_n^\star = R_n^\star$ \\
$D_n^{\mathrm{diag}}$ & Any representer satisfying Assumption~\ref{asmp:PartApprox} (exact or first-order) \\
$d_{\mathrm{full}}, d_{\mathrm{part}}$ & Vectorized full / diagonal weight arrays (Section~\ref{subsec:matrix-rank}) \\
$d^{\mathrm{diag}}_i$ & Exact scalar diagonal Riesz weight, $d = (M\!\odot\! M)^{-1}r$ \\
$d_0$ (CRVE) & Block-diagonal approximation $d_0 = B_0^{-1} r$, ignoring cross-cluster blocks $E$ \\
$d^{HC2}, d^{HC3}$ & Classical leverage-correction weights, $(1-h_{ii})^{-1}$ and $(1-h_{ii})^{-2}$ \\
\bottomrule
\end{tabular}
\end{table}

\subsection{Matrix representations and rank comparison} \label{subsec:matrix-rank}
To make the relationship between the full and partial Riesz systems more explicit, it is useful to write both adjoints in matrix form. Fix an ordering of blocks and define
   \[ r := \operatorname{vec}(R_n^\star) := \begin{pmatrix} \operatorname{vec} (\widetilde X_1 Q \widetilde X'_1) \\[-2pt] \vdots \\[-2pt] \operatorname{vec} (\widetilde X_G Q \widetilde X'_G) \end{pmatrix} \in \mathbb R^{\sum_g N_g^2}, \]
and weight vectors
    \[ d_{\mathrm{full}} := \begin{pmatrix} \operatorname{vec} (D_{11}) \\[-2pt] \operatorname{vec} (D_{12})\\[-2pt] \vdots\\[-2pt] \operatorname{vec}(D_{GG}) \end{pmatrix} \in \mathbb R^{\sum_{h,h'} N_h N_{h'}},
    \qquad
    d_{\mathrm{part}} := \begin{pmatrix} \operatorname{vec}(D_{11})\\[-2pt] \vdots \\[-2pt] \operatorname{vec}(D_{GG}) \end{pmatrix} \in \mathbb R^{\sum_h N_h^2}. \]

Using $\operatorname{vec}(ABC) = (C' \otimes A) \operatorname{vec} (B)$, the full adjoint satisfies
    \[ \operatorname{vec} \big( \big[ \mathcal A_n^\ast(D) \big]_g \big) = \sum_{h=1}^G \sum_{h'=1}^G (M_{gh'}\otimes M_{gh}) \operatorname{vec}(D_{hh'}), \]
so that stacked over $g$ we obtain
    \[ r = H_{\mathrm{full}} \, d_{\mathrm{full}}, \]
where $H_{\mathrm{full}}$ is the block matrix with $(g,(h,h'))$ block
    \[ H_{\mathrm{full}}(g;h,h') := M_{gh'} \otimes M_{gh} \in \mathbb R^{N_g^2 \times N_h N_{h'}}. \]

Similarly, the partial adjoint satisfies
    \[ \operatorname{vec} \big( \big[ (\mathcal A_n^{\mathrm{diag}})^\ast (D^{\mathrm{diag}}) \big]_g \big) = \sum_{h=1}^G (M_{gh} \otimes M_{gh}) \operatorname{vec}(D_{hh}), \]
so that
    \[ r = H_{\mathrm{part}}\, d_{\mathrm{part}}, \qquad H_{\mathrm{part}}(g;h) := M_{gh} \otimes M_{gh} \in \mathbb R^{N_g^2 \times N_h^2}. \]
Thus
    \[ \mathcal A_n^\ast \; \longleftrightarrow \; H_{\mathrm{full}}, \qquad (\mathcal A_n^{\mathrm{diag}})^\ast \; \longleftrightarrow \; H_{\mathrm{part}}, \]
and
    \[ \mathrm{Range} (\mathcal A_n^\ast) = \mathrm{col} (H_{\mathrm{full}}), \qquad \mathrm{Range} \big( (\mathcal A_n^{\mathrm{diag}})^\ast \big)  = \mathrm{col}(H_{\mathrm{part}}). \]

By construction, $H_{\mathrm{part}}$ is obtained from $H_{\mathrm{full}}$ by retaining only the columns corresponding to diagonal blocks $D_{hh}$. Hence it always hold that
    \[ \mathrm{col} (H_{\mathrm{part}}) \subseteq \mathrm{col}(H_{\mathrm{full}}), \qquad \operatorname{rank} \big( (\mathcal A_n^{\mathrm{diag}})^\ast \big) \le \operatorname{rank}(\mathcal A_n^\ast). \]

Whether this inclusion is strict depends on the projection geometry.
In scalar designs ($N_g=1$), the two column spaces may coincide under a mild richness condition on the rows of the projection factor $Q$, in which case the diagonal restriction does not discard any variance-relevant directions. In contrast, in clustered designs, the inclusion is generically strict when the cross-cluster projection blocks $\{M_{gh}:g\neq h\}$ carry non-negligible mass. In such settings, the off-diagonal columns of $H_{\mathrm{full}}$ generate additional directions that cannot be reproduced by diagonal blocks alone, reflecting the extra degrees of freedom provided by off-diagonal weights $\{D_{hh'}:h\neq h'\}$.

\paragraph{Independent-observations case: when do full and partial systems coincide?}

Consider the scalar-cluster design ($G=n$, $N_g=1$), so that all covariance and residual blocks are scalars. In this case, the projection matrix admits the factorization $M=QQ'$, where $Q\in\mathbb R^{n\times m}$ has orthonormal columns and $m=\operatorname{rank}(M)$. Let $q_i'\in\mathbb R^m$ denote the $i$th row of $Q$, so that
    \[ M_{ij}=q_i' q_j. \]

The columns of the partial and full adjoint systems admit a simple geometric interpretation. Specifically,
    \[ H_{\mathrm{part}}(i;j)=M_{ij}^2=(q_i' q_j)^2, \qquad H_{\mathrm{full}}(i;j,j')=M_{ij}M_{ij'}=(q_i'q_j)(q_i'q_{j'}), \]
which correspond, respectively, to evaluating the quadratic form $q\mapsto (q'q_j)^2$ and the mixed quadratic form $q\mapsto (q'q_j)(q'q_{j'})$ at $q=q_i$. Thus, both systems generate vectors of the form $i\mapsto q_i' S q_i$, where $S$ ranges over symmetric matrices. The distinction lies in the set of admissible quadratic forms: the partial system is restricted to rank-one forms $S=q_j q_j'$, while the full system additionally allows mixed forms generated by pairs $(q_j,q_{j'})$.

\begin{assumption}[Richness] \label{cond:richness}
    \[ \mathrm{span}\{q_j q_j'\}_{j=1}^n = \mathbb S^m. \]
\end{assumption}

Condition~\ref{cond:richness} requires that the rank-one forms generated by the rows of $Q$ span the entire space of $m\times m$ symmetric matrices. When it holds, every mixed quadratic form $q\mapsto (q'q_j)(q'q_{j'})$ can be written as a linear combination of squared forms $q\mapsto (q'q_k)^2$, so the additional columns in $H_{\mathrm{full}}$ generate no new directions beyond those already spanned by $H_{\mathrm{part}}$, and
    \[ \operatorname{col}(H_{\mathrm{full}})=\operatorname{col}(H_{\mathrm{part}}), \qquad \operatorname{rank}(H_{\mathrm{full}})=\operatorname{rank}(H_{\mathrm{part}}). \]

Rank equality, however, is \emph{not automatic}. If the rows $\{q_i\}$ are highly structured or concentrated on a small number of directions, the squared forms $\{(q_i' q_j)^2\}$ may fail to span all mixed quadratic interactions. In such cases, the full system can generate variance-relevant directions that are unavailable under the diagonal restriction, and $H_{\mathrm{full}}$ may have strictly larger column rank than $H_{\mathrm{part}}$.

The richness condition above holds generically once $n \ge \binom{m+1}{2}$, the dimension of $\mathbb S^m$. Indeed, $\mathrm{span}\{q_jq_j'\}_{j=1}^n \subsetneq \mathbb S^m$ if and only if the symmetric tensors $\{q_j q_j'\}_{j=1}^n$ are jointly contained in some proper linear subspace of $\mathbb S^m$ --- a closed, Lebesgue-null algebraic condition on the rows $\{q_j\}$ whenever $n\ge\binom{m+1}{2}$ and the $q_j$ are not algebraically constrained to lie on a common low-dimensional variety. Under any regressor distribution that is absolutely continuous with respect to Lebesgue measure on $\mathbb R^k$, the induced rows $q_j = Q'e_j$ admit a density on their support, so the failure set has probability zero. The condition can fail only when the regressors are deterministically structured --- for example, dummy-variable designs in which many rows $q_j$ are exactly repeated, concentrating $\{q_jq_j'\}$ on a small number of distinct directions (see the discussion of high-dimensional HC immediately below).

In summary, for independent observations, diagonal (HC-type) information is typically sufficient to recover the full Riesz representer. Exact equivalence between the full and partial systems, however, relies on the richness condition rather than holding universally, and the condition can fail in designs with repeated or near-repeated rows of $Q$.

\subsection{HC and CRVE as partial Riesz solutions/approximations} \label{subsec:HC-CRVE}

The partial Riesz system above provides a natural bridge between our general variance estimator and the classical HC and CRVE families. In this subsection, we show that these familiar estimators arise as special \emph{diagonal} or block-diagonal weighting schemes within the Riesz framework --- concrete instances of the exact and first-order equivalence results of Lemmas~\ref{lem:part-exact}--\ref{lem:part-firstorder} --- and we clarify when such restrictions yield exact solutions and when they operate only as approximations.

\subsubsection{High-dimensional HC as scalar-cluster partial Riesz solution}

When each cluster contains a single observation ($G=n$, $N_g=1$), the covariance blocks reduce to scalars $\{\sigma_i^2\}_{i=1}^n$, and the residual cross-moments satisfy
    \[ c_j \;=\; \sum_{i=1}^{n} M_{ji}^2 \sigma_i^2. \]
Under the diagonal moment restriction, the partial Riesz system \eqref{eq:Khatri-Rao} collapses to the inverse--Hadamard system
    \[ \tilde x_i' Q_n \tilde x_i \;=\; \sum_{j=1}^n M_{ij}^2 d_j \quad \Longleftrightarrow \quad (M\!\odot\! M)\, d = r, \]
where $M\!\odot\!M$ denotes the Hadamard square of the projection matrix $M$, $d_j$ is the partial Riesz weight attached to observation $j$, and $r_i=\tilde x_i'Q_n\tilde x_i$. When $M\!\odot\!M$ is invertible, the exact solution is
    \begin{equation} \label{eq:inv-Hadamard}
        d = (M\!\odot\! M)^{-1} r, \qquad d_i = \sum_{j=1}^{n} \kappa_{ij} r_j, \qquad \kappa_{ij} = \big[(M\!\odot\! M)^{-1}\big]_{ij}.
    \end{equation}
And the resulting partial Riesz variance estimator is
    \[ \widehat V_{\partial} = \frac{1}{n^2} \sum_{i=1}^n d_i \hat u_i^2 = a' \Gamma^{-1} \left( \frac{1}{n^2} \sum_{i,j=1}^n \kappa_{ij}\, \tilde x_i \tilde x_i'\, \hat u_j^2 \right) \Gamma^{-1} a. \]

This estimator has the same inverse--Hadamard structure as that of \citet{cattaneo2018inference}. Their implementation constructs diagonal weights using $(M_W\!\odot\!M_W)^{-1}$, where $M_W$ partials out the high-dimensional controls $W$. In contrast, the natural projection matrix entering the Riesz system is, by the Frisch--Waugh--Lovell theorem, the full OLS residual-maker
    \[ M:=M_{\widetilde X} M_W = M_{[X,W]}. \]

In the scalar-cluster case, partial Riesz estimators constructed from different residual projections are nevertheless first-order equivalent. Appendix Lemma~\ref{applem:M-MW} shows that the full OLS residual-maker $M_{[X,W]}$ is a low-rank perturbation of $M_W$, implying
    \[ \| (M\!\odot\! M) - (M_W\!\odot\! M_W) \|_F^2 = o_p(n). \]
Under standard inverse--Hadamard stability conditions---namely, uniform boundedness of the operator norms of $(M\!\odot\!M)^{-1}$ and $(M_W\!\odot\!M_W)^{-1}$---this perturbation yields
    \[ \| d(M) - d(M_W) \|_2 = o_p(n), \]
where $d(M)$ and $d(M_W)$ denote the scalar Riesz weight vectors obtained using $M_{[X,W]}$ and $M_W$, respectively. Since in the scalar design the Riesz geometry satisfies $\|D\|_{\mathcal K_n} = n^{-1}\|d\|_2$, it follows that
    \[ \| D(M) - D(M_W) \|_{\mathcal K_n} = o_p(1), \]
which is equivalent to the sufficient partial-approximation condition \eqref{eq:PartApprox-suff} in the scalar-cluster case, where $\mu_n=n$ and $\sup_g N_g=1$.

Consequently, replacing the full OLS projection $M_{[X,W]}$ by the residual-maker $M_W$ is asymptotically negligible at the level of the variance functional in the scalar-cluster setting. The high-dimensional HC estimator of \citet{cattaneo2018inference} is therefore first-order equivalent to a particular scalar partial Riesz solution within the general Riesz framework, although the two need not coincide exactly in finite samples.

\subsubsection{Classical HC as diagonal approximation to partial Riesz}

The exact diagonal Riesz solution $d_i$ in \eqref{eq:inv-Hadamard} aggregates information from all observations through the inverse--Hadamard square of the OLS projection matrix. Under fixed-dimensional and well-conditioned designs, standard projection geometry implies $\max_i h_{ii}=O_p(n^{-1})$, so the inverse--Hadamard operator is asymptotically diagonally dominant. In this regime, the exact diagonal Riesz weight admits the expansion
    \[ d_i = \frac{r_i}{(1-h_{ii})^2} + O_p\!\left( n^{-1} \|r\|_1 \right), \qquad \|r\|_1 := \sum_{j=1}^n |r_j|.  \]
The leading term is governed by the local leverage correction $(1-h_{ii})^{-2}$, while the remainder captures projection spillovers induced by off-diagonal elements of $M$. A formal statement and proof are given in Appendix Lemma~\ref{applem:diag-expansion}.

Classical HC estimators arise by replacing the exact diagonal Riesz weight with progressively simpler diagonal approximations to this leading term. In particular,
\begin{align*}
    \text{HC2:} &\quad w_i = (1-h_{ii})^{-1},
        &\qquad& \text{first-order leverage correction},\\
    \text{HC3:} &\quad w_i = (1-h_{ii})^{-2},
        &\qquad& \text{second-order (diagonal Riesz) approximation},\\
    \text{HC4--HC5:} &\quad w_i = (1-h_{ii})^{-\delta_i},
        &\qquad& \text{inflated diagonal envelope}.
\end{align*}

HC3 exactly matches the leading diagonal component of the Riesz solution. Under our assumptions, $\|r\|_2 = O_p(\sqrt n)$, implying $\|r\|_1 \le \sqrt n\,\|r\|_2 = O_p(n)$. Consequently, the remainder term $n^{-1}\|r\|_1$ is generically $O_p(1)$, so that, at the \emph{weight level},
    \[ d^{HC3}_i - d_i^{\mathrm{diag}} = O_p(1) \quad\text{uniformly in } i, \]
and hence
    \[ \|d^{HC3} - d^{\mathrm{diag}}\|_2 = O_p(\sqrt n) = o_p(n). \]
Because the scalar-cluster Riesz geometry satisfies $\|D\|_{\mathcal K_n} = n^{-1}\|d\|_2$, this implies
   \[ \|D^{HC3} - D^{\mathrm{diag}}\|_{\mathcal K_n} = o_p(1), \]
the same sufficient condition~\eqref{eq:PartApprox-suff} verified for high-dimensional HC above. Therefore, although HC3 does not recover the exact diagonal Riesz weights, it is sufficiently close in the $\mathcal K_n$ geometry to yield a first-order equivalent variance estimator under fixed-dimensional asymptotics.

HC2 retains only the linear leverage correction and therefore differs from the leading diagonal Riesz term by an $O_p(n^{-1})$ amount since a Taylor expansion yields
    \[ (1-h_{ii})^{-2}=(1-h_{ii})^{-1}+h_{ii}+O(h_{ii}^2), \qquad h_{ii}=O_p(n^{-1}) \]
This difference at the weight level is also negligible so the HC2 estimator is first-order equivalent under fixed-dimensional asymptotics.

HC4--HC5 further inflate the diagonal leverage correction as conservative finite-sample safeguards against extreme leverage points. These estimators no longer aim to approximate the Riesz weights themselves; instead, they modify the variance estimator directly by amplifying leverage--variance interactions, without a direct correspondence to any Riesz representer.

\subsubsection{CRVEs as block-diagonal approximation to partial Riesz}

In clustered designs, the diagonal moment restriction leads to the block-diagonal partial Riesz system \eqref{eq:Khatri-Rao}. Writing $B$ with $(g,h)$ block $B_{gh}:=M_{gh}\otimes M_{gh}$, the system can be stacked as
    \[ B\, d = r . \]
Decompose $B=B_0+E$, where $(B_0)_{gg}=M_{gg}\otimes M_{gg}$ and $E_{gh}=B_{gh}$ for $g\neq h$. Provided that the block-diagonal mapping $B_0$ is uniformly stable in the sense that $ \| B_0^{-1} \|_{\mathrm{op}} = O_p(1)$, or equivalently $\sup_g \|M_{gg}^{-1}\|_{\mathrm{op}} = O_p(1)$, so that $d_0 = B_0^{-1} r$ is well defined, a standard perturbation argument yields
    \begin{equation}\label{eq:pert-bound}
        \| d - d_0 \|_2 \; \le \; \| (I+B_0^{-1} E)^{-1} \|_{\mathrm{op}} \, \| B_0^{-1} E \|_{\mathrm{op}} \, \| B_0^{-1} r \|_2.
    \end{equation}

In words, \eqref{eq:pert-bound} says that block-diagonal CRVE is reliable exactly when cross-cluster projection leakage is small relative to how invertible the within-cluster blocks are, scaled by how quickly the target variance itself shrinks: leakage that is large in absolute terms can still be harmless if it is small relative to $\mu_n^{-1}$, while even modest leakage becomes first-order when $\mu_n$ grows slowly or within-cluster leverage is itself close to singular.

In particular, if $\|B_0^{-1}E\|_{\mathrm{op}} = o_p(1)$, then $\| (I + B_0^{-1} E)^{-1} \|_{\mathrm{op}} = O_p(1)$ and
   \[ \|d - d^{(0)} \|_2 = O_p(1) \, \| B_0^{-1} E\|_{\mathrm{op}} \, \|r\|_2.\]
Since $\|B_0^{-1}r\|_2 \;\le\; \|B_0^{-1}\|_{\mathrm{op}}\,\|r\|_2$.  By  $\|r\|_2 = O_p (\sqrt{n \sup_g N_g})$ and $\|D\|_{\mathcal K_n} = n^{-1} \|d\|_2$, it follows that a sufficient condition for the sufficient partial-approximation condition  \eqref{eq:PartApprox-suff} is
    \[  \|B_0^{-1}E\|_{\mathrm{op}} = o_p\!\Big(\frac{n}{\mu_n \sup_g N_g} \Big). \]

To ensure both invertibility of the perturbed system and the required variance-level rate, we therefore impose the joint leakage condition
    \begin{equation} \label{eq:small-leakage}
          \|B_0^{-1}E\|_{\mathrm{op}} = o_p(1) \qquad \text{and} \qquad \|B_0^{-1}E\|_{\mathrm{op}} = o_p\!\Big(\frac{n}{\mu_n \sup_g N_g} \Big).
    \end{equation}
The first condition guarantees that the Neumann expansion underlying \eqref{eq:pert-bound} is well behaved, while the second ensures that the induced approximation error is negligible at the $\mu_n^{-1}$ scale governing studentization.

This condition is the cluster-level analogue of Failure 1: $\|B_0^{-1}E\|_{\mathrm{op}}$ measures exactly the cross-cluster projection leakage identified in Section~\ref{sec:motivation} as the source of projection spillovers, now normalized by within-cluster invertibility rather than by $(1-h_{ii})$. The three mechanisms generating non-negligible off-diagonal mass there --- many controls, shared latent factor directions, and dense connectivity in non-nested fixed effects --- are exactly the mechanisms that inflate $E$ and threaten~\eqref{eq:small-leakage} here.

A convenient sufficient (row-sum) condition implying this bound is
    \begin{equation}\label{eq:block-sparse-rate}
       \max_{g}\sum_{h\neq g} \big\|(M_{gg}^{-1}M_{gh})\otimes(M_{gg}^{-1}M_{gh})\big\|_{\mathrm{op}} \;=\; o_p(1) \quad \text{and} \quad o_p\!\Big(\frac{n}{\mu_n \sup_g N_g} \Big), \tag{Proj-Sparsity}
    \end{equation}
which requires the off-diagonal projection leakage, normalized by within-cluster leverage, to be negligible both absolutely and at the variance-relevant rate. Section~\ref{subsec:diagnostic} below gives a computable diagnostic for assessing this condition without forming $M$ explicitly.

Under this strengthened block-sparsity condition, the partial Riesz system decouples at first order across clusters, yielding the cluster-level approximation
    \[ \operatorname{vec}(D_{gg}) \approx (M_{gg}\otimes M_{gg})^{-1} \operatorname{vec}(\widetilde X_g Q_n \widetilde X_g') \quad\Longleftrightarrow\quad D_{gg} \approx M_{gg}^{-1}(\widetilde X_g Q_n \widetilde X_g')M_{gg}^{-1}. \]

Classical CRVEs arise by replacing the exact cluster-level inverse $M_{gg}^{-1}$ with progressively simpler leverage corrections. In particular, CR0 and CR1 ignore leverage adjustments (up to degrees-of-freedom scaling), CR2 corresponds to the approximation $M_{gg}^{-1} \approx (I_{N_g}-H_{gg})^{-1/2}$, and CR3 corresponds to using the full inverse $M_{gg}^{-1}=(I_{N_g}-H_{gg})^{-1}$. Here $H_{gg}=\widetilde X_g(\widetilde X'\widetilde X)^{-1}\widetilde X_g'$ denotes the $(g,g)$ block of the global hat matrix, so that $M_{gg} = I_{N_g} - H_{gg}$. Substituting these block-diagonal approximations into \eqref{eq:Vhat-Riesz} reproduces the familiar CR0--CR3 formulas.

However, the OLS projection matrix $M$ is generally not block diagonal across clusters---even under fixed-dimensional regressions. In the scalar case, the two rates coincide as $\mu_n = n$ and $N_g = 1$ for all $g$, and projection identities imply $\sum_{j\neq i}M_{ij}^2=O_p(n^{-1})$, and hence $o_p(1)$, automatically under fixed-dimensional asymptotics,
so the inverse--Hadamard system is diagonally dominant without further restrictions.
In contrast, with multi-unit clusters, fixed dimensionality alone does not control the normalized aggregate mass of the off-diagonal Khatri--Rao blocks $(M_{gg}^{-1} M_{gh}) \otimes  (M_{gg}^{-1} M_{gh})$. Additional projection-sparsity or rate conditions are therefore required for block-diagonal CRVEs to approximate the partial Riesz solution.

When such projection-sparsity conditions fail, block-diagonal CRVEs may discard first-order variance information contained in the diagonal moment equations through cross-cluster projection leakage. In contrast, the exact partial Riesz estimator remains valid for the diagonal information set by construction, and the full Riesz estimator remains valid without any diagonal restriction.

\subsection{A practical diagnostic for projection sparsity} \label{subsec:diagnostic}

The validity of block-diagonal CRVE approximations hinges on projection sparsity, i.e.\ the negligibility of the normalized off-diagonal Khatri--Rao blocks $\{(M_{gg}^{-1} M_{gh}) \otimes (M_{gg}^{-1} M_{gh}): g\neq h\}$. The $\mu_n$-free row-sum bound in~\eqref{eq:block-sparse-rate} above already gives a sufficient condition that does not require knowing the normalization rate $\mu_n$; what remains is making that bound computable in practice.

Although the projection matrix $M$ is typically too large to inspect directly, projection sparsity can be assessed indirectly using computable diagnostics. In particular, large off-diagonal leverage patterns---such as high average or maximum values of $\|M_{gg}^{-1/2}M_{gh}\|_F^2$ across $h\neq g$---signal substantial cross-cluster projection spillovers. Empirically, such spillovers are likely in designs with overlapping or non-nested fixed effects, global regressors, or weak common factors. We apply this diagnostic to the colonial governors application in Section~\ref{sec:application}, where several dyadic clusters exhibit exactly this pattern.

\subsection{The existence hierarchy} \label{subsec:hierarchy}

The results of this section order the full Riesz estimator and its classical restrictions by the strength of the condition each requires to exist and remain first-order valid:

\begin{itemize}
    \item \textbf{Full Riesz.} Exists and is asymptotically valid under Assumption~\ref{asmp:A5} alone---exact range membership of $R_n^\star$ in $\mathrm{Range}(\mathcal A_n^\ast)$, the weakest condition in the hierarchy, since $\mathcal A_n^\ast$ uses the entire residual moment array.
    \item \textbf{Partial (diagonal) Riesz.} Exists exactly only under the strictly stronger range condition~\eqref{eq:part-exact}, or first-order only under Assumption~\ref{asmp:PartApprox}; both fail generically once cross-cluster projection mass is non-negligible and informative (Failure 2 of Section~\ref{sec:motivation}).
    \item \textbf{Block-diagonal CRVE (CR2/CR3).} Requires, in addition, that each within-cluster block $M_{gg}$ be invertible and that the projection-sparsity condition~\eqref{eq:block-sparse-rate} hold---strictly stronger still, and the condition that fails in Section~\ref{sec:application}'s nested-dyad clusters, where $M_{gg}$ is singular and CR2/CR3 are undefined.
\end{itemize}

Each restriction in this hierarchy trades robustness for computational simplicity by discarding part of the residual moment array; the conditions under which that trade is costless are precisely those characterized by Lemmas~\ref{lem:part-exact}--\ref{lem:part-firstorder} and the projection-sparsity condition~\eqref{eq:block-sparse-rate} above. The full Riesz estimator is the unique member of the hierarchy that requires none of these additional restrictions, which is why it remains valid in Section~\ref{sec:application} when CR2 and CR3 are not.
\section{Simulations} \label{sec:simulation}
This section evaluates the finite-sample performance of the proposed Riesz variance estimator against classical alternatives in designs that reflect the three sources of projection spillovers identified in Section~\ref{sec:motivation}: (i) many controls; and (ii) shared latent factor directions.The simulation designs draw on \citet{cattaneo2018inference} and \citet{djogbenou2019asymptotic}.


\subsection{Data Generating Process}

Following \citet{djogbenou2019asymptotic}, we allocate $n$ observations among $G$ clusters according to
    \[ N_g = \left[ \frac{n \exp(\lambda g/G)}{\sum_{j=1}^{G} \exp(\lambda j/G)} \right], \qquad g=1,\ldots,G-1, \qquad N_G = n - \sum_{g=1}^{G-1} N_g, \]
where $[\cdot]$ denotes the integer part of the argument and $\lambda\ge 0$ controls size imbalance. When $\lambda=0$ and $n/G$ is an integer, $N_g=n/G$ for all $g$; as $\lambda$ increases, cluster sizes become more unequal.

For unit $i$ in cluster $g$, the outcome is generated as
    \[ y_{gi} = x_{gi}\beta + w_{gi}'\gamma + u_{gi}, \]
where $\beta$ is the scalar parameter of interest, $\gamma \in \mathbb R^{\ell}$ collects nuisance coefficients, and $w_{gi} \in \mathbb R^{\ell}$ includes an intercept. We set $\beta=1$ and $\gamma = 0 \cdot \mathbf 1_{\ell}$.

Within-cluster dependence is induced by a common cluster shock:
    \[ u_{gi} = \sqrt{\rho_u} \, \varepsilon_g + \sqrt{1-\rho_u} \, \varepsilon_{gi}, \qquad \rho_u \in [0,1], \qquad \varepsilon_g \stackrel{iid}{\sim} \mathcal N(0,1), \]
with the idiosyncratic component
    \[ \varepsilon_{gi} \mid X,W \stackrel{iid}{\sim} \mathcal N \! \left( 0, \sigma_\varepsilon^2 (x_{gi},w_{gi}) \right). \]
Define $t(a)=a$ if $|a|\le 2$ and $t(a)=2\,\mathrm{sgn}(a)$ otherwise. For $\vartheta\in\{0,1\}$ (0 = homoskedastic, 1 = moderate heteroskedasticity), let
    \[ \sigma^2_{\varepsilon} (x_{gi},w_{gi}) = \kappa_{\varepsilon} \left[ 1 + \big(t(x_{gi}) + \iota'w_{gi}\big)^2 \right]^{\vartheta}, \]
where $\iota$ is a conformable vector of ones. The scaling constant $\kappa_{\varepsilon}$ is chosen so that $\mathrm{var} (\varepsilon_{gi})\approx 1$ unconditionally, ensuring that differences in estimator performance are not driven by trivial changes in noise scale.

\subsubsection*{Design 1: High-dimensional controls (HD)}

This design isolates the high-dimensional channel $\ell/n\to\alpha$ in the spirit of \citet{cattaneo2018inference}. Let $\ell=\ell(n)$ grow with $n$\footnote{More precisely, $\ell$ should scale with the effective sample size $\mu_n$, which depends on $n$, $G$, $\{N_g\}$, and within-cluster variation, and is bounded by $n/\sup_g N_g$.}. Fix $\rho_w\in[0,1]$ and generate controls as
    \[ w_{gi} = \big( 1, w_{gi}^{(2)}, \ldots, w_{gi}^{(\ell)} \big)', \qquad w_{gi}^{(j)} = \sqrt{\rho_w} \, z_g^{(j)} + \sqrt{1-\rho_w} \, z_{gi}^{(j)}, \quad j\ge 2,  \]
where $\{z_g^{(j)}, z_{gi}^{(j)}\}$ are i.i.d.\ across $(g,i,j)$ from one of the
following distributions: (i) uniform, $\mathcal U(-1,1)$; (ii) normal, $\mathcal N(0,1)$; and (iii) discrete, $\mathbf 1\{\tilde z\ge 2.5\}$ with $\tilde z\sim\mathcal N(0,1)$.

The regressor of interest is generated as
     \[ x_{gi} = \sqrt{\rho_x} \, v_g + \sqrt{1-\rho_x} \, v_{gi}, \qquad \rho_x \in [0,1], \]
where $v_g \stackrel{iid}{\sim} \mathcal N(0,1)$ and $v_{gi} \mid W \stackrel{iid}{\sim} \mathcal N(0,\sigma_v^2(w_{gi}))$ with
    \[ \sigma_v^2(w_{gi}) = \kappa_v \left[ 1 + (\iota'w_{gi})^2 \right]^{\vartheta}. \]
The scaling constant $\kappa_v$ is chosen so that $\mathrm{var}(v_{gi})\approx 1$ unconditionally.

\subsubsection*{Design 2: Common latent factors (LF)}
This design generates projection spillovers through shared low-rank directions, in the spirit of interactive factor structures. A global factor is realized at the unit level and enters regressors with cluster-specific exposures (loadings), so that clusters are aligned along common directions in regressor space even when $\ell$ is moderate.

Let $f_{n(g,i)} \stackrel{iid}{\sim} \mathcal{N}(0,1)$ denote a common factor shared across clusters.
For each cluster $g$, draw loadings
   \[ \lambda_{g}^{(x)} \stackrel{iid}{\sim} Rademacher, \qquad \lambda_{g}^{(j)} \stackrel{iid}{\sim}  Rademacher, \quad j = 2,\ldots,\ell, \]
independent of $\{z_g^{(j)},z_{gi}^{(j)},v_g,v_{gi}\}$.

For $j\ge 2$, generate controls as
    \[ w_{gi}^{(j)} = \sqrt{\delta_w}\,\lambda_g^{(j)} f_{n(g,i)} + \sqrt{1-\delta_w}\,\tilde w_{gi}^{(j)}, \quad \tilde w_{gi}^{(j)} = \sqrt{\rho_w}\, z_g^{(j)} + \sqrt{1-\rho_w}\, z_{gi}^{(j)}. \]
Similarly, generate the regressor of interest as
    \[ x_{gi} = \sqrt{\delta_x}\,\lambda_g^{(x)} f_{n(g,i)} + \sqrt{1-\delta_x}\,\tilde x_{gi}, \quad \tilde x_{gi} = \sqrt{\rho_x}\, v_g + \sqrt{1-\rho_x}\, v_{gi}. \]
Here, $\delta_w, \delta_x \in [0,1]$ control the strength of the common factor component. When $\delta_w=\delta_x=0$, the design reduces to the baseline without shared directions, while larger values of $\delta_w$ and $\delta_x$ induce stronger alignment across observations and generate increasing dense projection structures.

\subsection{Simulation Results}

We compare the proposed Riesz estimators with CR0--CR3 (which reduce to HC0--HC3 when clusters are singletons) and the HCK estimator of \citet{cattaneo2018inference}. We report two variants: \textit{Riesz-F}, the full estimator based on the complete residual moment system, and \textit{Riesz-P}, which imposes the diagonal restriction. All methods use the same OLS point estimator $\widehat\beta$ and studentization scheme in~\eqref{eq:t-statistic}, differing only in the variance estimator.

Across all designs we conduct $R = 5{,}000$ Monte Carlo replications and report empirical size at the 5\% level and mean confidence interval length at the 95\% level. Where CR2, CR3, or HCK require matrix inversions that may fail, we also report the \textit{invalid rate} (fraction of replications where the estimator is undefined) and \textit{tail instability} (fraction of valid replications where $\widehat{SE}/SD(\hat\beta) > 10$). Size and CI length are computed over valid replications only.

\textbf{Design 1, independent observations (Tables~\ref{tab:HD-IID-1}--\ref{tab:HD-IID-2}).}
Table~\ref{tab:HD-IID-1} considers normal controls. Panel B shows that size distortions of CR0 grow sharply with $\ell/n$: at $\ell/n = 0.10$, CR0 rejects at 8.0\% against a nominal 5\%; by $\ell/n = 0.50$, the rejection rate reaches 23.4\%. CR1 and CR2 improve but remain above nominal for large $\ell/n$, while CR3 becomes increasingly conservative, falling to 1.8\% at $\ell/n = 0.50$. HCK controls size well across all $\ell/n$ values. The Riesz estimators maintain rejection rates between 5.2\% and 6.2\% throughout, with moderately wider confidence intervals than HCK, reflecting their use of the full projection geometry from both $X$ and $W$. Riesz-P and Riesz-F produce identical results in all configurations, consistent with Lemmas 5--6.

Table~\ref{tab:HD-IID-2} considers discrete controls, where implementation failures are pronounced. Panel A shows that CR2 is undefined in 49\% of replications at $\ell/n = 0.10$, rising to 97\% at $\ell/n = 0.50$; CR3 is undefined in 16\% and 62\% of replications at the same values. HCK exhibits a non-negligible invalid rate and increasing tail instability, reaching a mean CI length of 1.94 at $\ell/n = 0.50$ compared to 0.21 for the Riesz estimators. The Riesz estimators remain well-defined in all replications with stable coverage close to the nominal level throughout.

\textbf{Design 1, clustered observations (Tables~\ref{tab:HD-cluster-1}--\ref{tab:HD-cluster-2}).}
Table~\ref{tab:HD-cluster-1} considers homogeneous cluster sizes with no regressor correlation across clusters. Size distortions mirror the independent case: CR0 over-rejects substantially as $\ell/n$ grows, while Riesz estimators maintain near-nominal size. HCK performs well in this configuration where the regressor structure is simple.

Table~\ref{tab:HD-cluster-2} introduces heterogeneous cluster sizes and within-cluster regressor correlation, which is the empirically relevant setting where HCK breaks down. Under moderate correlation ($\rho_w = \rho_x = \rho_u = 0.6$), HCK rejects at 25.0\% at $\ell/n = 0.01$ --- before any many-controls problem appears --- declining to 7.6\% at $\ell/n = 0.50$ as the Hadamard correction partially compensates. The Riesz estimators maintain rejection rates between 5.2\% and 6.8\% throughout. The Riesz advantage over HCK is largest when regressor clustering is strong, because HCK uses only the projection of controls $M_W$ and is not designed to handle leverage from clustered regressors of interest.

\textbf{Design 2, independent observations (Table~\ref{tab:LF-IID}, independent).} In the low-dimensional regime ($\ell/n = 0.01$), all estimators perform similarly when the factor loading is zero. As $\delta_x = \delta_w$ increases to 0.8, CR0 over-rejects at 6.6\% while Riesz estimators remain at 5.6\% --- a modest but consistent pattern. In the moderate-dimensional regime ($\ell/n = 0.10$), the interaction between factor structure and dimension becomes visible: CR0 over-rejects at 12.2\% at $\delta = 0.8$ compared to 7.5\% at $\delta = 0.0$, while Riesz estimators remain between 4.9\% and 5.8\% across all factor strengths. This confirms that the shared-directions mechanism compounds with the many-regressors mechanism as predicted by the motivation in Section~\ref{sec:motivation}.

\textbf{Design 2, clustered observations (Table~\ref{tab:LF-cluster}, clustered).} Under clustering with weak within-cluster correlation ($\rho_u = 0.3$), size distortions follow a similar pattern. At $\ell/n = 0.10$ and $\delta = 0.8$, CR0 rejects at 12.0\% while CR3 under-rejects at 1.9\%. Riesz estimators maintain rejection rates near 5.3\%, with CI lengths that are moderately larger than CR1 and closer to CR2.

\textbf{Overall.} Three patterns emerge consistently across designs. First, CR0 and CR1 over-reject substantially as projection spillovers increase, with CR0 reaching rejection rates above 20\% in high-dimensional settings. Second, CR3 and HCK address some of this distortion but in opposite directions: CR3 becomes increasingly conservative (under-rejecting), while HCK over-rejects under clustered regressors. Third, the Riesz estimators maintain rejection rates close to the nominal 5\% level across all configurations examined, at the cost of confidence intervals that are 5--15\% wider than CR1. The equivalence of Riesz-P and Riesz-F throughout is consistent with the asymptotic equivalence established in Lemmas 5--6 for correctly specified clustered settings with continuous regressors.

\begin{table}[htbp] 
\centering
\caption{Design 1 (HD) with independent observations ($n=1000$, $G=1000$). \\ 
Controls $W$ follow the normal distribution and errors are heteroskedastic ($\vartheta=1$). \\}
\label{tab:HD-IID-1}
\begin{threeparttable}
\setlength{\tabcolsep}{1pt}
\begin{tabularx}{\textwidth}{l@{\hspace{15pt}}*{7}{>{\centering\arraybackslash}X}}
\toprule
 & CR0 & CR1 & CR2 & CR3 & HCK & Riesz-P & Riesz-F \\
\midrule
\multicolumn{8}{l}{\textit{Panel A: Invalid Rate}} \\
\addlinespace
$\ell/n = 0.01$ & 0.0000 & 0.0000 & 0.0000 & 0.0000 & 0.0000 & 0.0000 & 0.0000 \\
$\ell/n = 0.05$ & 0.0000 & 0.0000 & 0.0000 & 0.0000 & 0.0000 & 0.0000 & 0.0000 \\
$\ell/n = 0.10$ & 0.0000 & 0.0000 & 0.0000 & 0.0000 & 0.0000 & 0.0000 & 0.0000 \\
$\ell/n = 0.20$ & 0.0000 & 0.0000 & 0.0000 & 0.0000 & 0.0000 & 0.0000 & 0.0000 \\
$\ell/n = 0.30$ & 0.0000 & 0.0000 & 0.0000 & 0.0000 & 0.0000 & 0.0000 & 0.0000 \\
$\ell/n = 0.40$ & 0.0000 & 0.0000 & 0.0000 & 0.0000 & 0.0000 & 0.0000 & 0.0000 \\
$\ell/n = 0.50$ & 0.0000 & 0.0000 & 0.0000 & 0.0000 & 0.0000 & 0.0000 & 0.0000 \\
\midrule

\multicolumn{8}{l}{\textit{Panel B: Size (5\%)}} \\
\addlinespace
$\ell/n = 0.01$ & 0.0578 & 0.0562 & 0.0558 & 0.0546 & 0.0554 & 0.0540 & 0.0540 \\
$\ell/n = 0.05$ & 0.0700 & 0.0622 & 0.0620 & 0.0550 & 0.0578 & 0.0554 & 0.0554 \\
$\ell/n = 0.10$ & 0.0802 & 0.0638 & 0.0638 & 0.0478 & 0.0548 & 0.0528 & 0.0528 \\
$\ell/n = 0.20$ & 0.1060 & 0.0700 & 0.0696 & 0.0452 & 0.0536 & 0.0524 & 0.0524 \\
$\ell/n = 0.30$ & 0.1416 & 0.0782 & 0.0780 & 0.0380 & 0.0552 & 0.0530 & 0.0530 \\
$\ell/n = 0.40$ & 0.1926 & 0.0966 & 0.0970 & 0.0316 & 0.0646 & 0.0618 & 0.0616 \\
$\ell/n = 0.50$ & 0.2342 & 0.0934 & 0.0940 & 0.0180 & 0.0582 & 0.0562 & 0.0562 \\
\midrule

\multicolumn{8}{l}{\textit{Panel C: Median CI Length (95\%)}} \\
\addlinespace
$\ell/n = 0.01$ & 0.1961 & 0.1972 & 0.1974 & 0.1986 & 0.1980 & 0.1993 & 0.1993 \\
$\ell/n = 0.05$ & 0.1983 & 0.2035 & 0.2038 & 0.2095 & 0.2071 & 0.2088 & 0.2088 \\
$\ell/n = 0.10$ & 0.1928 & 0.2033 & 0.2035 & 0.2148 & 0.2100 & 0.2116 & 0.2116 \\
$\ell/n = 0.20$ & 0.1814 & 0.2029 & 0.2029 & 0.2271 & 0.2158 & 0.2175 & 0.2175 \\
$\ell/n = 0.30$ & 0.1694 & 0.2026 & 0.2027 & 0.2426 & 0.2217 & 0.2238 & 0.2238 \\
$\ell/n = 0.40$ & 0.1586 & 0.2050 & 0.2048 & 0.2646 & 0.2296 & 0.2319 & 0.2319 \\
$\ell/n = 0.50$ & 0.1486 & 0.2103 & 0.2100 & 0.2971 & 0.2389 & 0.2415 & 0.2415 \\

\midrule
\multicolumn{8}{l}{\textit{Panel D: Mean CI Length (95\%)}} \\
\addlinespace
$\ell/n = 0.01$ & 0.1993 & 0.2004 & 0.2007 & 0.2020 & 0.2014 & 0.2030 & 0.2030 \\
$\ell/n = 0.05$ & 0.2013 & 0.2066 & 0.2069 & 0.2126 & 0.2103 & 0.2122 & 0.2122 \\
$\ell/n = 0.10$ & 0.1952 & 0.2058 & 0.2060 & 0.2176 & 0.2128 & 0.2149 & 0.2149 \\
$\ell/n = 0.20$ & 0.1839 & 0.2057 & 0.2059 & 0.2305 & 0.2193 & 0.2215 & 0.2215 \\
$\ell/n = 0.30$ & 0.1713 & 0.2048 & 0.2049 & 0.2452 & 0.2246 & 0.2270 & 0.2270 \\
$\ell/n = 0.40$ & 0.1605 & 0.2073 & 0.2072 & 0.2677 & 0.2329 & 0.2356 & 0.2356 \\
$\ell/n = 0.50$ & 0.1500 & 0.2124 & 0.2121 & 0.3002 & 0.2425 & 0.2456 & 0.2456 \\
\midrule

\multicolumn{8}{l}{\textit{Panel E: Tail Instability $P(\widehat{SE}/SD(\hat{\beta})> 10)$}} \\
\addlinespace
$\ell/n = 0.01$ & 0.0000 & 0.0000 & 0.0000 & 0.0000 & 0.0000 & 0.0000 & 0.0000 \\
$\ell/n = 0.05$ & 0.0000 & 0.0000 & 0.0000 & 0.0000 & 0.0000 & 0.0000 & 0.0000 \\
$\ell/n = 0.10$ & 0.0000 & 0.0000 & 0.0000 & 0.0000 & 0.0000 & 0.0000 & 0.0000 \\
$\ell/n = 0.20$ & 0.0000 & 0.0000 & 0.0000 & 0.0000 & 0.0000 & 0.0000 & 0.0000 \\
$\ell/n = 0.30$ & 0.0000 & 0.0000 & 0.0000 & 0.0000 & 0.0000 & 0.0000 & 0.0000 \\
$\ell/n = 0.40$ & 0.0000 & 0.0000 & 0.0000 & 0.0000 & 0.0000 & 0.0000 & 0.0000 \\
$\ell/n = 0.50$ & 0.0000 & 0.0000 & 0.0000 & 0.0000 & 0.0000 & 0.0000 & 0.0000 \\

\bottomrule
\end{tabularx}
\end{threeparttable}
\end{table}

\begin{table}[htbp] 
\centering
\caption{Design 1 (HD) with independent observations ($n=1000$, $G=1000$). \\ 
Controls $W$ follow the discrete distribution and errors are heteroskedastic ($\vartheta=1$). \\}
\label{tab:HD-IID-2}
\begin{threeparttable}
\setlength{\tabcolsep}{1pt}
\begin{tabularx}{\textwidth}{l@{\hspace{15pt}}*{7}{>{\centering\arraybackslash}X}}
\toprule
 & CR0 & CR1 & CR2 & CR3 & HCK & Riesz-P & Riesz-F \\
\midrule
\multicolumn{8}{l}{\textit{Panel A: Invalid Rate}} \\
\addlinespace
$\ell/n = 0.01$ & 0.0000 & 0.0000 & 0.0660 & 0.0274 & 0.0002 & 0.0000 & 0.0000 \\
$\ell/n = 0.05$ & 0.0000 & 0.0000 & 0.2894 & 0.1038 & 0.0032 & 0.0000 & 0.0000 \\
$\ell/n = 0.10$ & 0.0000 & 0.0000 & 0.4896 & 0.1562 & 0.0196 & 0.0000 & 0.0000 \\
$\ell/n = 0.20$ & 0.0000 & 0.0000 & 0.7478 & 0.2824 & 0.0200 & 0.0000 & 0.0000 \\
$\ell/n = 0.30$ & 0.0000 & 0.0000 & 0.8754 & 0.4090 & 0.0132 & 0.0000 & 0.0000 \\
$\ell/n = 0.40$ & 0.0000 & 0.0000 & 0.9404 & 0.5154 & 0.0142 & 0.0000 & 0.0000 \\
$\ell/n = 0.50$ & 0.0000 & 0.0000 & 0.9680 & 0.6244 & 0.0122 & 0.0000 & 0.0000 \\
\midrule

\multicolumn{8}{l}{\textit{Panel B: Size (5\%)}} \\
\addlinespace
$\ell/n = 0.01$ & 0.0536 & 0.0524 & 0.0490 & 0.0452 & 0.0484 & 0.0488 & 0.0488 \\
$\ell/n = 0.05$ & 0.0730 & 0.0660 & 0.0529 & 0.0353 & 0.0482 & 0.0506 & 0.0508 \\
$\ell/n = 0.10$ & 0.0920 & 0.0780 & 0.0564 & 0.0275 & 0.0488 & 0.0532 & 0.0534 \\
$\ell/n = 0.20$ & 0.1216 & 0.0834 & 0.0563 & 0.0156 & 0.0447 & 0.0538 & 0.0538 \\
$\ell/n = 0.30$ & 0.1618 & 0.0894 & 0.0562 & 0.0108 & 0.0442 & 0.0520 & 0.0520 \\
$\ell/n = 0.40$ & 0.1986 & 0.0938 & 0.0705 & 0.0054 & 0.0452 & 0.0566 & 0.0564 \\
$\ell/n = 0.50$ & 0.2302 & 0.0866 & 0.0875 & 0.0027 & 0.0447 & 0.0538 & 0.0538 \\
\midrule

\multicolumn{8}{l}{\textit{Panel C: Median CI Length (95\%)}} \\
\addlinespace
$\ell/n = 0.01$ & 0.1513 & 0.1522 & 0.1536 & 0.1566 & 0.1539 & 0.1541 & 0.1541 \\
$\ell/n = 0.05$ & 0.1511 & 0.1551 & 0.1622 & 0.1761 & 0.1648 & 0.1641 & 0.1641 \\
$\ell/n = 0.10$ & 0.1484 & 0.1565 & 0.1678 & 0.1934 & 0.1729 & 0.1711 & 0.1711 \\
$\ell/n = 0.20$ & 0.1422 & 0.1591 & 0.1734 & 0.2189 & 0.1836 & 0.1801 & 0.1801 \\
$\ell/n = 0.30$ & 0.1356 & 0.1622 & 0.1782 & 0.2437 & 0.1908 & 0.1855 & 0.1855 \\
$\ell/n = 0.40$ & 0.1315 & 0.1699 & 0.1855 & 0.2752 & 0.2015 & 0.1942 & 0.1942 \\
$\ell/n = 0.50$ & 0.1290 & 0.1826 & 0.1998 & 0.3195 & 0.2146 & 0.2062 & 0.2062 \\

\midrule
\multicolumn{8}{l}{\textit{Panel D: Mean CI Length (95\%)}} \\
\addlinespace
$\ell/n = 0.01$ & 0.1520 & 0.1528 & 0.1552 & 0.2090 & 0.3808 & 0.1551 & 0.1551 \\
$\ell/n = 0.05$ & 0.1518 & 0.1558 & 0.1647 & 0.3469 & 0.6511 & 0.1652 & 0.1652 \\
$\ell/n = 0.10$ & 0.1492 & 0.1573 & 0.1728 & 0.5054 & 0.8597 & 0.1725 & 0.1725 \\
$\ell/n = 0.20$ & 0.1426 & 0.1596 & 0.1780 & 0.6815 & 1.0568 & 0.1811 & 0.1811 \\
$\ell/n = 0.30$ & 0.1361 & 0.1628 & 0.1846 & 0.8381 & 1.2508 & 0.1868 & 0.1868 \\
$\ell/n = 0.40$ & 0.1317 & 0.1702 & 0.1940 & 1.0368 & 1.5277 & 0.1951 & 0.1951 \\
$\ell/n = 0.50$ & 0.1291 & 0.1828 & 0.2073 & 1.2793 & 1.9358 & 0.2071 & 0.2071 \\
\midrule

\multicolumn{8}{l}{\textit{Panel E: Tail Instability $P(\widehat{SE}/SD(\hat{\beta})> 10)$}} \\
\addlinespace
$\ell/n = 0.01$ & 0.0000 & 0.0000 & 0.0002 & 0.0072 & 0.0116 & 0.0000 & 0.0000 \\
$\ell/n = 0.05$ & 0.0000 & 0.0000 & 0.0000 & 0.0272 & 0.0424 & 0.0000 & 0.0000 \\
$\ell/n = 0.10$ & 0.0000 & 0.0000 & 0.0000 & 0.0466 & 0.0672 & 0.0000 & 0.0000 \\
$\ell/n = 0.20$ & 0.0000 & 0.0000 & 0.0000 & 0.0508 & 0.0806 & 0.0000 & 0.0000 \\
$\ell/n = 0.30$ & 0.0000 & 0.0000 & 0.0000 & 0.0512 & 0.0948 & 0.0000 & 0.0000 \\
$\ell/n = 0.40$ & 0.0000 & 0.0000 & 0.0000 & 0.0518 & 0.1124 & 0.0000 & 0.0000 \\
$\ell/n = 0.50$ & 0.0000 & 0.0000 & 0.0000 & 0.0452 & 0.1254 & 0.0000 & 0.0000 \\

\bottomrule
\end{tabularx}
\end{threeparttable}
\end{table}

\begin{table}[htbp] 
\centering
\caption{Design 1 (HD) with homogeneous cluster sizes ($n=1000$, $G=200$, $\lambda=0$). \\
Controls $W$ follow the uniform distribution and errors are heteroskedastic ($\vartheta=1$). \\}
\label{tab:HD-cluster-1}
\begin{threeparttable}
\setlength{\tabcolsep}{1pt}
\begin{tabularx}{\textwidth}{l@{\hspace{15pt}}*{7}{>{\centering\arraybackslash}X}}
\toprule
 & CR0 & CR1 & CR2 & CR3 & HCK & Riesz-P & Riesz-F \\
\midrule 

\multicolumn{8}{c}{\textit{Size (5\%)}} \\
\midrule

\multicolumn{8}{l}{\textit{Weak correlation ($\rho_w=0$, $\rho_x = 0$, $\rho_u = 0.3$)}} \\
\addlinespace
$\ell/n = 0.01$ & 0.0554 & 0.0528 & 0.0534 & 0.0514 & 0.0530 & 0.0510 & 0.0510 \\
$\ell/n = 0.05$ & 0.0646 & 0.0580 & 0.0582 & 0.0516 & 0.0536 & 0.0532 & 0.0532 \\
$\ell/n = 0.10$ & 0.0734 & 0.0612 & 0.0618 & 0.0480 & 0.0538 & 0.0518 & 0.0518 \\
$\ell/n = 0.20$ & 0.1098 & 0.0732 & 0.0740 & 0.0442 & 0.0574 & 0.0576 & 0.0576 \\
$\ell/n = 0.30$ & 0.1278 & 0.0678 & 0.0688 & 0.0300 & 0.0498 & 0.0490 & 0.0490 \\
$\ell/n = 0.40$ & 0.1866 & 0.0860 & 0.0870 & 0.0240 & 0.0594 & 0.0602 & 0.0602 \\
$\ell/n = 0.50$ & 0.2154 & 0.0818 & 0.0832 & 0.0136 & 0.0534 & 0.0568 & 0.0570 \\
\midrule

\multicolumn{8}{l}{\textit{Moderate correlation ($\rho_w=0$, $\rho_x = 0$, $\rho_u = 0.6$)}} \\
\addlinespace
$\ell/n = 0.01$ & 0.0530 & 0.0510 & 0.0512 & 0.0496 & 0.0478 & 0.0490 & 0.0490 \\
$\ell/n = 0.05$ & 0.0612 & 0.0546 & 0.0548 & 0.0496 & 0.0522 & 0.0516 & 0.0516 \\
$\ell/n = 0.10$ & 0.0740 & 0.0610 & 0.0614 & 0.0460 & 0.0540 & 0.0558 & 0.0556 \\
$\ell/n = 0.20$ & 0.0988 & 0.0652 & 0.0658 & 0.0410 & 0.0540 & 0.0552 & 0.0552 \\
$\ell/n = 0.30$ & 0.1182 & 0.0628 & 0.0642 & 0.0284 & 0.0516 & 0.0514 & 0.0514 \\
$\ell/n = 0.40$ & 0.1658 & 0.0716 & 0.0732 & 0.0180 & 0.0570 & 0.0566 & 0.0566 \\
$\ell/n = 0.50$ & 0.2002 & 0.0664 & 0.0676 & 0.0128 & 0.0510 & 0.0524 & 0.0524 \\
\midrule

\multicolumn{8}{c}{\textit{Mean CI Length (95\%)}} \\
\midrule

\multicolumn{8}{l}{\textit{Weak correlation ($\rho_w=0$, $\rho_x = 0$, $\rho_u = 0.3$)}} \\
\addlinespace
$\ell/n = 0.01$ & 0.1705 & 0.1717 & 0.1714 & 0.1724 & 0.1724 & 0.1732 & 0.1732 \\
$\ell/n = 0.05$ & 0.1793 & 0.1844 & 0.1841 & 0.1890 & 0.1871 & 0.1885 & 0.1885 \\
$\ell/n = 0.10$ & 0.1759 & 0.1859 & 0.1856 & 0.1957 & 0.1912 & 0.1926 & 0.1926 \\
$\ell/n = 0.20$ & 0.1677 & 0.1880 & 0.1876 & 0.2098 & 0.1982 & 0.2000 & 0.2000 \\
$\ell/n = 0.30$ & 0.1586 & 0.1900 & 0.1894 & 0.2265 & 0.2050 & 0.2067 & 0.2067 \\
$\ell/n = 0.40$ & 0.1505 & 0.1949 & 0.1941 & 0.2506 & 0.2142 & 0.2160 & 0.2160 \\
$\ell/n = 0.50$ & 0.1425 & 0.2021 & 0.2011 & 0.2843 & 0.2243 & 0.2265 & 0.2265 \\
\midrule

\multicolumn{8}{l}{\textit{Moderate correlation ($\rho_w=0$, $\rho_x=0$, $\rho_u = 0.6$)}} \\
\addlinespace
$\ell/n = 0.01$ & 0.1522 & 0.1533 & 0.1531 & 0.1541 & 0.1540 & 0.1545 & 0.1544 \\
$\ell/n = 0.05$ & 0.1579 & 0.1624 & 0.1622 & 0.1666 & 0.1643 & 0.1653 & 0.1653 \\
$\ell/n = 0.10$ & 0.1556 & 0.1644 & 0.1641 & 0.1732 & 0.1679 & 0.1689 & 0.1689 \\
$\ell/n = 0.20$ & 0.1505 & 0.1687 & 0.1683 & 0.1884 & 0.1752 & 0.1766 & 0.1766 \\
$\ell/n = 0.30$ & 0.1447 & 0.1734 & 0.1729 & 0.2068 & 0.1828 & 0.1841 & 0.1841 \\
$\ell/n = 0.40$ & 0.1397 & 0.1809 & 0.1802 & 0.2328 & 0.1928 & 0.1942 & 0.1942 \\
$\ell/n = 0.50$ & 0.1347 & 0.1910 & 0.1901 & 0.2689 & 0.2043 & 0.2059 & 0.2059 \\
\bottomrule
\end{tabularx}
\end{threeparttable}
\end{table}

\begin{table}[htbp] 
\centering
\caption{Design 1 (HD) with heterogeneous cluster sizes ($n=1000$, $G=200$, $\lambda=1$). \\
Controls $W$ follow the normal distribution and errors are heteroskedastic ($\vartheta=1$). \\}
\label{tab:HD-cluster-2}
\begin{threeparttable}
\setlength{\tabcolsep}{1pt}
\begin{tabularx}{\textwidth}{l@{\hspace{15pt}}*{7}{>{\centering\arraybackslash}X}}
\toprule
 & CR0 & CR1 & CR2 & CR3 & HCK & Riesz-P & Riesz-F \\
\midrule 

\multicolumn{8}{c}{\textit{Size (5\%)}} \\
\midrule

\multicolumn{8}{l}{\textit{Weak correlation ($\rho_w = 0.3$, $\rho_x = 0.3$, $\rho_u = 0.3$)}} \\
\addlinespace
$\ell/n = 0.01$ & 0.0746 & 0.0726 & 0.0684 & 0.0630 & 0.1140 & 0.0600 & 0.0598 \\
$\ell/n = 0.05$ & 0.0738 & 0.0662 & 0.0618 & 0.0494 & 0.0802 & 0.0522 & 0.0522 \\
$\ell/n = 0.10$ & 0.0930 & 0.0776 & 0.0746 & 0.0538 & 0.0790 & 0.0636 & 0.0636 \\
$\ell/n = 0.20$ & 0.1252 & 0.0876 & 0.0810 & 0.0430 & 0.0750 & 0.0594 & 0.0592 \\
$\ell/n = 0.30$ & 0.1560 & 0.0840 & 0.0790 & 0.0270 & 0.0596 & 0.0554 & 0.0546 \\
$\ell/n = 0.40$ & 0.1822 & 0.0904 & 0.0890 & 0.0256 & 0.0652 & 0.0626 & 0.0624 \\
$\ell/n = 0.50$ & 0.2322 & 0.0936 & 0.0946 & 0.0164 & 0.0660 & 0.0672 & 0.0660 \\
\midrule

\multicolumn{8}{l}{\textit{Moderate correlation ($\rho_w = 0.6$, $\rho_x = 0.6$, $\rho_u = 0.6$)}} \\
\addlinespace
$\ell/n = 0.01$ & 0.0846 & 0.0816 & 0.0680 & 0.0526 & 0.2502 & 0.0520 & 0.0520 \\
$\ell/n = 0.05$ & 0.0870 & 0.0778 & 0.0644 & 0.0424 & 0.1734 & 0.0530 & 0.0528 \\
$\ell/n = 0.10$ & 0.1166 & 0.0960 & 0.0752 & 0.0378 & 0.1532 & 0.0604 & 0.0606 \\
$\ell/n = 0.20$ & 0.1730 & 0.1278 & 0.0894 & 0.0268 & 0.1322 & 0.0626 & 0.0620 \\
$\ell/n = 0.30$ & 0.1966 & 0.1172 & 0.0938 & 0.0186 & 0.1010 & 0.0558 & 0.0554 \\
$\ell/n = 0.40$ & 0.2074 & 0.1020 & 0.0944 & 0.0150 & 0.0816 & 0.0588 & 0.0598 \\
$\ell/n = 0.50$ & 0.2358 & 0.0930 & 0.0968 & 0.0096 & 0.0756 & 0.0662 & 0.0684 \\
\midrule

\multicolumn{8}{c}{\textit{Mean CI Length (95\%)}} \\
\midrule

\multicolumn{8}{l}{\textit{Weak correlation ($\rho_w = 0.3$, $\rho_x = 0.3$, $\rho_u = 0.3$)}} \\
\addlinespace
$\ell/n = 0.01$ & 0.1867 & 0.1881 & 0.1910 & 0.1964 & 0.1689 & 0.1999 & 0.1999 \\
$\ell/n = 0.05$ & 0.1796 & 0.1848 & 0.1875 & 0.1979 & 0.1772 & 0.1963 & 0.1963 \\
$\ell/n = 0.10$ & 0.1741 & 0.1840 & 0.1871 & 0.2031 & 0.1823 & 0.1973 & 0.1973 \\
$\ell/n = 0.20$ & 0.1639 & 0.1838 & 0.1871 & 0.2165 & 0.1906 & 0.2018 & 0.2018 \\
$\ell/n = 0.30$ & 0.1559 & 0.1868 & 0.1895 & 0.2352 & 0.2001 & 0.2088 & 0.2092 \\
$\ell/n = 0.40$ & 0.1486 & 0.1924 & 0.1933 & 0.2590 & 0.2105 & 0.2170 & 0.2172 \\
$\ell/n = 0.50$ & 0.1430 & 0.2029 & 0.2014 & 0.2947 & 0.2239 & 0.2295 & 0.2296 \\
\midrule

\multicolumn{8}{l}{\textit{Moderate correlation ($\rho_w = 0.6$, $\rho_x = 0.6$, $\rho_u = 0.6$)}} \\
\addlinespace
$\ell/n = 0.01$ & 0.2123 & 0.2139 & 0.2228 & 0.2372 & 0.1434 & 0.2411 & 0.2411 \\
$\ell/n = 0.05$ & 0.1891 & 0.1945 & 0.2054 & 0.2282 & 0.1505 & 0.2181 & 0.2181 \\
$\ell/n = 0.10$ & 0.1745 & 0.1844 & 0.2006 & 0.2362 & 0.1569 & 0.2161 & 0.2161 \\
$\ell/n = 0.20$ & 0.1553 & 0.1741 & 0.1926 & 0.2519 & 0.1700 & 0.2149 & 0.2156 \\
$\ell/n = 0.30$ & 0.1476 & 0.1769 & 0.1892 & 0.2670 & 0.1835 & 0.2174 & 0.2178 \\
$\ell/n = 0.40$ & 0.1443 & 0.1868 & 0.1907 & 0.2850 & 0.1978 & 0.2211 & 0.2211 \\
$\ell/n = 0.50$ & 0.1431 & 0.2030 & 0.1991 & 0.3160 & 0.2143 & 0.2312 & 0.2310 \\
\bottomrule
\end{tabularx}
\end{threeparttable}
\end{table}

\begin{table}[htbp] 
\centering
\caption{Design 2 (LF) with independent observations ($n=1000$, $G=1000$). \\
Controls $W$ follow the uniform distribution and errors are heteroskedastic ($\vartheta=1$). \\}
\label{tab:LF-IID}
\begin{threeparttable}
\setlength{\tabcolsep}{1pt}
\begin{tabularx}{\textwidth}{l@{\hspace{15pt}}*{7}{>{\centering\arraybackslash}X}}
\toprule
 & CR0 & CR1 & CR2 & CR3 & HCK & Riesz-P & Riesz-F \\
\midrule
\multicolumn{8}{c}{\textit{Size (5\%)}} \\
\midrule
\multicolumn{8}{l}{\textit{Panel BI: $\ell/n = 0.01$}} \\
\addlinespace
$\delta_x=\delta_w=0.00$ & 0.0580 & 0.0560 & 0.0560 & 0.0550 & 0.0556 & 0.0534 & 0.0534 \\ 
$\delta_x=\delta_w=0.20$ & 0.0614 & 0.0612 & 0.0602 & 0.0574 & 0.0588 & 0.0566 & 0.0568 \\
$\delta_x=\delta_w=0.40$ & 0.0672 & 0.0654 & 0.0634 & 0.0594 & 0.0614 & 0.0588 & 0.0588 \\
$\delta_x=\delta_w=0.60$ & 0.0670 & 0.0656 & 0.0636 & 0.0596 & 0.0620 & 0.0600 & 0.0600 \\
$\delta_x=\delta_w=0.80$ & 0.0664 & 0.0650 & 0.0610 & 0.0554 & 0.0576 & 0.0558 & 0.0558 \\
\midrule

\multicolumn{8}{l}{\textit{Panel BII: $\ell/n = 0.05$}} \\
\addlinespace
$\delta_x=\delta_w=0.00$ & 0.0654 & 0.0578 & 0.0578 & 0.0496 & 0.0534 & 0.0508 & 0.0510 \\ 
$\delta_x=\delta_w=0.20$ & 0.0752 & 0.0666 & 0.0624 & 0.0504 & 0.0546 & 0.0520 & 0.0520 \\
$\delta_x=\delta_w=0.40$ & 0.0798 & 0.0720 & 0.0636 & 0.0476 & 0.0524 & 0.0504 & 0.0504 \\
$\delta_x=\delta_w=0.60$ & 0.0836 & 0.0754 & 0.0600 & 0.0424 & 0.0510 & 0.0470 & 0.0470 \\
$\delta_x=\delta_w=0.80$ & 0.0874 & 0.0800 & 0.0628 & 0.0390 & 0.0520 & 0.0494 & 0.0494 \\
\midrule

\multicolumn{8}{l}{\textit{Panel BIII: $\ell/n = 0.10$}} \\
\addlinespace
$\delta_x=\delta_w=0.00$ & 0.0754 & 0.0616 & 0.0616 & 0.0492 & 0.0540 & 0.0518 & 0.0518 \\ 
$\delta_x=\delta_w=0.20$ & 0.0952 & 0.0828 & 0.0756 & 0.0506 & 0.0638 & 0.0604 & 0.0604 \\
$\delta_x=\delta_w=0.40$ & 0.1136 & 0.0956 & 0.0772 & 0.0458 & 0.0588 & 0.0556 & 0.0556\\
$\delta_x=\delta_w=0.60$ & 0.1220 & 0.1010 & 0.0750 & 0.0400 & 0.0564 & 0.0530 & 0.0530 \\
$\delta_x=\delta_w=0.80$ & 0.1216 & 0.1008 & 0.0666 & 0.0348 & 0.0504 & 0.0490 & 0.0490 \\
\midrule

\multicolumn{8}{c}{\textit{Mean CI Length (95\%)}} \\
\midrule

\multicolumn{8}{l}{\textit{Panel DI: $\ell/n = 0.01$}} \\
\addlinespace
$\delta_x=\delta_w=0.00$ & 0.1865 & 0.1876 & 0.1876 & 0.1886 & 0.1880 & 0.1893 & 0.1892 \\ 
$\delta_x=\delta_w=0.20$ & 0.1940 & 0.1951 & 0.1958 & 0.1975 & 0.1967 & 0.1985 & 0.1985 \\
$\delta_x=\delta_w=0.40$ & 0.2044 & 0.2056 & 0.2071 & 0.2098 & 0.2086 & 0.2110 & 0.2110 \\
$\delta_x=\delta_w=0.60$ & 0.2084 & 0.2096 & 0.2118 & 0.2153 & 0.2135 & 0.2159 & 0.2159 \\
$\delta_x=\delta_w=0.80$ & 0.2047 & 0.2058 & 0.2085 & 0.2124 & 0.2102 & 0.2120 & 0.2120 \\
\midrule

\multicolumn{8}{l}{\textit{Panel DII: $\ell/n = 0.05$}} \\
\addlinespace
$\delta_x=\delta_w=0.00$ & 0.1978 & 0.2031 & 0.2031 & 0.2085 & 0.2063 & 0.2081 & 0.2081 \\ 
$\delta_x=\delta_w=0.20$ & 0.2035 & 0.2089 & 0.2125 & 0.2221 & 0.2184 & 0.2212 & 0.2212 \\
$\delta_x=\delta_w=0.40$ & 0.2090 & 0.2145 & 0.2224 & 0.2374 & 0.2313 & 0.2348 & 0.2348 \\
$\delta_x=\delta_w=0.60$ & 0.2072 & 0.2127 & 0.2235 & 0.2421 & 0.2334 & 0.2366 & 0.2366 \\
$\delta_x=\delta_w=0.80$ & 0.1983 & 0.2035 & 0.2159 & 0.2363 & 0.2248 & 0.2272 & 0.2272 \\
\midrule

\multicolumn{8}{l}{\textit{Panel DIII: $\ell/n = 0.10$}} \\
\addlinespace
$\delta_x=\delta_w=0.00$ & 0.1944 & 0.2050 & 0.2050 & 0.2162 & 0.2115 & 0.2135 & 0.2135 \\ 
$\delta_x=\delta_w=0.20$ & 0.1941 & 0.2047 & 0.2103 & 0.2288 & 0.2214 & 0.2242 & 0.2242 \\
$\delta_x=\delta_w=0.40$ & 0.1939 & 0.2045 & 0.2166 & 0.2442 & 0.2323 & 0.2358 & 0.2358 \\
$\delta_x=\delta_w=0.60$ & 0.1898 & 0.2001 & 0.2169 & 0.2510 & 0.2339 & 0.2371 & 0.2371 \\
$\delta_x=\delta_w=0.80$ & 0.1814 & 0.1914 & 0.2108 & 0.2485 & 0.2259 & 0.2283 & 0.2283 \\
\bottomrule

\end{tabularx}
\end{threeparttable}
\end{table}

\begin{table}[htbp] 
\centering
\caption{Design 2 (LF) with homogeneous cluster sizes ($n=1000$, $G=200$, $\lambda=0$). \\
Controls $W$ follow the uniform distribution and errors are heteroskedastic ($\vartheta=1$). \\
Within-cluster correlation is weak: $\rho_w=0$, $\rho_x=0$, $\rho_u = 0.3$. \\}
\label{tab:LF-cluster}
\begin{threeparttable}
\setlength{\tabcolsep}{1pt}
\begin{tabularx}{\textwidth}{l@{\hspace{15pt}}*{7}{>{\centering\arraybackslash}X}}
\toprule
 & CR0 & CR1 & CR2 & CR3 & HCK & Riesz-P & Riesz-F \\
\midrule

\multicolumn{8}{c}{\textit{Size (5\%)}} \\
\midrule

\multicolumn{8}{l}{\textit{Panel BI: $\ell/n = 0.01$}} \\
\addlinespace
$\delta_x=\delta_w=0.00$ & 0.0588 & 0.0574 & 0.0578 & 0.0564 & 0.0556 & 0.0560 & 0.0560 \\ 
$\delta_x=\delta_w=0.20$ & 0.0618 & 0.0612 & 0.0610 & 0.0574 & 0.0542 & 0.0562 & 0.0562 \\
$\delta_x=\delta_w=0.40$ & 0.0590 & 0.0572 & 0.0552 & 0.0508 & 0.0522 & 0.0510 & 0.0510 \\
$\delta_x=\delta_w=0.60$ & 0.0628 & 0.0616 & 0.0568 & 0.0520 & 0.0532 & 0.0536 & 0.0536 \\
$\delta_x=\delta_w=0.80$ & 0.0608 & 0.0590 & 0.0546 & 0.0492 & 0.0518 & 0.0520 & 0.0520 \\
\midrule

\multicolumn{8}{l}{\textit{Panel BII: $\ell/n = 0.05$}} \\
\addlinespace
$\delta_x=\delta_w=0.00$ & 0.0670 & 0.0590 & 0.0596 & 0.0534 & 0.0572 & 0.0538 & 0.0538 \\ 
$\delta_x=\delta_w=0.20$ & 0.0718 & 0.0646 & 0.0616 & 0.0508 & 0.0546 & 0.0532 & 0.0532 \\
$\delta_x=\delta_w=0.40$ & 0.0836 & 0.0760 & 0.0662 & 0.0486 & 0.0564 & 0.0572 & 0.0572 \\
$\delta_x=\delta_w=0.60$ & 0.0892 & 0.0822 & 0.0650 & 0.0420 & 0.0514 & 0.0532 & 0.0532 \\
$\delta_x=\delta_w=0.80$ & 0.0938 & 0.0850 & 0.0596 & 0.0336 & 0.0500 & 0.0530 & 0.0528 \\
\midrule

\multicolumn{8}{l}{\textit{Panel BIII: $\ell/n = 0.10$}} \\
\addlinespace
$\delta_x=\delta_w=0.00$ & 0.0818 & 0.0666 & 0.0672 & 0.0554 & 0.0590 & 0.0584 & 0.0584 \\ 
$\delta_x=\delta_w=0.20$ & 0.0948 & 0.0768 & 0.0716 & 0.0488 & 0.0582 & 0.0556 & 0.0556 \\
$\delta_x=\delta_w=0.40$ & 0.1002 & 0.0840 & 0.0718 & 0.0410 & 0.0554 & 0.0524 & 0.0524 \\
$\delta_x=\delta_w=0.60$ & 0.1082 & 0.0864 & 0.0660 & 0.0318 & 0.0530 & 0.0480 & 0.0480 \\
$\delta_x=\delta_w=0.80$ & 0.1204 & 0.0960 & 0.0626 & 0.0194 & 0.0514 & 0.0526 & 0.0526 \\
\midrule

\multicolumn{8}{c}{\textit{Mean CI Length (95\%)}} \\
\midrule

\multicolumn{8}{l}{\textit{Panel DI: $\ell/n = 0.01$}} \\
\addlinespace
$\delta_x=\delta_w=0.00$ & 0.1703 & 0.1715 & 0.1712 & 0.1722 & 0.1723 & 0.1730 & 0.1730 \\ 
$\delta_x=\delta_w=0.20$ & 0.1761 & 0.1774 & 0.1778 & 0.1796 & 0.1795 & 0.1806 & 0.1805 \\
$\delta_x=\delta_w=0.40$ & 0.1841 & 0.1855 & 0.1871 & 0.1901 & 0.1895 & 0.1909 & 0.1909 \\
$\delta_x=\delta_w=0.60$ & 0.1877 & 0.1891 & 0.1918 & 0.1962 & 0.1945 & 0.1959 & 0.1960 \\
$\delta_x=\delta_w=0.80$ & 0.1853 & 0.1867 & 0.1907 & 0.1963 & 0.1930 & 0.1942 & 0.1942 \\
\midrule

\multicolumn{8}{l}{\textit{Panel DII: $\ell/n = 0.05$}} \\
\addlinespace
$\delta_x=\delta_w=0.00$ & 0.1791 & 0.1843 & 0.1839 & 0.1889 & 0.1869 & 0.1883 & 0.1883 \\ 
$\delta_x=\delta_w=0.20$ & 0.1853 & 0.1906 & 0.1933 & 0.2023 & 0.1989 & 0.2009 & 0.2009 \\
$\delta_x=\delta_w=0.40$ & 0.1933 & 0.1989 & 0.2060 & 0.2214 & 0.2144 & 0.2169 & 0.2169 \\
$\delta_x=\delta_w=0.60$ & 0.1958 & 0.2014 & 0.2135 & 0.2361 & 0.2229 & 0.2252 & 0.2252 \\
$\delta_x=\delta_w=0.80$ & 0.1901 & 0.1956 & 0.2143 & 0.2458 & 0.2228 & 0.2242 & 0.2242 \\
\midrule

\multicolumn{8}{l}{\textit{Panel DIII: $\ell/n = 0.10$}} \\
\addlinespace
$\delta_x=\delta_w=0.00$ & 0.1762 & 0.1862 & 0.1859 & 0.1961 & 0.1915 & 0.1930 & 0.1930 \\ 
$\delta_x=\delta_w=0.20$ & 0.1817 & 0.1920 & 0.1955 & 0.2122 & 0.2048 & 0.2068 & 0.2068 \\
$\delta_x=\delta_w=0.40$ & 0.1926 & 0.2035 & 0.2117 & 0.2386 & 0.2245 & 0.2272 & 0.2272 \\
$\delta_x=\delta_w=0.60$ & 0.2012 & 0.2126 & 0.2269 & 0.2678 & 0.2412 & 0.2440 & 0.2440 \\
$\delta_x=\delta_w=0.80$ & 0.2018 & 0.2133 & 0.2399 & 0.3055 & 0.2533 & 0.2554 & 0.2554 \\
\bottomrule

\end{tabularx}
\end{threeparttable}
\end{table}

\newpage
\section{Empirical Application} \label{sec:application}

We revisit \citet{xu2018costs}, which studies the role of social connections in shaping the promotion of colonial governors in the British Empire. We focus on the salary regressions in Table 2 of the original paper, which provide a canonical setting combining high-dimensional fixed effects with dyadic clustering.

Specifically, we estimate the model
    \[ \log(w_{ist}) = \alpha + \beta\, c_{it} + \gamma\, \mathit{served}_{it} + \theta_i + \tau_t + \delta_{it} + \varepsilon_{ist}, \]
where $w_{ist}$ is the salary of governor $i$ in colony $s$ at time $t$, and $c_{it}$ indicates social connection to the Secretary of State, measured along four dimensions: (1) shared ancestry, (2) aristocratic membership, (3) attendance at Eton, and (4) attendance at Oxford or Cambridge. Regression (6) uses a composite indicator. The parameter of interest is $\beta$. Controls include the number of previously served colonies, governor fixed effects $\theta_i$, year fixed effects $\tau_t$, and tenure fixed effects $\delta_{it}$. Standard errors are clustered at the governor--Secretary dyad level.

The estimation sample contains $3{,}510$ observations grouped into $1{,}518$ dyadic clusters, with $572$ controls ($\ell/n = 0.163$). This places the design firmly in the many-controls regime. Some dyadic clusters are perfectly nested within the high-dimensional fixed effects, making $(I_{N_g} - H_{gg})$ singular for those clusters and rendering CR2 and CR3 undefined throughout. The HCK estimator, while designed for many-controls settings, ignores within-cluster dependence and produces standard errors even smaller than CR1 --- reflecting its reliance on the diagonal projection $M_W$ alone rather than the full residual-maker $M_{[X,W]}$.

Table~\ref{tab:empirical_1} reports the replication results. Point estimates are identical across all methods, as expected. The main finding is that the Riesz estimator produces systematically larger standard errors than CR1 and HCK, with material consequences for inference. Under CR1, shared ancestry is significant at the 5\% level; under the Riesz estimator it is only marginally significant at the 10\% level ($p = 0.070$). For aristocratic membership, Eton attendance, and Oxbridge attendance, significance disappears entirely under the Riesz estimator. The composite connection measure remains significant at the 5\% level but with a larger $p$-value ($p = 0.019$ versus $p = 0.006$ under CR1). Riesz-P and Riesz-F produce identical results, consistent with Lemmas 5--6 given the continuous variation in connection measures across the 1,518 dyads.

These results reflect two compounding sources of projection spillovers in this design: the many-controls channel ($\ell/n = 0.163$) and the dyadic clustering channel, where governor and Secretary fixed effects cross the dyad boundaries and create non-negligible off-diagonal mass in $M$. Classical estimators that rely on local projection information omit the variance components flowing through these cross-cluster residual interactions, leading to understated standard errors and over-rejection. The Riesz estimator recovers these components by solving the full residual moment system, providing more reliable inference in this setting.

\begin{table}[htbp]
\centering
\caption{Governor Salary and Connectedness of Secretary of State}
\label{tab:empirical_1}
\begin{threeparttable}
\setlength{\tabcolsep}{1pt}
\begin{tabularx}{\textwidth}{l@{\hspace{15pt}}*{7}{>{\centering\arraybackslash}X}}
\toprule
Regression & CR0 & CR1 & CR2 & CR3 & HCK & Riesz-P & Riesz-F \\
\midrule
\multicolumn{8}{l}{\textit{Regression (1): Shared ancestors}} \\
\addlinespace
Estimate    & 0.1035 & 0.1035 & 0.1035 & 0.1035 & 0.1035 & 0.1035 & 0.1035 \\
Std.Error   & 0.0430 & 0.0470 & NA     & NA     & 0.0308 & 0.0570 & 0.0570 \\
t-statistic & 2.4047 & 2.1993 & NA     & NA     & 3.3557 & 1.8150 & 1.8150 \\
P-value     & 0.0163 & 0.0280 & NA     & NA     & 0.0008 & 0.0697 & 0.0697 \\
\midrule
\multicolumn{8}{l}{\textit{Regression (2): Both aristocrats}} \\
\addlinespace
Estimate    & 0.2150 & 0.2150 & 0.2150 & 0.2150 & 0.2150 & 0.2150 & 0.2150 \\
Std.Error   & 0.1132 & 0.1238 & NA     & NA     & 0.0642 & 0.1838 & 0.1837 \\
t-statistic & 1.8977 & 1.7356 & NA     & NA     & 3.3471 & 1.1693 & 1.1696 \\
P-value     & 0.0579 & 0.0828 & NA     & NA     & 0.0008 & 0.2425 & 0.2424 \\
\midrule
\multicolumn{8}{l}{\textit{Regression (3): Both Eton}} \\
\addlinespace
Estimate    & 0.1328 & 0.1328 & 0.1328 & 0.1328 & 0.1328 & 0.1328 & 0.1328 \\
Std.Error   & 0.0706 & 0.0772 & NA     & NA     & 0.0427 & 0.1002 & 0.1002 \\
t-statistic & 1.8806 & 1.7199 & NA     & NA     & 3.1086 & 1.3256 & 1.3257 \\
P-value     & 0.0602 & 0.0856 & NA     & NA     & 0.0019 & 0.1852 & 0.1851 \\
\midrule
\multicolumn{8}{l}{\textit{Regression (4): Both Oxbridge}} \\
\addlinespace
Estimate    & 0.0717 & 0.0717 & 0.0717 & 0.0717 & 0.0717 & 0.0717 & 0.0717 \\
Std.Error   & 0.0428 & 0.0468 & NA     & NA     & 0.0278 & 0.0562 & 0.0562 \\
t-statistic & 1.6747 & 1.5316 & NA     & NA     & 2.5743 & 1.2762 & 1.2762 \\
P-value     & 0.0942 & 0.1258 & NA     & NA     & 0.0101 & 0.2021 & 0.2021 \\
\midrule
\multicolumn{8}{l}{\textit{Regression (6): Connected}} \\
\addlinespace
Estimate    & 0.0973 & 0.0973 & 0.0973 & 0.0973 & 0.0973 & 0.0973 & 0.0973 \\
Std.Error   & 0.0325 & 0.0356 & NA     & NA     & 0.0229 & 0.0414 & 0.0414 \\
t-statistic & 2.9925 & 2.7368 & NA     & NA     & 4.2396 & 2.3489 & 2.3490 \\
P-value     & 0.0028 & 0.0062 & NA     & NA     & 0.0000 & 0.0190 & 0.0189 \\
\bottomrule
\end{tabularx}
\end{threeparttable}
\end{table}

\newpage
\section{Conclusion} \label{sec:conclusion}

This paper develops a unified projection-geometry framework for variance estimation in linear regression. The central insight is that the variance of the OLS estimator admits an exact finite-sample representation as a Riesz functional of latent covariance blocks, with observable residual moments linked to this target through a linear operator determined by projection geometry. This formulation unifies classical HC, CRVE, and many-controls estimators as restricted solutions within a common operator system, with accuracy governed by projection sparsity. The framework applies regardless of whether the projection matrix is sparse or dense --- classical methods emerge as the limiting case when projection is local, while the Riesz estimator provides valid inference throughout.

The proposed estimator uses both within- and cross-cluster residual moments with the full projection geometry, is implemented via a scalable matrix-free algorithm based on Golub--Kahan bidiagonalization and LSQR, and remains valid when classical bias-corrected estimators break down due to singular cluster-specific leverage matrices. Simulations confirm that classical methods severely undercover under projection spillovers while the Riesz estimator restores nominal coverage. Revisiting \citet{xu2018costs}, conventional standard errors understate uncertainty substantially, reversing significance on four of five reported coefficients. The Riesz representation provides a flexible foundation for variance estimation in a broad class of regression designs beyond those considered here.

\newpage
\pagenumbering{roman} 


\nocite{*}
\bibliographystyle{plainnat}
\bibliography{reference}

\newpage 
\section*{Appendix} 
\appendix 
\pagenumbering{Roman} 
\fancyhf{} 
\renewcommand{\headrulewidth}{0pt}

\newtheorem{applemma}{lemma}[section]
\newtheorem{prelemma}{Lemma}[section]   

\addcontentsline{toc}{section}{Appendix}
\setcounter{equation}{0}
\numberwithin{equation}{section}

\section{Projection Geometry and Auxiliary Results}
\subsection{Projection Spillover} \label{subsec:proj-spillover}
\begin{applemma}[Individual off-diagonal mass]
\label{applem:indiv-offdiag}
Let $M = I - H$ be the OLS residual-maker, symmetric and idempotent. For any observation $i$,
\begin{equation}
    \sum_{j \neq i} M_{ij}^2 = h_{ii}(1 - h_{ii}) = \|P_i\|^2\cdot\|Q_i\|^2, \label{appeq:indiv-offdiag}
\end{equation}
where $P_i = He_i$ and $Q_i = Me_i$ are the $i$-th columns of $H$ and $M$.
\end{applemma}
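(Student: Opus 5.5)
The plan is to use only the symmetry and idempotence of $M$ (equivalently of $H$), reading off the $(i,i)$ entry of $M^2=M$. Since $M=M'$ and $M^2=M$,
\[
M_{ii} \;=\; [M^2]_{ii} \;=\; \sum_{j=1}^n M_{ij}M_{ji} \;=\; \sum_{j=1}^n M_{ij}^2 .
\]
I would then split off the diagonal term $M_{ii}^2$ to obtain $\sum_{j\neq i}M_{ij}^2 = M_{ii}-M_{ii}^2 = M_{ii}(1-M_{ii})$. Substituting $M_{ii}=1-h_{ii}$ gives $(1-h_{ii})\,h_{ii}$, which is the first equality in \eqref{appeq:indiv-offdiag}.

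For the second equality I would compute the two norms directly, again using symmetry and idempotence. For $H$,
\[
\|P_i\|^2 = e_i'H'He_i = e_i'He_i = h_{ii},
\]
and identically for $M$,
\[
\|Q_i\|^2 = e_i'M'Me_i = M_{ii} = 1-h_{ii}.
\]
Their product is $h_{ii}(1-h_{ii})$, which closes the chain. As a side remark, one can also note that $M_{ij}=-h_{ij}$ for $j\neq i$, so the same quantity equals $\sum_{j\neq i}h_{ij}^2$. The identical argument applied to $[H^2]_{ii}=h_{ii}$ gives $h_{ii}-h_{ii}^2$ as well, which is the route mentioned in \eqref{eq:indiv-offdiag}.

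There is no genuine obstacle here; the whole proof is two lines of linear algebra. The only point needing care is interpretive. The identity $\|P_i\|^2\|Q_i\|^2$ is an equality of scalars that follows from the norm computations above. It should not be read as a Cauchy--Schwarz statement, since $P_i\perp Q_i$ and so $\langle P_i,Q_i\rangle=0$. I would state this explicitly to avoid confusion with the geometric discussion in Section~\ref{subsec:failure1}.
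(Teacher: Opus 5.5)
Your proof is correct and follows the paper's argument essentially verbatim: read off $[M^2]_{ii}=M_{ii}$, split off the diagonal term to get $M_{ii}(1-M_{ii})=h_{ii}(1-h_{ii})$, and obtain the norm form from $\|P_i\|^2=h_{ii}$ and $\|Q_i\|^2=1-h_{ii}$. Two small additions of yours are accurate. You invoke symmetry explicitly, which the paper leaves implicit under ``idempotency''. You also note that $P_i\perp Q_i$, so the product form is not a Cauchy--Schwarz statement.
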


\begin{proof}
Idempotency of $M$ gives $(M^2)_{ii} = M_{ii}$, i.e.\ $\sum_j M_{ij}^2 = M_{ii}$. Separating the diagonal term:
    \[ \sum_{j \neq i} M_{ij}^2 = M_{ii} - M_{ii}^2 = M_{ii}(1 - M_{ii}) = (1 - h_{ii})\,h_{ii}. \]
The geometric form follows from $\|P_i\|^2 = (H^2)_{ii} = H_{ii} = h_{ii}$ and $\|Q_i\|^2 = (M^2)_{ii} = M_{ii} = 1-h_{ii}$, both by idempotency.
\end{proof}

\begin{remark}
The expression $h_{ii}(1-h_{ii})$ is zero at both extremes $h_{ii} \in \{0,1\}$ and maximized at $h_{ii} = 1/2$. Non-negligible off-diagonal mass at observation $i$ requires both a non-trivial shadow $P_i$ (leverage) and a non-trivial residual direction $Q_i$ (unexplained variation).
\end{remark}

\begin{applemma}[Aggregate off-diagonal mass] \label{applem:agg-offdiag}
Under the same setup,
\begin{equation}
    \sum_{i=1}^{n} \sum_{i \neq j} M_{ij}^2 = p - \sum_i h_{ii}^2 \leq p\!\left(1 - \frac{p}{n} \right), \label{appeq:agg-offdiag}
\end{equation}
with equality under uniform leverage $h_{ii} = p/n$ for all $i$.
\end{applemma}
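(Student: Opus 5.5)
The plan is to sum the individual identity of Lemma~\ref{applem:indiv-offdiag} over $i$, evaluate the resulting sum with the trace of $H$, and then bound $\sum_i h_{ii}^2$ from below by Cauchy--Schwarz. The double sum in \eqref{appeq:agg-offdiag} is read as $\sum_i\sum_{j\neq i}M_{ij}^2$.

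\textbf{Step 1: sum the individual identity.} Summing Lemma~\ref{applem:indiv-offdiag} over $i=1,\dots,n$ gives
\[ \sum_{i=1}^n\sum_{j\neq i}M_{ij}^2=\sum_i h_{ii}-\sum_i h_{ii}^2 . \]

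\textbf{Step 2: evaluate the trace.} Since $H$ is a symmetric idempotent projection onto $\mathrm{col}(X)$ with $\mathrm{rank}(X)=p$, we have $\sum_i h_{ii}=\mathrm{tr}(H)=\mathrm{rank}(H)=p$. Substituting gives the equality $p-\sum_i h_{ii}^2$.

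\textbf{Step 3: the inequality.} By Cauchy--Schwarz, or equivalently Jensen's inequality for the square,
\[ \sum_i h_{ii}^2\ \ge\ \frac{1}{n}\Big(\sum_i h_{ii}\Big)^2=\frac{p^2}{n}. \]
Hence the aggregate mass is at most $p-p^2/n=p(1-p/n)$.

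\textbf{Step 4: the equality case.} Equality in Cauchy--Schwarz holds exactly when the vector $(h_{ii})_i$ is proportional to $(1,\dots,1)$. Given the trace constraint, this means $h_{ii}=p/n$ for all $i$, which is the uniform-leverage case.

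None of these steps is a real obstacle; the argument is essentially a one-line consequence of the previous lemma. The one point needing care is that the statement's $p$ must equal $\mathrm{rank}(H)$. If $X$ were rank deficient, $p$ should be read as $\mathrm{rank}(X)$.
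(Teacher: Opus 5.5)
Your proposal is correct and follows essentially the same route as the paper: sum the individual identity of Lemma~\ref{applem:indiv-offdiag}, use $\mathrm{tr}(H)=p$, and bound $\sum_i h_{ii}^2 \ge p^2/n$ by Jensen (equivalently, Cauchy--Schwarz). Your explicit derivation of the equality case and your remark that $p$ should be read as $\mathrm{rank}(X)$ are small additions that the paper leaves implicit.
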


\begin{proof}
By \ref{applem:indiv-offdiag}, the off-diagonal mass equals $\sum_i (h_{ii} - h_{ii}^2) = \mathrm{tr}(H) - \sum_i h_{ii}^2 = p - \sum_i h_{ii}^2$. Jensen's inequality applied to $f(x) = x^2$ with constraint $\sum_i h_{ii} = p$ gives $\sum_i h_{ii}^2 \geq p^2/n$, so $\sum_i \sum_{i\neq j} M_{ij}^2 \leq p(1 - p/n) \to cn(1-c) = O(n)$.
\end{proof}

\begin{remark}
The bound $p(1-p/n)$ is $O(n)$ when $p/n \to c \in (0,1)$, confirming first-order aggregate off-diagonal mass in the many-regressors regime. Lemma~\ref{applem:indiv-offdiag} and Lemma~\ref{applem:agg-offdiag} characterize the two channels: the aggregate bound \eqref{appeq:agg-offdiag} governs the bulk channel driven by the size of $p$, while the individual formula \eqref{appeq:indiv-offdiag} governs the pairwise channel driven by observations with large $h_{ii}$ concentrated on shared directions.
\end{remark}
\subsection{Fixed-Dimensional Projection} \label{subsec:fixed-dim}
\begin{applemma}[Low-rank perturbation of the residual projection] \label{applem:M-MW}
Suppose $\dim(X)=k$ is fixed Assumption~\ref{asmp:A4} holds. Let $M := M_{\widetilde X} M_W = M_{[X,W]}$. Then
    \[ M = M_W + \Delta_M, \qquad \frac{1}{n}\|\Delta_M\|_F^2 = o_p(1). \]
Consequently,
    \[ \frac{1}{n} \big\| (M\!\odot\! M) - (M_W\!\odot\! M_W) \big\|_F^2 = o_p(1). \]
\end{applemma}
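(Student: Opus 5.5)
The identity to start from is $M = M_{\widetilde X}M_W = M_W - P_{\widetilde X}$, where $P_{\widetilde X} := \widetilde X(\widetilde X'\widetilde X)^{-1}\widetilde X'$. This holds because $\widetilde X = M_W X$ lies in the range of $M_W$, so $P_{\widetilde X}M_W = P_{\widetilde X}$. Hence $\Delta_M = -P_{\widetilde X}$. This is an orthogonal projection of rank at most $k$, with $k$ fixed. Assumption~\ref{asmp:A4} ($\Gamma_n \to \Gamma_0 \succ 0$) guarantees that $\widetilde X'\widetilde X$ is invertible with probability approaching one, so the rank is exactly $k$ and the projection is well defined. Since a projection's squared Frobenius norm equals its trace, which equals its rank,
\[
\|\Delta_M\|_F^2 = \mathrm{tr}(P_{\widetilde X}) = k, \qquad \text{so} \qquad \tfrac1n\|\Delta_M\|_F^2 = k/n = o_p(1).
\]
In fact this holds deterministically on the event that $\widetilde X$ has full column rank. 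No moment conditions are needed for this part.

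For the Hadamard statement, I would expand entrywise:
\[
M\odot M - M_W\odot M_W = 2\,M_W\odot\Delta_M + \Delta_M\odot\Delta_M .
\]
I then bound each piece in Frobenius norm.

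For the first term, I would use $|(M_W)_{ij}| \le 1$, which holds because $M_W$ is an orthogonal projection and so $|(M_W)_{ij}| \le \sqrt{(M_W)_{ii}(M_W)_{jj}} \le 1$. This gives
\[
\|M_W\odot\Delta_M\|_F^2 \le \|\Delta_M\|_F^2 = k .
\]

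For the second term, I would use $|(\Delta_M)_{ij}| \le 1$ by the same argument applied to $P_{\widetilde X}$. This gives
\[
\|\Delta_M\odot\Delta_M\|_F^2 \le \|\Delta_M\|_F^2 = k .
\]

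By the triangle inequality, $\|M\odot M - M_W\odot M_W\|_F \le 3\sqrt{k}$. Hence
\[
\tfrac1n\|M\odot M - M_W\odot M_W\|_F^2 \le 9k/n = o_p(1).
\]

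The only step needing care is the first one: justifying that $M_{\widetilde X}M_W = M_W - P_{\widetilde X}$, and that $\widetilde X$ has full column rank $k$ with probability tending to one. For the identity, I would verify directly that $M_W P_{\widetilde X} = P_{\widetilde X}$ and use symmetry. For the rank, I would argue from $\lambda_{\min}(\Gamma_n) \to \lambda_{\min}(\Gamma_0) > 0$. Everything else is elementary entrywise bounds for projection matrices, and the conclusion holds at the stronger rate $O_p(1/n)$.
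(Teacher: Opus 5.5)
Your proof is correct and follows essentially the same route as the paper. Both arguments write $\Delta_M$ as minus the rank-$\le k$ projection onto $\widetilde X$, use $\|P\|_F^2=\mathrm{tr}(P)=\mathrm{rank}(P)\le k$, and control the Hadamard difference through the entrywise bound $|P_{ij}|\le 1$ for orthogonal projections. The differences are cosmetic. You simplify $H_{\widetilde X}M_W$ exactly to $H_{\widetilde X}$, whereas the paper only bounds $\|H_{\widetilde X}M_W\|_F\le\|H_{\widetilde X}\|_F\|M_W\|_{\mathrm{op}}$. You also expand additively instead of factoring as $(M-M_W)\odot(M+M_W)$, which costs a constant ($9k/n$ versus $4k/n$) and nothing else.
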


\begin{proof}
Recall that $M := M_{\widetilde X} M_W$ and $M_{\widetilde X} := I_n - H_{\widetilde X}$ with $H_{\widetilde X} = \widetilde X (\widetilde X' \widetilde X)^{-1} \widetilde X'$. Therefore,
    \[ \Delta_M := M - M_W = (M_{\widetilde X} - I_n) M_W = -H_{\widetilde X} M_W . \]

\medskip
We first bound $\Delta_M$ in Frobenius norm. Using $\widetilde X'\widetilde X = n\Gamma_n$ and Assumption~\ref{asmp:A4},
    \[ H_{\widetilde X} = \widetilde X (n\Gamma_n)^{-1} \widetilde X' = \frac{1}{n}\,\widetilde X \Gamma_n^{-1} \widetilde X', \qquad \|\Gamma_n^{-1}\|_{\mathrm{op}}=O_p(1). \]
Since $\dim(X)=k$ is fixed, $\mathrm{rank} (H_{\widetilde X}) = \mathrm{rank} (\widetilde X) \le k$. Because $H_{\widetilde X}$ is a symmetric idempotent projection,
    \[ \|H_{\widetilde X}\|_F^2 = \mathrm{tr} (H_{\widetilde X}) = \mathrm{rank} (H_{\widetilde X}) \le k, \qquad \text{so} \quad \|H_{\widetilde X}\|_F \le \sqrt{k}. \]
Moreover, $M_W$ is an orthogonal projection and hence $\|M_W\|_{\mathrm{op}} \le 1$. Therefore,
    \[ \|\Delta_M\|_F = \|H_{\widetilde X}M_W\|_F \le \|H_{\widetilde X}\|_F \|M_W\|_{ \mathrm{op}} \le \sqrt{k}. \]
Consequently,
    \[ \frac{1}{n}\|\Delta_M\|_F^2 \le \frac{k}{n} = o(1). \]

\medskip
Next, consider the Hadamard squares. Using
    \[ (M\odot M)-(M_W\odot M_W) = (M-M_W)\odot (M+M_W) = \Delta M \odot (M+M_W), \]
and the inequality $\|A\odot B\|_F \le \|A\|_F \|B\|_{\max}$ where $\|B\|_{\max} := \max_{i,j} |B_{ij}|$, we obtain
    \[ \|(M\odot M)-(M_W\odot M_W)\|_F \le \|\Delta M\|_F\,\|M+M_W\|_{\max}. \]
Since $M$ and $M_W$ are orthogonal projections, $|M_{ij}|\le 1$ and $|(M_W)_{ij}|\le 1$ for all $i,j$, so $\|M+M_W\|_{\max}\le 2$. Therefore,
    \[ \|(M\odot M)-(M_W\odot M_W)\|_F \le 2\|\Delta M\|_F, \]
and hence
    \[ \frac1n\|(M\odot M)-(M_W\odot M_W)\|_F^2 \le \frac{4}{n}\|\Delta M\|_F^2 \le \frac{4k}{n} = o(1). \]
\end{proof}

\begin{remark}[Implication for inverse--Hadamard solutions] \label{rem:invhad-proof}
Lemma~\ref{applem:M-MW} provides a perturbation bound for the projection matrix. To translate this result into equivalence of scalar-cluster Riesz \emph{weights},  additional stability of the inverse--Hadamard mapping is required.

Let $A_n := M\odot M$ and $B_n := M_W\odot M_W$, and define $d_A := A_n^{-1} r_n$ and $d_B := B_n^{-1} r_n$. Using the resolvent identity \( A_n^{-1} - B_n^{-1} = A_n^{-1} (B_n-A_n) B_n^{-1} \), we obtain
    \[ \|d_A-d_B\|_2 \;\le\; \|A_n^{-1}\|_{\mathrm{op}} \, \|A_n-B_n\|_{\mathrm{op}} \, \|B_n^{-1}\|_{\mathrm{op}} \, \|r_n\|_2 \;\le\; \|A_n^{-1}\|_{\mathrm{op}} \, \|A_n-B_n\|_{F} \, \|B_n^{-1}\|_{\mathrm{op}} \, \|r_n\|_2. \]
In the scalar-cluster design ($N_g=1$), Lemma~\ref{pre:riesz-bdd} implies $\|r_n\|_2 = O_p(\sqrt n)$, while Lemma~\ref{applem:M-MW} yields $\|A_n-B_n\|_F = o_p(\sqrt n)$. If $\|A_n^{-1}\|_{\mathrm{op}}$ and $\|B_n^{-1}\|_{\mathrm{op}}$ are uniformly bounded, then
    \[ \|d_A-d_B\|_2 = o_p(n). \]
Equivalently, in the Riesz geometry $\|D\|_{\mathcal K_n}=(1/n)\|d\|_2$ for scalar clusters, so
    \[ \|D_A-D_B\|_{\mathcal K_n}=o_p(1). \]

Moreover, by Lemma~\ref{pre:A-operator-bdd} and the fact that orthogonal projection is a contraction,\footnote{ Recall that $\mathcal A_n^{\mathrm{diag}} = P_{\mathrm{diag}}\circ \mathcal A_n$, where $P_{\mathrm{diag}}$ is the orthogonal projection onto the diagonal subspace of $\mathcal K_n$. Since $\|P_{\mathrm{diag}}\|_{\mathrm{op}}=1$ and $\|\mathcal A_n^\ast\|_{\mathrm{op}}=1$, it follows that $\|(\mathcal A_n^{\mathrm{diag}})^\ast\|_{\mathrm{op}}\le 1$.} we have
    \[ \big\|(\mathcal A_n^{\mathrm{diag}})^\ast(D_A-D_B)\big\|_{\mathcal H_n} \;\le\; \|D_A-D_B\|_{\mathcal K_n} \;=\; o_p(1). \]
This verifies the Partial-Approx condition in the scalar-cluster setting (where $\mu_n=n$ and $\sup_g N_g=1$). Hence, under inverse--Hadamard stability, low-rank perturbations of the projection matrix do not affect the variance estimator at first order.
\end{remark}

\medskip
\begin{applemma}[Diagonal Riesz expansion under fixed-dimensional projection] \label{applem:diag-expansion}
Suppose $\dim(X)=k$ is fixed and Assumptions~\ref{asmp:A2} and~\ref{asmp:A4} hold. Then, uniformly in $i$,
    \[ d_i = \frac{r_i}{(1-h_{ii})^2} + O_p\!\left(n^{-1}\,\|r\|_1\right). \]
\end{applemma}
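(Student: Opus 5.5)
Write $A := M\odot M = D_0 + E$ with $D_0 := \mathrm{diag}\big((1-h_{ii})^2\big)$ and $E_{ij} := M_{ij}^2 = h_{ij}^2$ for $i\neq j$, $E_{ii}=0$. The exact diagonal weight vector is $d = A^{-1}r$, and the target leading term is $D_0^{-1}r$. The plan is a Neumann-series perturbation argument, as in \eqref{eq:pert-bound}, but carried out entrywise so the bound is uniform in $i$.

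\textbf{Step 1 (leverage rates).} Under fixed $k$ and Assumptions~\ref{asmp:A2} and~\ref{asmp:A4}, I would first establish $\max_i h_{ii}=O_p(n^{-1})$ up to the moment-driven factor. Treating $M=M_{[X,W]}$ as an $n\times n$ projection with fixed-rank complement $H$ is the regime the paper has in mind: $h_{ij}=n^{-1}z_i'\Gamma_z^{-1}z_j$, so $|h_{ij}|\le C n^{-1}\|z_i\|\|z_j\|$. Consequently $\min_i(1-h_{ii})^2\ge 1/2$ with probability approaching one, and $\|D_0^{-1}\|_{\mathrm{op}}\le 2$.

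\textbf{Step 2 (size of $E$).} I would bound the row sums $\sum_{j\neq i}E_{ij}=h_{ii}(1-h_{ii})$ using Lemma~\ref{applem:indiv-offdiag}. This gives $\|E\|_{\infty}=\max_i h_{ii}(1-h_{ii})=O_p(n^{-1})$, and also $\max_{ij}E_{ij}\le \max_i h_{ii}\max_j h_{jj}=O_p(n^{-2})$ by Cauchy--Schwarz on $H$. It follows that $\|D_0^{-1}E\|_\infty=O_p(n^{-1})=o_p(1)$, so $A$ is invertible and $A^{-1}=\sum_{m\ge0}(-D_0^{-1}E)^mD_0^{-1}$ converges in the $\ell_\infty$-induced norm.

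\textbf{Step 3 (expansion).} I would write
\[ d-D_0^{-1}r = -(I+D_0^{-1}E)^{-1}D_0^{-1}E\,D_0^{-1}r. \]
The key entrywise bound is $|(E\,D_0^{-1}r)_i|\le 2\max_{ij}E_{ij}\,\|r\|_1$. This uses the \emph{max-entry} bound on $E$ rather than the row-sum bound, so an $\ell_1$ norm of $r$ appears. Since $\max_{ij}E_{ij}\le\max_i h_{ii}^2$, I actually obtain $O_p(n^{-2}\|r\|_1)$, which is stronger than needed. To avoid fussing with moment factors I would present only the stated $O_p(n^{-1}\|r\|_1)$, using $|E_{ij}|\le |h_{ij}|\le \max_i h_{ii}=O_p(n^{-1})$. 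Applying $(I+D_0^{-1}E)^{-1}$, whose $\ell_\infty$-operator norm is $O_p(1)$, and then $D_0^{-1}$ preserves the uniform-in-$i$ bound. Hence $\max_i|d_i-r_i/(1-h_{ii})^2|=O_p(n^{-1}\|r\|_1)$.

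\textbf{Main obstacle.} The hard step is Step 1: getting $\max_i h_{ii}=O_p(n^{-1})$, or at least $o_p(1)$ with the max-entry bound $O_p(n^{-1})$, uniformly in $i$. Only $4+2\lambda$ moments are assumed, so $\max_i\|z_i\|^2$ grows like $n^{2/(4+2\lambda)}$. Moreover, the controls $W$ contribute leverage unless $\ell$ is fixed as well. I would first try to interpret ``fixed dimension'' as $\dim(X)+\dim(W)$ fixed, which is what the HC discussion assumes. If only $\max_i h_{ii}=o_p(1)$ with a polynomially slower rate is available, I would carry that rate through Steps 2--3 and note that the remainder becomes $O_p(\max_i h_{ii}\,\|r\|_1)$. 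That weaker form still gives the $o_p(1)$ $\mathcal K_n$-closeness used for HC3 in Section~\ref{subsec:HC-CRVE}.
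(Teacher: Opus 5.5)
Your argument is correct. It is close to the paper's proof but differs at one step that turns out to matter.

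\textbf{What each proof does.} The paper works row by row. From $d_i = r_i/M_{ii}^2 - M_{ii}^{-2}\sum_{j\neq i}M_{ij}^2 d_j$ it bounds the sum by the row mass $\sum_{j\neq i}M_{ij}^2=h_{ii}(1-h_{ii})$ times $\|d\|_1$. It then controls $\|d\|_1\le\|(M\odot M)^{-1}\|_1\|r\|_1$ by Varah's bound for diagonally dominant matrices. You instead write the whole error as $-(I+D_0^{-1}E)^{-1}D_0^{-1}ED_0^{-1}r$ and pair the largest entry of $E$ with $\|r\|_1$. Because $M\odot M$ is symmetric, your $\ell_\infty$ Neumann bound and the paper's $\ell_1$ Varah bound express the same diagonal-dominance fact. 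The real difference is the H\"older pairing.

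\textbf{Why the pairing matters.} The paper's row-sum pairing is linear in leverage, so it needs $\max_i h_{ii}=O_p(n^{-1})$. The paper asserts this as ``standard projection geometry,'' but it does not follow from Assumption~\ref{asmp:A2} with only $4+2\lambda$ moments, so your main-obstacle concern is justified. Your pairing gives a remainder of order $\max_{i\neq j}h_{ij}^2\,\|r\|_1\le(\max_i h_{ii})^2\|r\|_1$. You therefore only need $\max_i h_{ii}=O_p(n^{-1/2})$, and that is available. By Frisch--Waugh--Lovell and Assumption~\ref{asmp:A4},
$h_{ii}=w_i'(W'W)^{-1}w_i+\tilde x_i'(\widetilde X'\widetilde X)^{-1}\tilde x_i\le n^{-1}\big(\|w_i\|^2/c_W+\|\tilde x_i\|^2/\lambda_{\min}(\Gamma_n)\big)$.
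A union bound with Markov's inequality under Assumption~\ref{asmp:A2} (per observation in the scalar design) gives $\max_i(\|w_i\|^2+\|\tilde x_i\|^2)=O_p(n^{1/(2+\lambda)})$. Hence $(\max_i h_{ii})^2=O_p\big(n^{-2(1+\lambda)/(2+\lambda)}\big)=o_p(n^{-1})$.

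\textbf{What to change.} Do not ``simplify'' Step~3 to $E_{ij}\le\max_i h_{ii}=O_p(n^{-1})$. That discards the square and reinstates exactly the unjustified rate. Keeping the square proves the statement, in fact with $o_p(n^{-1}\|r\|_1)$, under the stated assumptions. Your fallback remainder $O_p(\max_i h_{ii}\,\|r\|_1)$ is also weaker than what your own Step~3 delivers. The paper's proof can be repaired the same way, by using $\big|\sum_{j\neq i}M_{ij}^2d_j\big|\le\max_{j\neq i}M_{ij}^2\,\|d\|_1$ in place of the row sum.

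\textbf{Your concern about growing $\ell$.} This is moot under the stated assumptions. Assumptions~\ref{asmp:A2} and~\ref{asmp:A4} together give $\ell=\mathrm{tr}\big(W(W'W)^{-1}W'\big)\le(nc_W)^{-1}\sum_i\|w_i\|^2=O_p(1)$. So no reinterpretation of ``fixed dimension'' is needed.
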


\begin{proof}
In the scalar design, the diagonal partial Riesz weights satisfy
    \[ (M\odot M)d = r, \qquad\text{equivalently}\qquad \sum_{j=1}^n M_{ij}^2 d_j = r_i, \quad i=1,\ldots,n. \]
Fix $i$ and separate the diagonal term:
    \[ M_{ii}^2 d_i + \sum_{j\neq i} M_{ij}^2 d_j = r_i, \]
which implies
    \[ d_i = \frac{r_i}{M_{ii}^2} - \frac{1}{M_{ii}^2} \sum_{j\neq i} M_{ij}^2 d_j. \]
Since $M_{ii}=1-h_{ii}$, the leading term equals
    \[ \frac{r_i}{M_{ii}^2} = \frac{r_i}{(1-h_{ii})^2}. \]

\medskip
To bound the remainder, apply the triangle inequality:
    \[ \Big| \frac{1}{M_{ii}^2} \sum_{j\neq i} M_{ij}^2 d_j \Big| \le \frac{1}{M_{ii}^2} \Big( \sum_{j\neq i} M_{ij}^2 \Big)\, \|d\|_1. \]
Under fixed-dimensional and well-conditioned designs, standard projection geometry implies $\max_i h_{ii}=O_p(n^{-1})$. Since $M$ is a symmetric idempotent projection,
   \[ \sum_{j\neq i} M_{ij}^2 = M_{ii} - M_{ii}^2 = h_{ii}(1-h_{ii}) = O_p(n^{-1}) \quad\text{uniformly in } i. \]
Moreover, since $h_{ii}=o_p(1)$, there exists $c>0$ such that $\min_i M_{ii}^2 \ge c^2$ with probability approaching one. Thus the Hadamard square $M\odot M$ is asymptotically diagonally dominant by columns. By Varah’s bound for strictly diagonally dominant matrices,
    \[ \|(M\odot M)^{-1}\|_1 = O_p(1). \]

Since $d=(M\odot M)^{-1}r$, submultiplicativity of the induced $\ell_1$ norm yields
    \[ \|d\|_1 \le \|(M\odot M)^{-1}\|_1\,\|r\|_1 = O_p(\|r\|_1). \]
Combining the above bounds,
    \[ \Big| \frac{1}{M_{ii}^2} \sum_{j\neq i} M_{ij}^2 d_j \Big| = O_p\!\left(n^{-1}\|r\|_1\right), \]
uniformly in $i$. Substituting back completes the proof.
\end{proof}

\medskip
\begin{applemma}[Block-diagonal approximation for the inverse Khatri--Rao system] \label{applem:block-KR-approx}
Consider the diagonal-moment partial Riesz system \eqref{eq:Khatri-Rao}, stacked as $r=Bd$ with $B_{gh} := M_{gh}\otimes M_{gh}$. Let $B_0$ denote the block-diagonal part of $B$ and $E:=B-B_0$. Suppose:
\begin{enumerate}[(i)]
    \item \textbf{Uniform stability of the diagonal blocks:} $B_0$ is invertible and $\|B_0^{-1}\|_{\mathrm{op}} = O_p(1)$;
    \item \textbf{Two-part leakage control:} 
        \[ \|B_0^{-1}E\|_{\mathrm{op}} =o_p(1) \quad \text{and} \quad o_p\!\Big(\mu_n^{-1}\frac{n}{\sup_g N_g}\Big). \]
\end{enumerate}
Then
    \[ \|d-B_0^{-1}r\|_2 = o_p\!\Big(\mu_n^{-1}\frac{n}{\sup_g N_g}\Big)\,\|r\|_2 . \]
\end{applemma}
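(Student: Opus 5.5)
The plan is a standard Neumann-series perturbation argument built on the factorization $B = B_0(I + B_0^{-1}E)$. First, set $K := B_0^{-1}E$. By condition (ii), $\|K\|_{\mathrm{op}} = o_p(1)$, so on an event of probability tending to one we have $\|K\|_{\mathrm{op}} \le 1/2$. On that event $I+K$ is invertible, with
\[
\|(I+K)^{-1}\|_{\mathrm{op}} \le \frac{1}{1-\|K\|_{\mathrm{op}}} \le 2
\]
from the Neumann series. Invertibility of $B_0$ by condition (i) then gives invertibility of $B$, so $d = B^{-1}r$ is uniquely defined with probability approaching one. (If $r=Bd$ is only assumed solvable, uniqueness follows on this event as well.)

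Next, I compare $d$ with $d_0 := B_0^{-1}r$. Applying $B_0^{-1}$ to $r = (B_0+E)d$ gives $d_0 = (I+K)d$. Hence
\[
d - d_0 = -Kd = -K(I+K)^{-1}d_0 .
\]
This yields exactly the bound \eqref{eq:pert-bound}:
\[
\|d-d_0\|_2 \le \|(I+K)^{-1}\|_{\mathrm{op}}\,\|K\|_{\mathrm{op}}\,\|B_0^{-1}\|_{\mathrm{op}}\,\|r\|_2 .
\]
This uses that $K$ and $(I+K)^{-1}$ commute.

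It remains to read off the rates. The first factor is $O_p(1)$ by the Neumann step, and $\|B_0^{-1}\|_{\mathrm{op}} = O_p(1)$ by condition (i). The second part of condition (ii) gives $\|K\|_{\mathrm{op}} = o_p\!\big(\mu_n^{-1} n/\sup_g N_g\big)$. Multiplying, and using $O_p(1)\cdot o_p(a_n) = o_p(a_n)$ for deterministic $a_n$, we get
\[
\|d - B_0^{-1}r\|_2 = o_p\!\big(\mu_n^{-1} n/\sup_g N_g\big)\,\|r\|_2 .
\]
Here the $o_p$ is understood as a random factor multiplying $\|r\|_2$, which is how I will phrase it so that no bound on $\|r\|_2$ itself is needed.

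The only delicate point is bookkeeping of the high-probability event. The conclusion is a statement in probability, so I will define the event $\{\|K\|_{\mathrm{op}}\le 1/2\}$, note that its probability tends to one, and argue on it. Whatever convention is used for $d$ off this event does not affect the $o_p$ statement.
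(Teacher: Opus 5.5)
Your proposal is correct and follows essentially the same Neumann-series perturbation argument as the paper, built on $B=B_0(I+B_0^{-1}E)$ and on multiplying $O_p(1)$ factors by the second rate in (ii). The only difference is cosmetic: you work at the vector level with $d-d_0=-K(I+K)^{-1}d_0$ on the event $\{\|K\|_{\mathrm{op}}\le 1/2\}$, whereas the paper bounds $\|B^{-1}-B_0^{-1}\|_{\mathrm{op}}\le \|K\|_{\mathrm{op}}(1-\|K\|_{\mathrm{op}})^{-1}\|B_0^{-1}\|_{\mathrm{op}}$ and then multiplies by $\|r\|_2$.
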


\begin{proof}
Write $B=B_0+E$. Since $B_0$ is invertible, we may factor
    \[ B = B_0(I+B_0^{-1}E), \qquad\text{so that} \qquad B^{-1}=(I+B_0^{-1}E)^{-1}B_0^{-1}, \]
whenever $I+B_0^{-1}E$ is invertible. 

\medskip
By assumption (ii), $\|B_0^{-1}E\|_{\mathrm{op}} = o_p(1)$, hence $\|B_0^{-1}E\|_{\mathrm{op}}<1$ with probability approaching one. On this event, $I+B_0^{-1}E$ is invertible and the Neumann series yields
    \[ (I + B_0^{-1} E)^{-1} = \sum_{k=0}^{\infty} (-B_0^{-1} E)^k, \]
so that
    \[ B^{-1} - B_0^{-1} = \Big( (I+B_0^{-1} E)^{-1} - I \Big) B_0^{-1} = -\, (I + B_0^{-1} E)^{-1} (B_0^{-1} E) B_0^{-1}. \]
Taking operator norms and using $\|(I+K)^{-1}\|_{\mathrm{op}} \le (1-\|K\|_{\mathrm{op}})^{-1}$ when $\|K\|_{\mathrm{op}}<1$, we obtain on the event $\{\|B_0^{-1}E\|_{\mathrm{op}}<1\}$,
    \[ \|B^{-1}-B_0^{-1}\|_{\mathrm{op}} \le \frac{\|B_0^{-1}E\|_{\mathrm{op}}}{1-\|B_0^{-1}E\|_{\mathrm{op}}}\, \|B_0^{-1}\|_{\mathrm{op}}. \]
By (i), $\|B_0^{-1}\|_{\mathrm{op}}=O_p(1)$, and by (ii), $\|B_0^{-1}E\|_{\mathrm{op}}=o_p(1)$, so the prefactor $\big(1-\|B_0^{-1}E\|_{\mathrm{op}}\big)^{-1}=O_p(1)$. Hence
    \[ \| B^{-1} - B_0^{-1} \|_{\mathrm{op}} = O_p(1) \, \| B_0^{-1} E \|_{\mathrm{op}}. \]
Multiplying by $\|r\|_2$ gives
    \[ \|d - B_0^{-1} r\|_2 \le \|B^{-1} - B_0^{-1} \|_{\mathrm{op}} \, \|r\|_2 = O_p(1) \, \| B_0^{-1} E \|_{\mathrm{op}} \, \|r\|_2. \]
Finally, using the second part of (ii), $\| B_0^{-1} E \|_{\mathrm{op}} = o_p \! \big( \mu_n^{-1} n / \sup_g N_g \big)$, we conclude
    \[ \| d - B_0^{-1} r \|_2 = o_p \! \Big( \mu_n^{-1} \frac{n}{\sup_g N_g} \Big) \, \|r\|_2, \]
as claimed.
\end{proof}

\newpage
\section{Preliminary Lemmas}
\begin{prelemma} \label{pre:bound}
Suppose Assumption~\ref{asmp:A2} holds. Then, for any $1 \le \theta \le 2+\lambda$,
\begin{enumerate}[(i)]
    \item \(\displaystyle \sup_{g} \; N_g^{-\theta} \, \mathbb{E} \big\| \widetilde{X}_g' \widetilde{X}_g \big\|_{F}^{\theta} = O(1)\) and \(\displaystyle \sup_{g} \; N_g^{-\theta} \, \big\| \widetilde{X}_g' \widetilde{X}_g \big\|_{F}^{\theta} = O_p(1)\);
    \item \(\displaystyle \sup_{g}\; N_g^{-\theta}\, \mathbb{E}\!\big[ \|U_g U_g'\|_F^{\theta} \mid X,W \big] = O(1)\);
    \item \(\displaystyle \sup_{g}\; N_g^{-\theta}\, \mathbb{E}\!\big[ \|\widetilde X'_g U_g\|^{\theta} \mid X,W \big] = O_p(1).\)
\end{enumerate}
\end{prelemma}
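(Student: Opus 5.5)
The plan is to reduce all three bounds to per-observation moment bounds via convexity, specifically the power-mean (Jensen) inequality $\big(\sum_{i=1}^{N} a_i\big)^\theta \le N^{\theta-1}\sum_{i=1}^N a_i^\theta$ for $a_i\ge 0$, $\theta\ge1$. Every block quantity in the statement is a sum of $N_g$ or $N_g^2$ elementary terms. The per-observation moment controls of Assumption~\ref{asmp:A2} then supply the required $N_g^{\theta}$ normalization, because $2\theta\le 4+2\lambda$ whenever $\theta\le 2+\lambda$.

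For (i), write $\widetilde X_g'\widetilde X_g=\sum_{i=1}^{N_g}\tilde x_{gi}\tilde x_{gi}'$. The triangle inequality and $\|\tilde x\tilde x'\|_F=\|\tilde x\|^2$ give $\|\widetilde X_g'\widetilde X_g\|_F\le\sum_i\|\tilde x_{gi}\|^2$. Power-mean then yields $N_g^{-\theta}\|\widetilde X_g'\widetilde X_g\|_F^\theta\le N_g^{-1}\sum_i\|\tilde x_{gi}\|^{2\theta}$. Since $2\theta\le 4+2\lambda$, Lyapunov/Jensen bounds $\mathbb E\|\tilde x_{gi}\|^{2\theta}$ by $\big(\mathbb E\|\tilde x_{gi}\|^{4+2\lambda}\big)^{2\theta/(4+2\lambda)}$, or more crudely by $1+\mathbb E\|\tilde x_{gi}\|^{4+2\lambda}$. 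The cluster average is therefore bounded uniformly in $g$ by Assumption~\ref{asmp:A2}, which proves the expectation claim.

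The $O_p(1)$ claim for the supremum over $g$ is the step I expect to be the real obstacle, since a uniform-in-$g$ expectation bound does not control $\sup_g$ of a random variable when $G\to\infty$. I would try two routes, in order. First, read the statement as uniform stochastic boundedness, i.e.\ $\sup_g P(N_g^{-\theta}\|\cdot\|_F^\theta>K)\to0$ as $K\to\infty$; this follows directly from Markov's inequality and the expectation bound. Second, if a genuine maximum is needed, apply Markov to the maximum through a union bound. This requires an additional rate, such as $G\cdot\sup_g\mathbb E(\cdot)^{(2+\lambda)/\theta}K^{-(2+\lambda)/\theta}\to0$, which I would seek by exploiting the slack between $\theta$ and $2+\lambda$ together with Assumption~\ref{asmp:A3}. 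I suspect the authors intend the first reading.

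For (ii), use $\|U_gU_g'\|_F=\|U_g\|^2=\sum_i u_{gi}^2$. Power-mean gives $N_g^{-\theta}\|U_gU_g'\|_F^\theta\le N_g^{-1}\sum_i|u_{gi}|^{2\theta}$. Its conditional expectation is bounded by $\sup_{g,i}\mathbb E[|u_{gi}|^{2\theta}\mid X,W]$, which is finite by conditional Lyapunov since $2\theta\le4+2\lambda$. For (iii), use $\|\widetilde X_g'U_g\|\le\|\widetilde X_g\|_F\|U_g\|$, so that $N_g^{-\theta}\|\widetilde X_g'U_g\|^\theta\le \big(N_g^{-1}\|\widetilde X_g\|_F^2\big)^{\theta/2}\big(N_g^{-1}\|U_g\|^2\big)^{\theta/2}$. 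The first factor is $(X,W)$-measurable and can be pulled out of the conditional expectation. It is bounded, again by power-mean, by $\big(N_g^{-1}\sum_i\|\tilde x_{gi}\|^{\theta}\big)$-type averages, and it is $O_p(1)$ under the same interpretation as in (i). The second factor's conditional expectation is $O(1)$ by the argument for (ii) with $\theta/2$ in place of $\theta$. Their product gives the $O_p(1)$ bound.
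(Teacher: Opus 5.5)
Your bounds for the expectation part of (i), for (ii), and for (iii) follow the paper's proof essentially step for step. Both use the triangle inequality with $\|vv'\|_F=\|v\|^2$, the power-mean inequality, Lyapunov/Jensen via $2\theta\le 4+2\lambda$, and a Cauchy--Schwarz split of $\widetilde X_g'U_g$ into an $(X,W)$-measurable factor and an error factor. The paper splits as $\|\widetilde X_g'U_g\|^2\le\|\widetilde X_g'\widetilde X_g\|_F\,\|U_gU_g'\|_F$ and then uses $\mathbb{E}[\|U_gU_g'\|_F^{\theta/2}\mid X,W]\le(\mathbb{E}[\|U_gU_g'\|_F^{\theta}\mid X,W])^{1/2}$, so (i) and (ii) are reused at the same $\theta$. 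You should do likewise. For $1\le\theta<2$, ``(ii) with $\theta/2$ in place of $\theta$'' leaves the range in which the power-mean inequality points the right way, and $(N_g^{-1}\|\widetilde X_g\|_F^2)^{\theta/2}$ must be handled by Jensen (concavity) rather than power-mean. This is a one-line repair.

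The obstacle you flagged is genuine, and the paper does not get past it either. After Markov gives $\sup_g\mathbb{P}(N_g^{-\theta}\|\widetilde X_g'\widetilde X_g\|_F^\theta>K)\le C/K$, the proof simply asserts ``and thus'' $\sup_g N_g^{-\theta}\|\widetilde X_g'\widetilde X_g\|_F^\theta=O_p(1)$. The $O_p(1)$ in (iii) is then obtained by plugging in that supremum. Your first route is exactly what the argument proves.

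Your second route cannot succeed. With $p=(2+\lambda)/\theta$, the union bound gives $\mathbb{P}(\max_g Y_g>K)\le G\,C\,K^{-p}$, which diverges for every fixed $K$. Assumption~\ref{asmp:A3} makes this worse, not better, since it forces $G\ge n/\sup_gN_g\to\infty$; the best you get is $\max_g Y_g=O_p(G^{1/p})$. No other argument can work either, because the literal claim is false. Take singleton clusters, no controls (or just an intercept), and $x_i$ i.i.d.\ $\mathcal N(0,1)$. Assumptions~\ref{asmp:A2} and~\ref{asmp:A4} hold, yet $\sup_g N_g^{-\theta}\|\widetilde X_g'\widetilde X_g\|_F^\theta$ behaves like $(\max_i x_i^2)^\theta$, which grows like $(2\log n)^\theta$. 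Adding homoskedastic errors breaks the supremum in (iii) the same way.

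So state and prove (i) and (iii) in uniform-tightness form, or as bounds on $\sup_g$ of unconditional expectations. That is all the later arguments need. Lemma~\ref{pre:riesz-bdd} uses $\sum_g\|\widetilde X_g'\widetilde X_g\|_F^2=O_p(\sum_gN_g^2)$. The Lyapunov step of Theorem~\ref{thm:riesz-t} uses $\sum_g\mathbb{E}[\|\widetilde X_g'U_g\|^{\theta}\mid X,W]=O_p(\sum_gN_g^{\theta})$. Both follow from your expectation bounds by Markov applied to the sums. For the second, take unconditional expectations using $\mathbb{E}[\|U_g\|^{\theta}\mid X,W]\le CN_g^{\theta/2}$ and $\mathbb{E}\|\widetilde X_g\|_F^{\theta}\le CN_g^{\theta/2}$.
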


\begin{proof} 
\textbf{(i)} Since $\|vv'\|_F = \|v\|^2$ for any vector $v$ and the Frobenius norm satisfies the triangle inequality,
    \[ \big\|\widetilde{X}_g' \widetilde{X}_g\big\|_F = \Big\|\sum_{i=1}^{N_g} \tilde x_{gi}\tilde x_{gi}'\Big\|_F \le \sum_{i=1}^{N_g} \|\tilde x_{gi}\tilde x_{gi}'\|_F = \sum_{i=1}^{N_g} \|\tilde x_{gi}\|^2. \]
For any $\theta \ge 1$ and scalars $a_i$, \( \big( \sum_{i=1}^{N_g} a_i \big)^\theta 
\le N_g^{\theta-1}\sum_{i=1}^{N_g}|a_i|^\theta.\) Applying this with $a_i = \|\tilde x_{gi}\|^2$ gives
    \begin{equation} \label{eq:L1.1}
        N_g^{-\theta}\big\|\widetilde{X}_g' \widetilde{X}_g\big\|_F^\theta \;\le\; N_g^{-\theta} \bigg(\sum_{i=1}^{N_g} \|\tilde  x_{gi}\|^2 \bigg)^{\theta} \;\le\; \frac{1}{N_g}\sum_{i=1}^{N_g}\|\tilde x_{gi}\|^{2\theta}. \tag{L1.1}
    \end{equation}
For $1 \le \theta \le 2+\lambda$ we have $2\theta \le 4+2\lambda$. By Lyapunov’s inequality, for each $(g,i)$,
    \[ \mathbb{E}\|\tilde x_{gi}\|^{2\theta} \;\le\; \big(\mathbb{E}\|\tilde x_{gi}\|^{4+2\lambda}\big)^{\frac{2\theta}{4+2\lambda}} = \big(\mathbb{E}\|\tilde x_{gi}\|^{4+2\lambda}\big)^{\frac{\theta}{2+\lambda}}. \]
Since $0 < \frac{\theta}{2+\lambda} \le 1$, the map $t \mapsto t^{\theta/(2+\lambda)}$ is concave, thus by Jensen’s inequality and Assumption~\ref{asmp:A2},
    \[ \frac{1}{N_g}\sum_{i=1}^{N_g}\mathbb{E}\|\tilde x_{gi}\|^{2\theta} 
    \le \frac{1}{N_g} \sum_{i=1}^{N_g} \big(\mathbb{E}\|\tilde x_{gi}\|^{4+2\lambda}\big)^{\frac{\theta}{2+1\lambda}} 
    \le \bigg( \frac{1}{N_g}\sum_{i=1}^{N_g}\mathbb{E}\|\tilde x_{gi}\|^{4+2\lambda} \bigg)^{\frac{\theta}{2+\lambda}} 
    \le C_{\tilde x}^{\frac{\theta}{2+\lambda}} < \infty. \]
Therefore, taking expectations in~\eqref{eq:L1.1} and then the supremum over $g$ yields
    \[ \sup_g N_g^{-\theta}\,\mathbb{E}\big\|\widetilde{X}_g' \widetilde{X}_g\big\|_F^\theta  \;\le\; \sup_g \frac{1}{N_g}\sum_{i=1}^{N_g}\mathbb{E}\|\tilde x_{gi}\|^{2\theta} = O(1). \]

For the stochastic bound, for any $M>0$, by Markov’s inequality:
    \[ \sup_g \mathbb{P}\!\left( N_g^{-\theta}\big\|\widetilde{X}_g' \widetilde{X}_g\big\|_F^\theta > M \right)  \;\le\; \frac{\sup_g \mathbb{E}\!\left[N_g^{-\theta}\big\|\widetilde{X}_g' \widetilde{X}_g\big\|_F^\theta\right]}{M} = \frac{O(1)}{M} \;\to\; 0 \quad\text{as } M\to\infty. \]
Hence $N_g^{-\theta}\|\widetilde{X}_g' \widetilde{X}_g\|_F^\theta = O_p(1)$ uniformly in $g$, and thus 
\(\sup_g N_g^{-\theta}\big\|\widetilde{X}_g' \widetilde{X}_g\big\|_F^\theta = O_p(1)\).

\noindent
\textbf{(ii)} This argument parallels that of (i). Since $\|U_g U_g'\|_F = \|U_g\|^2 = \sum_{i=1}^{N_g} u_{gi}^2$, we have for all $\theta\ge1$,
    \[ \|U_g U_g'\|_F^\theta = \Big( \sum_{i=1}^{N_g} u_{gi}^2 \Big)^\theta \le N_g^{\theta-1} \sum_{i=1}^{N_g} |u_{gi}|^{2\theta}. \]
Taking conditional expectations and using the conditional Lyapunov inequality,
    \[ N_g^{-\theta}\, \mathbb{E}\!\big[ \|U_g U_g'\|_F^\theta \mid X,W \big]
    \le \frac{1}{N_g} \sum_{i=1}^{N_g} \mathbb{E}\!\big( |u_{gi}|^{2\theta} \mid X,W \big) 
    \le \frac{1}{N_g} \sum_{i=1}^{N_g} \Big( \mathbb{E}\!\big( |u_{gi}|^{4+2\lambda} \mid X,W \big) \Big)^{\frac{\theta}{2+\lambda}}.\]
By Jensen's inequality and Assumption~\ref{asmp:A2},
    \[ N_g^{-\theta}\, \mathbb{E}\!\big[ \|U_g U_g'\|_F^\theta \mid X,W \big]  \le \bigg( \frac{1}{N_g} \sum_{i=1}^{N_g} \mathbb{E}\!\big[ |u_{gi}|^{4+2\lambda} \mid X,W \big] \bigg)^{\frac{\theta}{2+\lambda}} \le C_u^{\frac{\theta}{2+\lambda}} <\infty. \]
Therefore,
    \[ \sup_g\; N_g^{-\theta}\, \mathbb{E}\!\big[ \|U_g U_g'\|_F^\theta \mid X,W \big] = O(1). \]

\medskip \noindent
\textbf{(iii)} Note that for any matrix $A$ and vector $v$, \(\|A'v\|^2 = v'AA'v = \operatorname{tr}(AA'vv') \le \|AA'\|_F \, \|vv'\|_F. \) Applying this with $A=\widetilde X_g$ and $v=U_g$, and raising both sides to the power $\theta/2 \ge 1/2$,
    \[ \|\widetilde X_g' U_g\|^2 \;\le\;  \|\widetilde X_g' \widetilde X_g\|_F \, \|U_g U_g'\|_F \quad \Longrightarrow \quad \|\widetilde X_g' U_g\|^{\theta} \;\le\; \|\widetilde X_g' \widetilde X_g\|_{F}^{\theta/2}  \; \|U_g U_g'\|_{F}^{\theta/2}. \]
Taking conditional expectation and applying Cauchy-Schwarz,
    \begin{align*}
        N_g^{-\theta}\, \mathbb{E}\!\left( \|\widetilde X_g' U_g\|^{\theta} \,\middle|\, X,W \right)
        &\le N_g^{-\theta}\, \|\widetilde X_g' \widetilde X_g\|_{F}^{\theta/2} \, \mathbb{E}\!\left[ \|U_g U_g'\|_{F}^{\theta/2} \mid X,W \right] \\
        &\le N_g^{-\theta}\, \|\widetilde X_g' \widetilde X_g\|_{F}^{\theta/2} \, \Big( \mathbb{E}\!\left[ \|U_g U_g'\|_{F}^{\theta} \mid X,W \right] \Big)^{1/2} \\
        &= \Big( N_g^{-\theta}\|\widetilde X_g' \widetilde X_g\|_F^{\theta} \Big)^{1/2} \Big( N_g^{-\theta}\, \mathbb{E}\!\left[ \|U_g U_g'\|_F^{\theta} \mid X,W \right] \Big)^{1/2}.
    \end{align*}
Taking supremum over $g$,
   \[  \sup_g N_g^{-\theta}\, \mathbb{E}\!\left( \|\widetilde X_g' U_g\|^{\theta} \mid X,W \right) \le \Big( \sup_g N_g^{-\theta}\|\widetilde X_g' \widetilde X_g\|_F^{\theta} \Big)^{1/2} \Big( \sup_g N_g^{-\theta}\, \mathbb{E}\!\left( \|U_g U_g'\|_F^{\theta} \mid X,W \right) \Big)^{1/2}. \]
Therefore, by parts \emph{(i)} and \emph{(ii)} of the lemma: 
    \[ \sup_g N_g^{-\theta}\, \mathbb{E}\!\left( \|\widetilde X_g' U_g\|^{\theta} \mid X,W \right) = O_p(1)^{1/2} \, O(1)^{1/2} = O_p(1). \]
\end{proof}

\medskip
\begin{prelemma}[Boundedness of $Q_n$ and $R_n^\star$] \label{pre:riesz-bdd}
Let $Q_n := \Gamma_n^{-1} a a'\Gamma_n^{-1}$ with $\Gamma_n = n^{-1}\sum_{g=1}^G \widetilde X_g'\widetilde X_g $, and $R_n^\star := \{\widetilde X_g Q_n \widetilde X_g'\}_{g=1}^G \in \mathcal H_n $. Under Assumptions~\ref{asmp:A2} and \ref{asmp:A4},
    \[ \|Q_n\|_F = O_p(1) \qquad\text{and}\qquad  \|R_n^\star\|_{\mathcal H_n} = O_p \bigg( \sqrt{ \frac{\sup_g N_g}{n}} \bigg). \]
\end{prelemma}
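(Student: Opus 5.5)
The plan is to handle $Q_n$ first and then reduce $\|R_n^\star\|_{\mathcal H_n}$ to cluster-level design moments controlled by Lemma~\ref{pre:bound}(i).

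\textbf{Bound on $Q_n$.} Since $Q_n=\Gamma_n^{-1}aa'\Gamma_n^{-1}$ has rank one, $\|Q_n\|_F=\|\Gamma_n^{-1}a\|^2\le\|\Gamma_n^{-1}\|_{\mathrm{op}}^2$ because $a$ is a unit vector. By Assumption~\ref{asmp:A4}, $\Gamma_n\xrightarrow{p}\Gamma_0\succ0$. The continuous mapping theorem then gives $\lambda_{\min}(\Gamma_n)\xrightarrow{p}\lambda_{\min}(\Gamma_0)>0$. Hence $\|\Gamma_n^{-1}\|_{\mathrm{op}}=O_p(1)$, and so $\|Q_n\|_F=O_p(1)$.

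\textbf{Bound on $R_n^\star$.} By definition,
\[
\|R_n^\star\|_{\mathcal H_n}^2=\frac1{n^2}\sum_{g=1}^G\|\widetilde X_gQ_n\widetilde X_g'\|_F^2 .
\]
For each block I would use $\|\widetilde X_gQ_n\widetilde X_g'\|_F\le\|Q_n\|_{\mathrm{op}}\,\|\widetilde X_g\|_F^2$. Since $Q_n$ is positive semidefinite of rank one, one can equivalently write $\widetilde X_gQ_n\widetilde X_g'=vv'$ with $v=\widetilde X_g\Gamma_n^{-1}a$, so the block norm is $\|v\|^2$. Next, $\|\widetilde X_g\|_F^2=\mathrm{tr}(\widetilde X_g'\widetilde X_g)\le\sqrt{k}\,\|\widetilde X_g'\widetilde X_g\|_F$. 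This yields
\[
\|R_n^\star\|_{\mathcal H_n}^2\le \frac{k\,\|Q_n\|_F^2}{n^2}\sum_g \|\widetilde X_g'\widetilde X_g\|_F^2 .
\]
Using Lemma~\ref{pre:bound}(i) with $\theta=2$, which is admissible since $\lambda>0$, gives $\mathbb E\|\widetilde X_g'\widetilde X_g\|_F^2\le CN_g^2$ uniformly in $g$. Therefore
\[
\mathbb E\sum_g\|\widetilde X_g'\widetilde X_g\|_F^2\le C\sum_gN_g^2\le C\,\sup_gN_g\sum_gN_g=C\,n\sup_gN_g .
\]
Markov's inequality then gives $\sum_g\|\widetilde X_g'\widetilde X_g\|_F^2=O_p(n\sup_gN_g)$. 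Combining this with $\|Q_n\|_F=O_p(1)$ yields $\|R_n^\star\|_{\mathcal H_n}^2=O_p(\sup_gN_g/n)$, and taking square roots gives the claim.

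\textbf{Main obstacle.} The delicate step is the passage from the per-cluster moment bound to the aggregate sum. Here I rely on the expectation version of Lemma~\ref{pre:bound}(i), summed over clusters, rather than on the uniform $O_p$ version; the latter would only give $G\sup_gN_g^2$, which is too crude when cluster sizes are heterogeneous. The randomness of $\Gamma_n^{-1}$ is harmless because it factors out through $\|Q_n\|_F$ before any expectation is taken.
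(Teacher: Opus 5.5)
Your proof is correct and follows the paper's argument. You bound $\|Q_n\|_F$ by $\|\Gamma_n^{-1}\|_{\mathrm{op}}^2$, bound each block of $R_n^\star$ by $\|Q_n\|^2\,\|\widetilde X_g'\widetilde X_g\|_F^2$, and close with Lemma~\ref{pre:bound}(i) at $\theta=2$ and $\sum_g N_g^2\le n\sup_g N_g$. Your route through $\|\widetilde X_g\|_F^2$ costs a harmless factor $k$ that the paper's trace/Cauchy--Schwarz bound avoids.

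The one substantive difference is the aggregation step. The paper sums using the claim $\sup_g N_g^{-2}\|\widetilde X_g'\widetilde X_g\|_F^2=O_p(1)$. If true, this would give the $\sum_g N_g^2$ rate directly, not the $G\sup_g N_g^2$ you attribute to it. However, its proof only shows per-cluster tightness, which does not control a supremum over $G\to\infty$ clusters. Your expectation-plus-Markov bound on the whole sum is therefore the rigorous way to reach the rate.
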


\noindent\emph{Remark.} 
This lemma shows that the Riesz representer $R_n^\star$ is stochastically controlled at the natural cluster-growth rate. Combined with the stability condition on $\mathcal A_n^\ast$ in Assumption~\ref{asmp:A5}, this will yield uniform bounds on the minimum-norm solutions $D_n^\star$.

\begin{proof}
\emph{Step 1: Boundedness of $Q_n$.}  
Assumption~\ref{asmp:A4} implies $\Gamma_n \xrightarrow{p} \Gamma_0 \succ 0$, so the eigenvalues of $\Gamma_n$ are bounded away from zero and infinity with probability approaching one. Therefore
    \[ \|\Gamma_n^{-1}\|_{\mathrm{op}} = O_p(1). \]
Since $a$ is fixed with $\|a\|=1$,
    \[ Q_n  = \Gamma_n^{-1} a a' \Gamma_n^{-1}  \quad\Longrightarrow\quad  \|Q_n\|_F \le \|\Gamma_n^{-1}\|_{\mathrm{op}}^2 \,\|aa'\|_F  = \|\Gamma_n^{-1}\|_{\mathrm{op}}^2 = O_p(1). \]

\medskip \noindent
\emph{Step 2: Boundedness of $R_n^\star$.} By definition,
    \[ \|R_n^\star\|_{\mathcal H_n}^2  = \frac{1}{n^2}\sum_{g=1}^G \|R_{g,n}^\star\|_F^2, \qquad   R_{g,n}^\star := \widetilde X_g Q_n \widetilde X_g'. \]
Using the identity \(\|B\|_F^2 = \operatorname{tr}(B'B)\) and the cyclicity of the trace, we obtain
    \[ \|R_{g,n}^\star\|_F^2 = \operatorname{tr}\big( (\widetilde X_g Q_n \widetilde X_g')' (\widetilde X_g Q_n \widetilde X_g') \big) = \operatorname{tr}\big( Q_n \widetilde X_g' \widetilde X_g Q_n \widetilde X_g' \widetilde X_g\big). \]
Let $A_g := \widetilde X_g'\widetilde X_g$, and by $\operatorname{tr}(BC) \le \|B\|_{\mathrm{op}} \, \|C\|_F$ and $\|Q_n A_g Q_n\|_F \le \|Q_n\|_{\mathrm{op}}^2 \|A_g\|_F$, we have
    \[ \|R_{g,n}^\star\|_F^2  = \operatorname{tr}(Q_n A_g Q_n A_g) \le \|Q_n A_g Q_n\|_F \,\|A_g\|_F \le \|Q_n\|_{\mathrm{op}}^2 \,\|A_g\|_F^2 = \|Q_n\|_{\mathrm{op}}^2 \,\|\widetilde X_g'\widetilde X_g\|_F^2. \]
Hence
    \[ \|R_n^\star\|_{\mathcal H_n}^2 \le \frac{\|Q_n\|_{\mathrm{op}}^2}{n^2} \sum_{g=1}^G \|\widetilde X_g'\widetilde X_g\|_F^2. \]

By Lemma~\ref{pre:bound}(i) with $\theta = 2$, \(\|\widetilde X_g'\widetilde X_g\|_F^2 = O_p(N_g^2)\) uniformly in $g$. Therefore
    \[ \frac{1}{n^2} \sum_{g=1}^G \|\widetilde X_g'\widetilde X_g\|_F^2 = O_p\!\left( \frac{1}{n^2} \sum_{g=1}^G N_g^2 \right) = O_p\!\left( \frac{\sup_g N_g}{n} \right). \]
Combining this with $\|Q_n\|_{\mathrm{op}}=O_p(1)$ from Step~1 yields
    \[ \|R_n^\star\|_{\mathcal H_n}^2 = O_p(1)\cdot O_p\!\left( \frac{\sup_g N_g}{n} \right) = O_p\!\left( \frac{\sup_g N_g}{n} \right), \]
and hence
    \[ \|R_n^\star\|_{\mathcal H_n} = O_p\!\left( \sqrt{\frac{\sup_g N_g}{n}} \right). \]
\end{proof}

\medskip
\begin{prelemma}[Boundedness of the minimum-norm dual solution]\label{pre:Dstar-bdd}
For each $n$, let $D_n^\star \in \mathcal K_n$ be the minimum-Frobenius-norm solution of \(\mathcal A_n^\ast D = R_n^\star,\) as in~\eqref{eq:min-norm}. Under Assumption~\ref{asmp:A5} and Lemma~\ref{pre:riesz-bdd},
    \[ \|D_n^\star\|_{\mathcal K_n} = O_p \bigg( \sqrt{\frac{\sup_g N_g}{n}} \bigg). \]
\end{prelemma}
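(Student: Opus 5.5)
The bound follows by combining the stability part of Assumption~\ref{asmp:A5} with Lemma~\ref{pre:riesz-bdd}. First I would pin down $D_n^\star$. By Assumption~\ref{asmp:A5}(i), for all large $n$ we have $R_n^\star\in\mathrm{Range}(\mathcal A_n^\ast)$, so the constraint set $\{D:\mathcal A_n^\ast D=R_n^\star\}$ is a nonempty affine subspace of the finite-dimensional space $\mathcal K_n$. Its minimum-norm element is unique and equals $D_n^\star=(\mathcal A_n^\ast)^+R_n^\star$, as in Definition~\ref{def:minnorm}. This element lies in $\ker(\mathcal A_n^\ast)^\perp=\mathrm{Range}(\mathcal A_n)$: any component in $\ker(\mathcal A_n^\ast)$ could be removed without changing $\mathcal A_n^\ast D$, and removing it strictly lowers the norm.

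Next I would convert this to a norm bound. Restricted to $\mathrm{Range}(\mathcal A_n)$, the operator $\mathcal A_n^\ast$ is a bijection onto $\mathrm{Range}(\mathcal A_n^\ast)$. Its inverse is the Moore--Penrose pseudoinverse restricted to $\mathrm{Range}(\mathcal A_n^\ast)$, which is exactly the object bounded in Assumption~\ref{asmp:A5}(ii). This gives
\[
\|D_n^\star\|_{\mathcal K_n}=\big\|(\mathcal A_n^\ast|_{\mathrm{Range}(\mathcal A_n^\ast)})^{+}R_n^\star\big\|_{\mathcal K_n}\le c\,\|R_n^\star\|_{\mathcal H_n}.
\]
Here $c$ does not depend on $n$, and for each $n$ the inequality is deterministic conditional on the design $(X,W)$.

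Finally I would insert the rate from Lemma~\ref{pre:riesz-bdd}, namely $\|R_n^\star\|_{\mathcal H_n}=O_p(\sqrt{\sup_g N_g/n})$. Multiplying by the constant $c$ preserves the $O_p$ order. The statement only concerns large $n$, so failure of (i) for finitely many $n$ is harmless.

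The main obstacle is interpretive rather than computational. Assumption~\ref{asmp:A5}(ii) must be read as a bound on the norm of the map sending $R\in\mathrm{Range}(\mathcal A_n^\ast)$ to its minimum-norm preimage. Equivalently, the smallest nonzero singular value of $\mathcal A_n^\ast$ is at least $1/c$. I would state this reading explicitly and note that it is the standard one.

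A second subtlety is that $\mathcal A_n^\ast$ is random through $M$. If the bound in (ii) holds only with probability approaching one rather than surely, the conclusion still holds. On that event the displayed inequality applies, and the event's complement has vanishing probability, so the $O_p$ statement is unaffected.
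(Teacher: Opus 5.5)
Your proposal is correct and follows essentially the same route as the paper. Both arguments use Assumption~\ref{asmp:A5}(i) to write $D_n^\star$ as the pseudoinverse of $\mathcal A_n^\ast$ applied to $R_n^\star$, bound its norm by the uniform constant from Assumption~\ref{asmp:A5}(ii), and then insert the rate $\|R_n^\star\|_{\mathcal H_n}=O_p(\sqrt{\sup_g N_g/n})$ from Lemma~\ref{pre:riesz-bdd}. Your extra remarks make explicit what the paper leaves implicit: that $D_n^\star\in\mathrm{Range}(\mathcal A_n)$, the singular-value reading of the somewhat loosely written restricted pseudoinverse in Assumption~\ref{asmp:A5}(ii), and why randomness of $\mathcal A_n^\ast$ does not affect the $O_p$ conclusion.
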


\noindent\emph{Remark.}
This lemma shows that the Riesz weights $D_n^\star$ remain stochastically bounded at the appropriate rate. This is essential for controlling quadratic forms of the type $\langle \widehat C_n - C_n, D_n^\star\rangle_{\mathcal K_n}$, which appear in the asymptotic analysis of the variance estimator.

\begin{proof}
By definition, $D_n^\star$ is the minimum-Frobenius-norm solution to
    \[ \mathcal A_n^\ast D = R_n^\star \quad\text{in } \mathcal K_n. \]
In any finite-dimensional Hilbert space, the minimum-norm solution of a linear system is given by the Moore--Penrose pseudoinverse. Assumption~\ref{asmp:A5} states that $R_n^\star \in \mathrm{Range}(\mathcal A_n^\ast)$ for all sufficiently large $n$, and that the restricted pseudoninverse of $\mathcal A_n^\ast$ to its range is uniformly bounded:
    \[ \sup_n \big\|(\mathcal A_n^\ast|_{\mathrm{Range}(\mathcal A_n^\ast)})^{+}\big\|_{\mathrm{op}} \le C < \infty. \]
Thus, for all large $n$, \(D_n^\star = (\mathcal A_n^\ast|_{\mathrm{Range}(\mathcal A_n^\ast)})^{+}\big(R_n^\star\big), \) and therefore
    \[ \|D_n^\star\|_{\mathcal K_n} \;\le\; \big\|(\mathcal A_n^\ast|_{\mathrm{Range}(\mathcal A_n^\ast)})^{+}\big\|_{\mathrm{op}} \,\|R_n^\star\|_{\mathcal H_n} \;\le\; C\,\|R_n^\star\|_{\mathcal H_n}. \]

By Lemma~\ref{pre:riesz-bdd}, $\|R_n^\star\|_{\mathcal H_n} = O_p \bigg( \sqrt{\frac{\sup_g N_g}{n}} \bigg)$. Combining the two displays gives the desired result
   \[ \|D_n^\star\|_{\mathcal K_n}  = O_p \bigg( \sqrt{\frac{\sup_g N_g}{n}} \bigg). \]
\end{proof}

\medskip
\begin{prelemma}[Operator norms of $\mathcal A_n$ and $\mathcal A_n^\ast$]\label{pre:A-operator-bdd}
Let $\mathcal A_n:\mathcal H_n\to\mathcal K_n$ be defined as in \eqref{eq:A-def} and $\mathcal A_n^\ast: \mathcal K_n \to \mathcal H_n$ be its adjoint as in \eqref{eq:Astar-def}. Under Assumption~\ref{asmp:A4}, for all $n$,
    \[ \|\mathcal A_n\|_{\mathrm{op}} \;\le\; 1, \qquad \|\mathcal A_n^\ast\|_{\mathrm{op}} \;\le\; 1. \]
\end{prelemma}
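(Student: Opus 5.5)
The plan is to embed both Hilbert spaces isometrically, up to the common factor $1/n^2$, into the space of $n\times n$ matrices with the Frobenius norm. Under this embedding, $\mathcal A_n$ and $\mathcal A_n^\ast$ become two-sided multiplication by the orthogonal projection $M$, possibly followed by a coordinate projection. The bounds then follow from $\|M\|_{\mathrm{op}}\le 1$.

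\textbf{Step 1 (embeddings).} For $\Sigma=\{\Sigma_g\}\in\mathcal H_n$, let $\mathrm{bd}(\Sigma)$ denote the $n\times n$ block-diagonal matrix with diagonal blocks $\Sigma_g$, ordered conformably with the cluster stacking. Then $\|\mathrm{bd}(\Sigma)\|_F^2=\sum_g\|\Sigma_g\|_F^2=n^2\|\Sigma\|_{\mathcal H_n}^2$. Likewise, each $D=\{D_{hh'}\}\in\mathcal K_n$ is identified with the full $n\times n$ matrix whose $(h,h')$ block is $D_{hh'}$, and $\|D\|_F^2=n^2\|D\|_{\mathcal K_n}^2$. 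Assumption~\ref{asmp:A4} enters only to guarantee that $W'W$ and $\widetilde X'\widetilde X$ are invertible. This ensures that $M=M_{\widetilde X}M_W=M_{[X,W]}$ is a well-defined symmetric idempotent matrix, so $\|M\|_{\mathrm{op}}\le 1$.

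\textbf{Step 2 (bound for $\mathcal A_n$).} By \eqref{eq:A-def}, the $(h,h')$ block of $M\,\mathrm{bd}(\Sigma)\,M$ is $\sum_g M_{hg}\Sigma_g M_{gh'}$. Hence $\mathcal A_n(\Sigma)$ corresponds exactly to the matrix $M\,\mathrm{bd}(\Sigma)\,M$. Using $\|ABC\|_F\le\|A\|_{\mathrm{op}}\|B\|_F\|C\|_{\mathrm{op}}$, we get
\[
n^2\|\mathcal A_n(\Sigma)\|_{\mathcal K_n}^2=\|M\,\mathrm{bd}(\Sigma)\,M\|_F^2\le\|\mathrm{bd}(\Sigma)\|_F^2=n^2\|\Sigma\|_{\mathcal H_n}^2 .
\]
Therefore $\|\mathcal A_n\|_{\mathrm{op}}\le 1$.

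\textbf{Step 3 (bound for $\mathcal A_n^\ast$).} For finite-dimensional Hilbert spaces, $\|\mathcal A_n^\ast\|_{\mathrm{op}}=\|\mathcal A_n\|_{\mathrm{op}}$, so the second bound follows immediately from Lemma~\ref{lem:adjoint}. As a direct check, \eqref{eq:Astar-def} says that $[\mathcal A_n^\ast(D)]_g$ is the $(g,g)$ diagonal block of $MDM$. Extracting diagonal blocks is an orthogonal projection in Frobenius norm and so cannot increase that norm. This gives $n^2\|\mathcal A_n^\ast D\|_{\mathcal H_n}^2\le\|MDM\|_F^2\le\|D\|_F^2=n^2\|D\|_{\mathcal K_n}^2$.

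There is no real obstacle in this argument. The only point requiring care is bookkeeping: the block orderings and the $1/n^2$ normalizations must match on both sides, so that the identifications in Step 1 are exact isometries up to the same constant.
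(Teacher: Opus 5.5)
Your proof is correct and follows the paper's argument essentially step for step: embed $\Sigma$ as the block-diagonal $n\times n$ matrix, identify $\mathcal A_n(\Sigma)$ with $M\,\mathrm{bd}(\Sigma)\,M$, bound its Frobenius norm by $\|M\|_{\mathrm{op}}^2\|\mathrm{bd}(\Sigma)\|_F$, and transfer the bound to $\mathcal A_n^\ast$ through the general Hilbert-space equality of adjoint operator norms (this equality is a standard fact, not a consequence of Lemma~\ref{lem:adjoint}). Your extra direct check, that $[\mathcal A_n^\ast(D)]_g$ is the $(g,g)$ block of $MDM$ and that extracting diagonal blocks is a Frobenius contraction, is valid, and the paper does not include it.
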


\noindent\emph{Remark.} 
This lemma shows that the covariance-to-moment operator and its adjoint are uniformly bounded contractions. It will be used to control LLN/CLT errors involving $\mathcal A_n(\Sigma)$ and to relate LSQR residuals to errors in $D_n$.

\begin{proof}
Assumption~\ref{asmp:A4} implies that $W'W$ and $\widetilde X'\widetilde X$ are nonsingular for large $n$. Thus the residual-maker matrices $M_W := I_n - W(W'W)^{-1}W'$ and $M_{\widetilde X} := I_n - \widetilde X(\widetilde X'\widetilde X)^{-1}\widetilde X'$ are well defined. By the Frisch--Waugh--Lovell theorem,
    \[ M := M_{\widetilde X} M_W = I_n - [X\; W]\big([X\; W]'[X\; W]\big)^{-1}[X\; W]', \]
so $M$ is the residual-maker from the joint regression on $[X\;W]$. Hence $M$ is symmetric and idempotent with $\|M\|_{\mathrm{op}}=1$.

\medskip
Given $\Sigma=\{\Sigma_g\}_{g=1}^G\in\mathcal H_n$, define the block-diagonal $n\times n$ matrix $\widetilde\Sigma := \mathrm{diag}(\Sigma_1,\ldots,\Sigma_G)$. Then
    \[ \|\widetilde\Sigma\|_F^2 = \sum_{g=1}^G \|\Sigma_g\|_F^2 \quad\Longrightarrow\quad \|\Sigma\|_{\mathcal H_n}^2 = \frac{1}{n^2}\|\widetilde\Sigma\|_F^2. \]
The $(h,h')$ block of $M \widetilde\Sigma M'$ is
    \[ [M\widetilde\Sigma M']_{hh'}  = \sum_{g=1}^G M_{hg}\,\Sigma_g\,M_{gh'} = [\mathcal A_n(\Sigma)]_{hh'}. \]
Hence the full $n\times n$ matrix $M\widetilde\Sigma M'$ has block structure given by $\mathcal A_n(\Sigma)$, and therefore
    \[ \|\mathcal A_n(\Sigma)\|_{\mathcal K_n}^2 = \frac{1}{n^2}\sum_{h,h'} \big\|[\mathcal A_n(\Sigma)]_{hh'}\big\|_F^2 = \frac{1}{n^2}\|M\widetilde\Sigma M'\|_F^2. \]

\medskip
For any conformable matrices $B,C$, we have $\|BC\|_F \le \|B\|_{\mathrm{op}} \|C\|_F$. Applying this twice,
    \[ \|M\widetilde\Sigma M'\|_F \le \|M\|_{\mathrm{op}} \,\|\widetilde\Sigma M'\|_F \le \|M\|_{\mathrm{op}}^2 \,\|\widetilde\Sigma\|_F. \]
Thus
    \[ \| \mathcal A_n(\Sigma) \|_{\mathcal K_n}^2 = \frac{1}{n^2} \| M \widetilde\Sigma M'\|_F^2 \le \frac{ \|M\|_{\mathrm{op}}^4}{n^2} \|\widetilde\Sigma\|_F^2 = \|M\|_{\mathrm{op}}^4 \, \|\Sigma\|_{\mathcal H_n}^2. \]
Taking square roots,
    \[ \|\mathcal A_n(\Sigma)\|_{\mathcal K_n} \le \|M\|_{\mathrm{op}}^2 \|\Sigma\|_{\mathcal H_n} = \|\Sigma\|_{\mathcal H_n}, \quad\Rightarrow\quad \|\mathcal A_n\|_{\mathrm{op}} \le 1. \]

In Hilbert spaces, the operator norms of a bounded linear operator and its adjoint coincide:
    \[ \|\mathcal A_n^\ast\|_{\mathrm{op}} = \|\mathcal A_n\|_{\mathrm{op}}.\]
Thus $\|\mathcal A_n^\ast\|_{\mathrm{op}} \le 1$.
\end{proof}

\medskip
\begin{prelemma}[Boundedness of $\Sigma_n$ and $C_n$] \label{pre:Sigma-C-bdd}
Suppose Assumptions~\ref{asmp:A1}--\ref{asmp:A2} hold, then
    \[ \|\Sigma_n\|_{\mathcal H_n} = O_p\!\Big(\sqrt{\tfrac{\sup_g N_g}{n}}\Big), \qquad \|C_n\|_{\mathcal K_n} \le \|\Sigma_n\|_{\mathcal H_n} = O_p\!\Big(\sqrt{\tfrac{\sup_g N_g}{n}}\Big). \]
\end{prelemma}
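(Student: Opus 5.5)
The proof has two parts, treated in order: the bound on $\|\Sigma_n\|_{\mathcal H_n}$, and then the inequality $\|C_n\|_{\mathcal K_n}\le\|\Sigma_n\|_{\mathcal H_n}$. The second part is deterministic and reuses Lemma~\ref{pre:A-operator-bdd}.

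\emph{Step 1.} By definition,
\[ \|\Sigma_n\|_{\mathcal H_n}^2=\frac{1}{n^2}\sum_{g=1}^G\|\Sigma_g\|_F^2, \qquad \Sigma_g=\mathbb E[U_gU_g'\mid X,W]. \]
Conditional Jensen applied to the convex map $A\mapsto\|A\|_F$ gives
\[ \|\Sigma_g\|_F\le \mathbb E\big[\|U_gU_g'\|_F\mid X,W\big]\le \big(\mathbb E[\|U_gU_g'\|_F^2\mid X,W]\big)^{1/2}. \]
We then invoke Lemma~\ref{pre:bound}(ii) with $\theta=2$; this is admissible because $2\le 2+\lambda$. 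It yields $\sup_g N_g^{-2}\,\mathbb E[\|U_gU_g'\|_F^2\mid X,W]=O(1)$, and hence $\|\Sigma_g\|_F^2\le K N_g^2$ uniformly in $g$ for a finite constant $K$. As an alternative route, one can bound each entry by Cauchy--Schwarz, $|(\Sigma_g)_{ij}|\le \sup_{g,i}\mathbb E[u_{gi}^2\mid X,W]$, which is bounded by Assumption~\ref{asmp:A2}. Either way,
\[ \|\Sigma_n\|_{\mathcal H_n}^2\le \frac{K}{n^2}\sum_g N_g^2\le \frac{K\sup_gN_g}{n^2}\sum_g N_g=\frac{K\sup_g N_g}{n}. \]
The resulting bound is in fact $O(\cdot)$, which trivially implies $O_p(\cdot)$. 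The only care needed here is to make sure the uniform-in-$g$ constant from Assumption~\ref{asmp:A2} is genuinely uniform, since the supremum there is over $(g,i)$ conditional on $(X,W)$.

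\emph{Step 2.} Under Assumption~\ref{asmp:A1}, cluster independence gives $C_n=\mathcal A_n(\Sigma_n)$. This is exactly the identity displayed in Section~\ref{sec:framework}: the cross terms $\mathbb E[U_gU_{g'}'\mid X,W]$ vanish for $g\neq g'$ by conditional independence and the zero conditional mean. Lemma~\ref{pre:A-operator-bdd} then gives $\|\mathcal A_n\|_{\mathrm{op}}\le1$, so
\[ \|C_n\|_{\mathcal K_n}\le\|\Sigma_n\|_{\mathcal H_n}. \]
Combining with Step 1 finishes the proof.

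The main obstacle is that Lemma~\ref{pre:A-operator-bdd} is stated under Assumption~\ref{asmp:A4}, which is not among the hypotheses here. We will note that its contraction argument only uses the fact that $M$ is an orthogonal projection, so $\|M\|_{\mathrm{op}}\le1$. This holds for any residual-maker, using a pseudoinverse if $[X\;W]$ is rank deficient. The argument therefore goes through without Assumption~\ref{asmp:A4}, and we would state that explicitly.
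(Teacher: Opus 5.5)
Your proposal is correct and follows the paper's proof. The paper bounds $\|\Sigma_g\|_F\le\mathrm{tr}(\Sigma_g)=\mathbb E[\|U_g\|^2\mid X,W]$ via positive semidefiniteness and applies Lemma~\ref{pre:bound}(ii) with $\theta=1$; your conditional Jensen step yields exactly this quantity, since $\|U_gU_g'\|_F=\|U_g\|^2$, so the detour to $\theta=2$ is harmless but unnecessary, and your Step~2 is identical to the paper's. Your observation that Lemma~\ref{pre:A-operator-bdd} needs only $\|M\|_{\mathrm{op}}\le 1$, not Assumption~\ref{asmp:A4}, correctly tightens a point the paper passes over, since its proof invokes that lemma while assuming only Assumptions~\ref{asmp:A1}--\ref{asmp:A2}.
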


\begin{proof}
Fix $g$. Since $\Sigma_g=\mathbb{E} [U_gU_g'\mid X,W]$ is positive semidefinite,
   \[ \|\Sigma_g\|_F \le \mathrm{tr} (\Sigma_g) = \mathbb{E} [\|U_g\|^2\mid X,W]. \]
By Assumption~\ref{asmp:A2} (take $\theta=1$ in Lemma~\ref{pre:bound}(ii), or replicate its argument),
    \[ \sup_g N_g^{-1} \mathbb{E} [\|U_g\|^2\mid X,W] = O_p(1), \]
so uniformly in $g$,
    \[ \|\Sigma_g\|_F = O_p(N_g) \quad\Longrightarrow\quad \|\Sigma_g\|_F^2 = O_p(N_g^2). \]
Therefore,
    \[ \|\Sigma_n\|_{\mathcal H_n}^2 = \frac{1}{n^2} \sum_{g=1}^G \|\Sigma_g\|_F^2 = O_p \! \left( \frac{1}{n^2} \sum_{g=1}^G N_g^2 \right) \le O_p \! \left( \frac{\sup_g N_g}{n} \right), \]
using $\sum_g N_g^2 \le (\sup_g N_g)\sum_g N_g = (\sup_g N_g)\,n$. Taking square roots yields
    \[ \|\Sigma_n\|_{\mathcal H_n} = O_p\!\big(\sqrt{\sup_g N_g/n}\big). \]

Finally, by Lemma~\ref{pre:A-operator-bdd}, $\|\mathcal A_n\|_{\mathrm{op}} \le 1$, hence
    \[ \|C_n\|_{\mathcal K_n} = \|\mathcal A_n(\Sigma_n)\|_{\mathcal K_n} \le \|\mathcal A_n\|_{\mathrm{op}} \, \|\Sigma_n\|_{\mathcal H_n} \le \|\Sigma_n\|_{\mathcal H_n}. \]
\end{proof}

\newpage
\section{Proof of Lemmas}
\paragraph{Proof of Lemma~\ref{lem:riesz}}

\begin{proof}
For any fixed sample size $n$, the space $(\mathcal H,\langle\cdot,\cdot\rangle_{\mathcal H})$ is a finite-dimensional Hilbert space with norm $\|\Sigma\|_{\mathcal H}^2=\langle \Sigma,\Sigma\rangle_{\mathcal H}$. Thus, to apply the Riesz representation theorem, it suffices to verify that $\Phi(\cdot)$ is linear and bounded.

\medskip\noindent
\emph{Step 1: Linearity.} For arbitrary $\{\Sigma_g\},\{\Sigma_g'\}\in\mathcal H$ and scalars $\alpha,\beta\in\mathbb R$,
    \[ \Phi\big(\{\alpha\Sigma_g+\beta\Sigma'_g\}\big) = \frac{1}{n^2}\sum_{g=1}^G  \operatorname{tr}\!\big((\alpha\Sigma_g+\beta\Sigma'_g)\,\widetilde X_g Q \widetilde X_g'\big) = \alpha\,\Phi(\{\Sigma_g\})+\beta\,\Phi(\{\Sigma'_g\}), \]
so $\Phi$ is linear.

\medskip\noindent
\emph{Step 2: Boundedness.} For any $\Sigma=\{\Sigma_g\}\in\mathcal H$, the trace identity implies
    \[ \Phi(\Sigma) = \frac{1}{n^2}\sum_{g=1}^G \operatorname{tr}\!\big(\Sigma_g\,\widetilde X_g Q \widetilde X_g'\big) = \big\langle \Sigma,\{\widetilde X_g Q \widetilde X_g'\}\big\rangle_{\mathcal H}. \]
Hence, by Cauchy--Schwarz in $\mathcal H$,
    \[ |\Phi(\Sigma)| \;\le\; \|\Sigma\|_{\mathcal H}\; \big\|\{\widetilde X_g Q \widetilde X_g'\}\big\|_{\mathcal H}. \]
Conditional on a fixed realization of $(X,W)$, the quantity $\|\{\widetilde X_g Q \widetilde X_g'\}\|_{\mathcal H}$ is finite because each block $\widetilde X_g Q \widetilde X_g'$ is a finite matrix and $\mathcal H$ is finite-dimensional. Thus $\Phi$ is a bounded linear functional on $\mathcal H$.

\medskip
By the Riesz representation theorem, there exists a unique $R^\star=\{R_g^\star\}\in\mathcal H$ such that \( \Phi(\Sigma)=\langle \Sigma, R^\star\rangle_{\mathcal H} \) for all $\Sigma\in\mathcal H$.   Matching the expression for $\Phi(\Sigma)$ block-by-block yields $R_g^\star=\widetilde X_g Q \widetilde X_g'$ for each $g$.
\end{proof}

\begin{remark}[Asymptotic relevance]
Lemma~\ref{lem:riesz} is a finite-sample result: for each fixed $n$, the functional $\Phi$ admits a unique Riesz representer $R_n^\star=\{\widetilde X_g Q_n \widetilde X_g'\}$. For asymptotic analysis, we only require that this sequence remain stochastically bounded. Lemma~\ref{pre:riesz-bdd} shows that $\|R_n^\star\|_{\mathcal H_n}=O_p(1)$, which ensures the representers remain well behaved as $n\to\infty$.
\end{remark}

\paragraph{Proof of Lemma~\ref{lem:adjoint}}
\begin{proof}
Because $(\mathcal H,\langle\cdot,\cdot\rangle_{\mathcal H})$ and $(\mathcal K,\langle\cdot,\cdot\rangle_{\mathcal K})$ are finite-dimensional Hilbert spaces, every linear operator $\mathcal A:\mathcal H\to\mathcal K$ admits a unique adjoint $\mathcal A^\ast:\mathcal K\to\mathcal H$.

To compute it explicitly, recall that for $\Sigma=\{\Sigma_g\}\in\mathcal H$,
    \[ \big[\mathcal A(\Sigma)\big]_{hh'} = \sum_{g=1}^G M_{hg}\,\Sigma_g\,M_{gh'}. \]
Using the scaled Frobenius inner product on $\mathcal K$,
    \[ \langle \mathcal A(\Sigma),D\rangle_{\mathcal K} = \frac{1}{n^2}\sum_{h,h'} \operatorname{tr}\!\left( \Big(\sum_{g=1}^G M_{hg}\Sigma_g M_{gh'}\Big)' D_{hh'} \right). \]

Since $M$ is symmetric ($M_{hg}=M_{gh}'$) and each $\Sigma_g$ is symmetric,
    \[ \Big(\sum_g M_{hg}\Sigma_g M_{gh'}\Big)' = \sum_{g=1}^G M_{h'g}\,\Sigma_g\,M_{gh}. \]
Substituting and rearranging sums,
    \[ \langle \mathcal A(\Sigma),D\rangle_{\mathcal K} = \frac{1}{n^2}\sum_{g=1}^G  \operatorname{tr}\!\Big( \Sigma_g\,\sum_{h,h'} M_{gh}\,D_{hh'}\,M_{h'g} \Big). \]

On the other hand, the inner product on $\mathcal H$ is
    \[ \langle \Sigma,\mathcal A^\ast(D)\rangle_{\mathcal H} = \frac{1}{n^2}\sum_{g=1}^G  \operatorname{tr}\!\left(\Sigma_g\,[\mathcal A^\ast(D)]_g\right). \]
Identifying the two expressions block-by-block yields
    \[ \mathcal A^\ast(D)]_g = \sum_{h=1}^G\sum_{h'=1}^G M_{gh}\,D_{hh'}\,M_{h'g}. \]
This is the block representation in~\eqref{eq:Astar-def}.
\end{proof}

\begin{remark}[Asymptotic relevance]
Lemma~\ref{lem:adjoint} characterizes the adjoint operator $\mathcal A_n^\ast$ for each fixed $n$. For asymptotic analysis, the corresponding inverse problem $\mathcal A_n^\ast D = R_n^\star$ must remain well behaved as $n\to\infty$. Assumption~\ref{asmp:A5} ensures this by requiring that 
$R_n^\star$ belongs to the range of $\mathcal A_n^\ast$ and that the restricted pseudoinverse $(\mathcal A_n^\ast)^+$ is uniformly bounded. Together with the stochastic boundedness of $R_n^\star$, this yields the stability of the minimum-norm solution $D_n^\star$ established in Lemma~\ref{pre:Dstar-bdd}.
\end{remark}

\paragraph{Proof of Lemma~\ref{lem:AAstar-psd}}
\begin{proof}
We use only standard properties of adjoint operators on Hilbert spaces.

\medskip\noindent
\emph{(i) Self-adjointness and positive semidefiniteness.} First, self-adjointness follows from the fact that $(\mathcal A\mathcal A^\ast)^\ast = (\mathcal A^\ast)^\ast \mathcal A^\ast = \mathcal A \mathcal A^\ast$, since the adjoint is an involution: $(\mathcal A^\ast)^\ast = \mathcal A$.

For positive semidefiniteness, take any $D\in\mathcal K$ and compute
    \[ \langle D, \mathcal A\mathcal A^\ast D\rangle_{\mathcal K} = \langle \mathcal A^\ast D, \mathcal A^\ast D\rangle_{\mathcal H} = \|\mathcal A^\ast D\|_{\mathcal H}^2 \;\ge\; 0, \]
using the defining property of the adjoint: $\langle \mathcal A x, y\rangle_{\mathcal K} = \langle x, \mathcal A^\ast y\rangle_{\mathcal H}$ for all $x\in\mathcal H$, $y\in\mathcal K$.  This proves that $\mathcal A\mathcal A^\ast$ is self-adjoint and positive semidefinite on $\mathcal K$.

\medskip\noindent
\emph{(ii) Strict positivity on $\mathrm{range}(\mathcal A)$.} Suppose $D\in\mathrm{range}(\mathcal A)$ and $\langle D,\mathcal A\mathcal A^\ast D\rangle_{\mathcal K}=0$. By (i),
    \[ 0 = \langle D,\mathcal A\mathcal A^\ast D\rangle_{\mathcal K} = \|\mathcal A^\ast D\|_{\mathcal H}^2, \]
so $\mathcal A^\ast D = 0$. But $\ker(\mathcal A^\ast)$ is the orthogonal complement of $\mathrm{range}(\mathcal A)$, that is
    \[ \ker(\mathcal A^\ast) = \mathrm{range}(\mathcal A)^\perp. \]
Thus $\mathcal A^\ast D = 0$ implies $D\perp \mathrm{range}(\mathcal A)$. Since we also have $D\in\mathrm{range}(\mathcal A)$, the only possibility is $D=0$.  Hence
    \[ \langle D,\mathcal A\mathcal A^\ast D\rangle_{\mathcal K} = 0, \ D\in\mathrm{range}(\mathcal A)  \quad\Longrightarrow\quad D=0, \]
so $\mathcal A\mathcal A^\ast$ is strictly positive definite on $\mathrm{range}(\mathcal A)$.

\medskip\noindent
\emph{(iii) Consequences for the normal equations and Krylov iterates.} For any $R\in\mathcal H$, the right-hand side of the coefficient-space normal equations \((\mathcal A\mathcal A^\ast)D = \mathcal A R \) lies in $\mathrm{range}(\mathcal A)$ by definition of the operator $\mathcal A$. Moreover, $\mathrm{range}(\mathcal A)$ is invariant under $\mathcal A\mathcal A^\ast$: if $D = \mathcal A x$ for some $x\in\mathcal H$, then
    \[ \mathcal A\mathcal A^\ast D = \mathcal A\mathcal A^\ast(\mathcal A x) = \mathcal A(\mathcal A^\ast\mathcal A x) \in \mathrm{range}(\mathcal A). \]
Hence, for any initial $D^{(0)}\in\mathrm{range}(\mathcal A)$ (in particular $D^{(0)}=0$), all Krylov iterates
    \[ D^{(t)} \in \mathcal K_t(\mathcal A\mathcal A^\ast,\mathcal A R) = \mathrm{span}\{\mathcal A R, (\mathcal A\mathcal A^\ast)\mathcal A R,\ldots\} \]
remain in $\mathrm{range}(\mathcal A)$, where $\mathcal A\mathcal A^\ast$ is positive definite. This ensures that conjugate-gradient–type methods and LSQR are well defined on the relevant subspace and converge to the unique minimum-norm solution.
\end{proof}

\paragraph{Proof of Lemma~\ref{lem:lsqr-residual-error}}
\begin{proof}
Let $e^{(t)} := D^{(t)} - D^\star$ and $r^{(t)} := R - \mathcal{A}^\ast D^{(t)}$. Since $D^\star$ is the minimum--norm solution of $\mathcal A^\ast D = R$, it satisfies the coefficient-space normal equations
    \[ \mathcal A\mathcal A^\ast D^\star = \mathcal A R \quad\Longleftrightarrow\quad \mathcal T D^\star = \mathcal A R, \qquad \mathcal T := \mathcal A\mathcal A^\ast. \]
For the iterate $D^{(t)}$ we have $\mathcal A^\ast D^{(t)} = R - r^{(t)}$ by definition, hence
    \[ \mathcal T D^{(t)} = \mathcal A \mathcal A^\ast D^{(t)} = \mathcal A(R - r^{(t)}) = \mathcal A R - \mathcal A r^{(t)}. \]
Subtracting the normal equations for $D^\star$ gives
    \begin{equation} \label{eq:lem_lsqr_1}
        \mathcal T e^{(t)} = \mathcal T(D^{(t)} - D^\star) = -\,\mathcal A r^{(t)}.
    \end{equation}

By construction of LSQR with $D^{(0)}=0$, all iterates $D^{(t)}$ lie in $\mathrm{range}(\mathcal A)$, and so does $D^\star$; therefore $e^{(t)}\in\mathrm{range}(\mathcal A)$. On $\mathrm{range}(\mathcal A)$, the operator $\mathcal T = \mathcal A\mathcal A^\ast$ is self-adjoint and strictly positive definite by Lemma~\ref{lem:AAstar-psd}. Thus there exist finite constants $0< \lambda_{\min} \le \lambda_{\max} < \infty$ such that for every $x \in \mathrm{range} (\mathcal A)$,
    \[ \lambda_{\min}\,\|x\|_{\mathcal K}^2 \;\le\; \langle x,\mathcal T x\rangle_{\mathcal K} = \|x\|_{\mathcal T}^2 \;\le\; \lambda_{\max}\,\|x\|_{\mathcal K}^2. \]
In particular,
    \begin{equation} \label{eq:lem_lsqr_2}
        \|x\|_{\mathcal K} \;\le\; \lambda_{\min}^{-1/2}\,\|x\|_{\mathcal T} \qquad \text{for all } x \in \mathrm{range} (\mathcal A).
    \end{equation}

Now compute, using \eqref{eq:lem_lsqr_1}:
    \[ \|e^{(t)}\|_{\mathcal T}^2 = \langle e^{(t)}, \mathcal T e^{(t)} \rangle_{\mathcal K} = -\,\langle e^{(t)}, \mathcal A r^{(t)} \rangle_{\mathcal K}. \]
Applying Cauchy--Schwarz and \eqref{eq:lem_lsqr_2},
    \[ \|e^{(t)}\|_{\mathcal T}^2 \le \|e^{(t)}\|_{\mathcal K}\,\|\mathcal A r^{(t)}\|_{\mathcal K} \le \lambda_{\min}^{-1/2}\,\|e^{(t)}\|_{\mathcal T}\, \|\mathcal A\|_{\mathrm{op}}\,\|r^{(t)}\|_{\mathcal H}. \]
If $e^{(t)}=0$ the desired inequality is trivial. Otherwise, dividing both sides by $\|e^{(t)}\|_{\mathcal T}>0$ yields
    \[ \|e^{(t)}\|_{\mathcal T} \le \lambda_{\min}^{-1/2}\,\|\mathcal A\|_{\mathrm{op}}\, \|r^{(t)}\|_{\mathcal H}. \]
Setting $\mathcal{C} := \lambda_{\min}^{-1/2}$ with the stopping rule $\|r^{(t)}\|_{\mathcal{H}} \le \tau_n \|R\|_{\mathcal{H}}$ yields the first bound in \eqref{eq:lsqr-det-bound-mu}.

Finally, under Assumption~\ref{asmp:A4} we have $\|\mathcal A\|_{\mathrm{op}}\le1$ by Lemma~\ref{pre:A-operator-bdd}, and Lemma~\ref{pre:riesz-bdd} implies
    \[ \|R\|_{\mathcal H} = O_p\!\left(\sqrt{\frac{\sup_g N_g}{n}}\right). \]
Substituting these bounds into above yields
    \[ \|D^{(t)} - D^\star\|_{\mathcal T} = O_p\!\left(\tau_n\sqrt{\frac{\sup_g N_g}{n}}\right), \]
as claimed.
\end{proof}

\paragraph{Proof of Lemma~\ref{lem:part-exact}}
\begin{proof}
By definition of the Riesz representer $D_n^\star$,
    \begin{equation} \label{eq:lem_exact_1}
        \langle A_n(\Sigma_n), D_n^\star\rangle_{K_n} = \langle \Sigma_n, A_n^* D_n^\star\rangle_{H_n} = \langle \Sigma_n, R_n^\star\rangle_{H_n}.
    \end{equation}

By assumption, $(A_n^{\mathrm{diag}})^* D_n^{\mathrm{diag}} = R_n^\star$, hence
    \begin{equation} \label{eq:lem_exact_2}
        \langle \Sigma_n, R_n^\star\rangle_{H_n} = \langle \Sigma_n, (A_n^{\mathrm{diag}})^* D_n^{\mathrm{diag}}\rangle_{H_n} = \langle A_n^{\mathrm{diag}}(\Sigma_n), D_n^{\mathrm{diag}}\rangle_{K_n^{\mathrm{diag}}},
    \end{equation}
where the last equality follows from the adjoint identity for $A_n^{\mathrm{diag}}$. Combining \eqref{eq:lem_exact_1} and \eqref{eq:lem_exact_2} yields the result.
\end{proof}

\paragraph{Proof of Lemma~\ref{lem:part-firstorder}}
\begin{proof}
First note that since $D_n^{\mathrm{diag}}\in \mathcal K_n^{\mathrm{diag}}$ and $\widehat C_n^{\mathrm{diag}}=P_{\mathrm{diag}}(\widehat C_n)$, the inner product ignores off-diagonal blocks, so
    \[ \langle \widehat C_n^{\mathrm{diag}}, D_n^{\mathrm{diag}}\rangle_{\mathcal K_n^{\mathrm{diag}}} = \langle \widehat C_n, D_n^{\mathrm{diag}}\rangle_{\mathcal K_n}. \]
Hence
    \[ \widehat V_{\partial,n}-\widehat V_{\mathrm{Riesz},n} = \big\langle \widehat C_n,\; D_n^{\mathrm{diag}}-D_n^\star\big\rangle_{\mathcal K_n}. \]
Add and subtract $C_n:=\mathcal A_n(\Sigma_n)$ to obtain
    \begin{equation}\label{eq:diag-full-decomp-proof}
        \widehat V_{\partial,n}-\widehat V_{\mathrm{Riesz},n} = \underbrace{\big\langle C_n,\; D_n^{\mathrm{diag}}-D_n^\star\big\rangle_{\mathcal K_n}}_{(I)} + \underbrace{\big\langle \widehat C_n-C_n,\; D_n^{\mathrm{diag}}-D_n^\star\big\rangle_{\mathcal K_n}}_{(II)}.
    \end{equation}

\medskip\noindent
\emph{Step 1: population term $(I)$.} Since $C_n=\mathcal A_n(\Sigma_n)$, the adjoint identity yields
    \[ (I) = \big\langle \Sigma_n,\; \mathcal A_n^\ast(D_n^{\mathrm{diag}}-D_n^\star)\big\rangle_{\mathcal H_n}. \]
Since $\mathcal A_n^{\mathrm{diag}}=P_{\mathrm{diag}}\circ\mathcal A_n$ and $P_{\mathrm{diag}}:\mathcal K_n\to\mathcal K_n^{\mathrm{diag}}$ is an orthogonal projection, it is self-adjoint. Hence
    \[ (\mathcal A_n^{\mathrm{diag}})^\ast = \mathcal A_n^\ast \circ P_{\mathrm{diag}} \qquad \text{and} \qquad  (\mathcal A_n^{\mathrm{diag}})^\ast D_n^{\mathrm{diag}} = \mathcal A_n^\ast D_n^{\mathrm{diag}} \]
as $P_{\mathrm{diag}} D_n^{\mathrm{diag}}=D_n^{\mathrm{diag}}$ for any $D_n^{\mathrm{diag}}\in\mathcal K_n^{\mathrm{diag}}$. Therefore,
    \[ \mathcal A_n^\ast(D_n^{\mathrm{diag}}-D_n^\star) = (\mathcal A_n^{\mathrm{diag}})^\ast D_n^{\mathrm{diag}}-R_n^\star, \]
since $\mathcal{A}^* D^\star = R^\star$, and hence
    \[ (I) = \big\langle \Sigma_n,\; (\mathcal A_n^{\mathrm{diag}})^\ast D_n^{\mathrm{diag}}-R_n^\star \big\rangle_{\mathcal H_n}. \]

By Cauchy--Schwarz with boundedness of $\|\Sigma_n\|_{\mathcal H_n}$ in Lemma~\ref{pre:Sigma-C-bdd} and Assumption~\eqref{asmp:PartApprox}
   \[ |(I)| \le \|\Sigma_n\|_{\mathcal H_n}\, \big\| (\mathcal A_n^{\mathrm{diag}})^\ast D_n^{\mathrm{diag}} - R_n^\star \big\|_{\mathcal H_n} = = O_p\!\Big( \sqrt{ \tfrac{\sup_g N_g}{n}} \Big) \cdot o_p \! \Big( \mu_n^{-1} \sqrt{\tfrac{n}{\sup_g N_g}} \Big) = o_p(\mu_n^{-1}). \]

\medskip\noindent
\emph{Step 2: sample term $(II)$.}
Let $\widetilde D_n^\Delta$ be the $n\times n$ block matrix with $(h,h')$ block $D_{n,hh'}^\Delta := D_{n,hh'}^{\mathrm{diag}}-D_{n,hh'}^\star$, and define
    \[ H_n^\Delta := \frac{1}{n^2} M'\widetilde D_n^\Delta M. \]
Exactly as in Step~2 of the proof of Theorem~\ref{thm:riesz-t}, the block-trace identity implies 
    \begin{equation}\label{eq:ZDelta-trace-proof}
        Z_n^\Delta := \big\langle \widehat C_n-C_n,\; D_n^{\mathrm{diag}}-D_n^\star\big\rangle_{\mathcal K_n} = \mathrm{tr}\!\Big( H_n^\Delta\big(U_nU_n' - \mathbb E[U_nU_n'\mid X_n,W_n]\big) \Big).
    \end{equation}
Conditional on $(X_n,W_n)$, $\mathbb E[Z_n^\Delta\mid X_n,W_n]=0$. Hence by (conditional) Chebyshev, for any $\varepsilon>0$,
    \[ \Pr\big(|\mu_n Z_n^\Delta|>\varepsilon\mid X_n,W_n\big) \le \frac{\mathbb E[\mu_n^2(Z_n^\Delta)^2\mid X_n,W_n]}{\varepsilon^2}, \]
so it suffices to show
    \begin{equation} \label{eq:var-target-proof}
        \mathbb E[\mu_n^2(Z_n^\Delta)^2\mid X_n,W_n]=o_p(1).
    \end{equation}

Proceeding exactly as in \eqref{eq:chey_bound}-\eqref{eq:chey_sum_bound} of Theorem~\ref{thm:riesz-t} with $H_n$ replaced by $H_n^\Delta$, cluster independence yields the same orthogonality relations and gives
    \begin{equation} \label{eq:second-moment-proof}
        \mathbb E[(Z_n^\Delta)^2\mid X_n,W_n] \;\lesssim\; (\sup_g N_g)^2\,\|H_n^\Delta\|_F^2.
    \end{equation}
Since $\|M\|_{\mathrm{op}}=1$,
    \[ \|H_n^\Delta\|_F^2 = \frac{1}{n^4}\|M'\widetilde D_n^\Delta M\|_F^2 \le \frac{1}{n^4}\|\widetilde D_n^\Delta\|_F^2. \]
Using $\|\widetilde D_n^\Delta\|_F^2\le 2\|\widetilde D_n^{\mathrm{diag}}\|_F^2+2\|\widetilde D_n^\star\|_F^2$ and $\|\widetilde D\|_F^2=n^2\|D\|_{\mathcal K_n}^2$, we obtain
    \[ \|\widetilde D_n^\Delta\|_F^2 \le 2n^2\|D_n^{\mathrm{diag}}\|_{\mathcal K_n}^2 + 2n^2\|D_n^\star\|_{\mathcal K_n}^2. \]
By Lemma~\ref{pre:Dstar-bdd} and Assumption~\eqref{asmp:PartApprox},
    \[ \|D_n^\star\|_{\mathcal K_n} = O_p\!\Big(\sqrt{\tfrac{\sup_g N_g}{n}}\Big), \qquad \|D_n^{\mathrm{diag}}\|_{\mathcal K_n} = O_p\!\Big(\sqrt{\tfrac{\sup_g N_g}{n}}\Big).\]
Therefore,
    \[ \|\widetilde D_n^\Delta\|_F^2 = O_p\big(n\,\sup_g N_g\big), \qquad\text{hence} \quad \|H_n^\Delta\|_F^2 = O_p\!\Big(\frac{\sup_g N_g}{n^3}\Big). \]
Combining with \eqref{eq:second-moment-proof} yields
    \[ \mathbb E[(Z_n^\Delta)^2\mid X_n,W_n] = O_p\!\Big(\Big(\frac{\sup_g N_g}{n}\Big)^3\Big). \]
Consequently,
    \[ \mathbb E[\mu_n^2(Z_n^\Delta)^2\mid X_n,W_n] = O_p\!\Big(\mu_n^2\Big(\frac{\sup_g N_g}{n}\Big)^3\Big) = o_p(1) \]
by Assumption~\ref{asmp:A3}. This establishes \eqref{eq:var-target-proof} and hence $(II)=Z_n^\Delta=o_p(\mu_n^{-1})$.

\medskip
From \eqref{eq:diag-full-decomp-proof}, $(I)=o(\mu_n^{-1})$ and $(II)=o_p(\mu_n^{-1})$ imply
    \[ \widehat V_{\partial,n}-\widehat V_{\mathrm{Riesz},n} = o_p(\mu_n^{-1}), \qquad\text{equivalently}\qquad \mu_n(\widehat V_{\partial,n}-\widehat V_{\mathrm{Riesz},n})=o_p(1). \]
The statement about the same studentized asymptotic distribution follows from Slutsky's theorem.
\end{proof}

\newpage
\section{Proof of Theorems}
\paragraph{Proof of Theorem~\ref{thm:riesz-t}}
\begin{proof}
\emph{Step 1: Asymptotic normality.} According to the Slutsky theorem, the left-hand side of (\ref{eq:normality}) is
    \[ V^{-1/2} a' (\widehat{\beta} - \beta) = \nu_a^{-1/2} \mu_n^{1/2} n^{-1} a' \Gamma_0^{-1} \sum_{g=1}^{G} \widetilde X_g' U_g (1 + o_p(1)), \]
due to $\mu_n V \to \nu_a$ by Assumption~\ref{asmp:A3} and $\Gamma_n \to \Gamma_0$ by Assumption~\ref{asmp:A4}.
 Thus, it is enough to show
     \begin{equation} \label{eq:asym_norm}
         \nu_a^{-1/2}  \mu_n^{1/2} n^{-1} a' \Gamma_n^{-1} \sum_{g=1}^{G} \widetilde X'_g U_g \; \xlongrightarrow{d} \; \mathcal{N} (0,1). 
     \end{equation}

Define $Z_g = \nu_a^{-1/2} \mu_n^{1/2} n^{-1} a' \Gamma_0^{-1} \widetilde{X}'_g U_g$. By Assumptions~\ref{asmp:A1}, it is an independent sequence conditional on $(X,W)$ with mean
    \[ \mathbb{E} [Z_g \mid X,W] = \nu_a^{-1/2} \mu_n^{1/2} n^{-1} a' \Gamma_0^{-1} \widetilde{X}'_g \, \mathbb{E} [U_g \mid X,W] = 0, \]
and variance
    \[ \mathbb{E} [Z^2_g \mid X,W] = \nu_a^{-1} \mu_n n^{-2} a' \Gamma_0^{-1} \widetilde{X}'_g \Sigma_g \widetilde{X}_g \Gamma_0^{-1} a.\]
And by Assumption \ref{asmp:A3} and \ref{asmp:A4}, 
    \[ \sum_{g=1}^{G} \mathbb{E} [Z_g^2 \mid X,W] = \nu_a^{-1} \mu_n a' \Gamma_0^{-1} \bigg( \frac{1}{n^2} \sum_{g=1}^{G} \widetilde{X}'_g \Sigma_g \widetilde X_g \bigg) \Gamma_0^{-1} a' = \nu_a^{-1} (\mu_n V) \; \xrightarrow{p} \; 1. \]

Then \eqref{eq:asym_norm} follows from the Lyapunov Central Limit Theorem for heterogeneous, independent random variables, if for some $\xi > 0$, it holds that $\sum_{g=1}^{G} \mathbb{E} [ |Z_g|^{2+\xi} \mid X,W]\to 0$ (Lyapunov's condition). Set $\xi = \lambda>0$, then
    \begin{align}
        \sum_{g=1}^{G} \mathbb{E} \big[ |Z_g|^{2+\xi} \mid X,W \big] 
        &\leq \nu_a^{-\frac{2+\xi}{2}} \mu_n^{\frac{2+\xi}{2}} \|a' \Gamma_0^{-1}\|^{2+\xi} \, n^{-2-\xi} \sum_{g=1}^{G} \mathbb{E} \left[\|\widetilde X'_g U_g \|^{2+\xi} \mid X,W \right] \notag \\ 
        &\lesssim O_p \bigg( \mu_n^{\frac{2+\xi}{2}} \frac{\sum_{g=1}^{G} N_g^{2+\xi}}{n^{2+\xi}} \bigg) \lesssim O_p \bigg( \Big( \mu_n^{\frac{2+\xi}{2+2\xi}} \frac{\sup_g N_g}{n} \Big)^{1+\xi} \bigg) = o_p(1),
    \end{align}
where the second inequality is due to the positive definiteness of $\Gamma_0$ in Assumption ~\ref{asmp:A4} and Lemma~\ref{pre:bound} (with $\theta = \xi+2$), and the convergence is due to the cluster size restriction in Assumption~\ref{asmp:A3}. \\

\medskip \noindent
\emph{Step 2: Consistency of $\widehat V_{\mathrm{Riesz}}$}. By Lemmas~\ref{pre:riesz-bdd}, \ref{pre:Dstar-bdd}, and~\ref{pre:A-operator-bdd}, the Riesz representers $R_n^\star$, the dual solutions $D_n^\star$, and the operators $(\mathcal A_n,\mathcal A_n^\ast)$ remain well defined and uniformly bounded for all $n$, so the Riesz representation and the adjoint equation $\mathcal A_n^\ast D_n^\star = R_n^\star$ hold along the entire asymptotic sequence.

By construction of the oracle variance functional in~\eqref{eq:oracle-var} and Lemma~\ref{lem:riesz}, together with the definition of the cross-moments $C_n := \mathcal A_n(\Sigma_n)$ and minimum-Frobenius-norm dual representer $D_n^\star$, the adjoint identity gives, for each $n$,
    \[ V = \Phi_n(\Sigma_n) = \big\langle \Sigma_n, R_n^\star \big\rangle_{\mathcal H_n} = \langle \Sigma_n, \mathcal{A}^*_n D_n^\star \rangle_{\mathcal H_n} = \big\langle \mathcal{A}_n (\Sigma_n), D_n^\star \big\rangle_{\mathcal K_n} = \big\langle C_n, D_n^\star \big\rangle_{\mathcal K_n}. \]
Thus \eqref{eq:consistency} is equivalent to
    \begin{equation} \label{eq:V_converge}
        \mu_n (\widehat V_{\mathrm{Riesz}} - V) =  \mu_n \big\langle \widehat C_n - C_n, D_n^\star\big\rangle_{\mathcal K_n} \;\xrightarrow{p}\; 0.
    \end{equation}

Let $\widetilde D_n$ be the block matrix with $(h,h')$ block $D_{n,hh'}^\star$ and define the  $n\times n$ kernel  
    \[ H_n := \frac{1}{n^2} M' \widetilde D_n M, \]
where $M$ is the global residual-maker matrix from Section~\ref{sec:model}. Then
     \[ Z_n := \big\langle \widehat C_n - C_n, D_n^\star\big\rangle_{\mathcal K_n} = \mathrm{tr}\Big( H_n\big(U_n U_n' - \mathbb E[U_n U_n' \mid X_n,W_n]\big) \Big). \]
Conditional on $(X_n, W_n)$,
    \[ \mathbb{E} [\mu_n Z_n \mid X_n, W_n] = \mu_n \, \mathrm{tr} \Big( H_n \, \mathbb{E} \big[ U_n U_n' - \mathbb E[U_n U_n' \mid X_n,W_n] \big| X_n, W_n \big] \Big) = 0. \]
Hence by (conditional) Chebyshev,
   \[\mathrm{Pr} \big( |\mu_n Z_n| > \varepsilon \mid X, W \big) \leq \frac{\mathbb{E} [\mu_n^2 Z_n^2 \mid X_n, W_n]}{\varepsilon^2}, \]
so it suffices to show
    \begin{equation} \label{eq:chey_goal}
        \mathbb{E} \left[ \mu_n^2 Z_n^2 \mid X_n, W_n \right] = o_p(1).  
    \end{equation}

\medskip
Using the block-trace identity $\mathrm{tr}(AB)=\sum_{g,h} \mathrm{tr}(A_{gh} B_{hg})$, we can write
    \[ Z_n = \sum_{g,h} \mathrm{tr} \Big( H_{gh} \big( U_h U'_g - \mathbb{E} [U_h U'_g \mid X_n, W_n] \big) \Big) = \sum_g A_g + 2 \sum_{g<h} B_{gh} \]
where 
    \[A_g = \mathrm{tr} \big( H_{gg} ( U_g U'_g - \Sigma_g) \big), \qquad B_{gh} = \mathrm{tr} \big( H_{gh} U_h U'_g \big) = U'_g H_{gh} U_h \quad (g<h). \]
By construction, $\mathbb{E} [A_g \mid X_n, W_n]=0$ and $\mathbb{E} [B_{gh}\mid X_n, W_n]=0$, and independence of $\{U_g\}$ across $g$ yields the orthogonality relations: 
\begin{enumerate}[(i)]
    \item $\mathbb{E}[A_g A_h \mid X_n, W_n] = 0$ for $g\neq h$:
        \[  \mathbb{E} [A_g A_h \mid X_n, W_n] = \big( \mathbb{E} [A_g \mid X_n, W_n] \big)\, \big(\mathbb{E} [A_h \mid X_n, W_n]\big) = 0. \]
    \item $\mathbb{E} [A_g B_{g'h'} \mid X_n, W_n] = 0$ for all $g' < h'$:
    \begin{enumerate}[(a)]
        \item For $h' \neq g$: $\mathbb{E} [A_g U'_g H_{g'h'} U_{h'} \mid X_n, W_n] = \big(\mathbb{E} [A_g U'_g H_{g'h'} \mid X_n, W_n]\big) \big( \mathbb{E} [U_{h'} \mid X_n, W_n] \big) = 0$;
        \item For $g' \neq g$: $\mathbb{E} [B_{g'h'} A_g \mid X_n, W_n] = \big( \mathbb{E} [U'_{g'} \mid X_n, W_n] \big) \big( \mathbb{E}[H_{g'h'} U_{h'} A_g \mid X_n, W_n] \big) = 0$.
    \end{enumerate}
    \item $\mathbb{E} [B_{gh} B_{g'h'} \mid X_n, W_n] = 0$ whenever $g\neq g'$ or $h \neq h'$: \\ (assume $g \le g'$ without loss of generality)
    \begin{enumerate}[(a)]
        \item For $g<g'$, we have $g\neq h'$ since $g'<h'$, then
            \[ \mathbb{E}[B_{gh} B_{g'h'} \mid X_n, W_n] = \big( \mathbb{E}[ U'_g \mid X_n, W_n] \big) \, \big( \mathbb{E}[ H_{gh} U_h U'_{g'} H_{g'h'} U_{h'} \mid X_n, W_n] \big) = 0. \]
        \item For $g=g'$ and $h \neq h'$, we have
            \[ \mathbb{E}[B_{gh} B_{g'h'} \mid X_n, W_n] = \big( \mathbb{E}[ U'_g H_{gh} U_h U'_{g'} H_{g'h'} \mid X_n, W_n] \big) \, \big( \mathbb{E}[U_{h'} \mid X_n, W_n] \big) = 0. \]
    \end{enumerate}
\end{enumerate}
Therefore, 
    \begin{align*}
        \mathbb{E} \left[ \mu_n^2 Z_n^2 \mid X_n, W_n \right] 
        &= \mu_n^2 \, \mathbb{E} \bigg[ \Big( \sum_g A_g + 2 \sum_{g<h}  B_{gh} \Big)^2 \Big| X_n, W_n \bigg] \\
        &= \mu_n^2 \, \bigg( \sum_g \mathbb{E} [ A_g^2 \mid X_n, W_n] + 4 \sum_{g<h}  \mathbb{E} [B_{gh}^2 \mid X,W] \bigg).
    \end{align*}
Write $B_{gg} := \mathrm{tr} (H_g U_g U'_g) = U'_g H_{gg} U_g$, so that $A_g = B_{gg} - \mathbb{E}[B_{gg} \mid X_n, W_n]$ and hence
    \[ \mathbb{E} [A_g^2 \mid X_n, W_n] = \mathbb{E} [B_{gg}^2 \mid X_n, W_n] - \big( \mathbb{E} [B_{gg} \mid X_n, W_n] \big)^2 \leq \mathbb{E} [B_{gg}^2 \mid X_n, W_n]. \]
Thus
    \begin{align} \label{eq:chey_bound}
        \mathbb{E} \left[ \mu_n^2 Z_n^2 \mid X_n, W_n \right] 
        &\le \mu_n^2 \, \bigg( \sum_g \mathbb{E} [ B_{gg}^2 \mid X_n, W_n] + 4 \sum_{g<h}  \mathbb{E} [B_{gh}^2 \mid X,W] \bigg) \notag  \\
        &\le 2 \mu_n^2 \bigg( \sum_{g,h} \mathbb{E} [B_{gh}^2 \mid X,W] \bigg). 
    \end{align}
    
Next, for any $g,h=1,\cdots, G$, by Cauchy-Schwarz inequality in the trace form
    \[ B_{gh}^2 = \big| \mathrm{tr} (H_{gh} U_h U'_g) \big|^2 \leq \|H_{gh}\|_F^2 \, \|U_h U'_g\|_F^2, \]
and then taking conditional expectation
    \[ \mathbb{E} [B^2_{gh} \mid X_n, W_n] \leq \|H_{gh}\|_F^2 \, \mathbb{E} \left[ \|U_h U'_g\|^2_F \mid X_n, W_n \right] \leq \|H_{gh}\|_F^2 \, \big( \sup_g \mathbb{E} \left[ \|U_g U'_g\|^2_F \mid X_n, W_n \right] \big), \]
where the last inequality comes from
    \begin{align*}
         \mathbb{E} \left[ \|U_h U'_g\|^2_F \mid X_n, W_n \right] &= \mathbb{E} \left[ \|U_h\|^2 \|U_g\|^2 \mid X_n, W_n \right] \le \big( \mathbb{E}[\|U_h\|^4 \mid X_n, W_n]\big)^{1/2} \big( \mathbb{E}[\|U_g\|^4 \mid X_n, W_n]\big)^{1/2} \\
         &\le \sup_g \mathbb{E}[\|U_g\|^4 \mid X_n, W_n] = \sup_g \mathbb{E} \left[ \|U_g U'_g\|^2_F \mid X_n, W_n \right].
    \end{align*}
Hence, by Lemma~\ref{pre:bound}(ii) with $\theta=2$ and summing over $(g,h)$,
    \begin{equation} \label{eq:chey_sum_bound}
        \sum_{g,h=1}^{G} \mathbb{E} [B^2_{gh} \mid X_n, W_n] \leq \Big( \sup_g \, \mathbb{E} \left[ \|U_g U'_g \|_F^2 \mid X_n, W_n \right] \Big) \bigg( \sum_{g,h} \|H_{gh}\|_F^2 \bigg) = O_p \Big( (\sup_g N_g)^2 \Big) \, \|H_n\|_F^2.
    \end{equation}

Because $\|M\|^2_{\mathrm{op}}=1$ and $\|\widetilde D_n \|_F^2 = n^2 \|D^\star\|^2_{\mathcal{K}_n} = O_p(n \sup_g N_g)$ by Lemma~\ref{pre:Dstar-bdd} 
    \[ \|H_n\|_F^2 = \frac{1}{n^4} \|M' \widetilde D_n M\|_F^2 \leq \frac{1}{n^4} \|M\|^4_{\mathrm{op}} \, \| \widetilde D_n \|_F^2 = O_p \bigg( \frac{\sup_g N_g}{n^3} \bigg). \]
Combining,
    \[  \sum_{g,h=1}^{G} \mathbb{E} [B^2_{gh} \mid X_n, W_n] = O_p \Big( (\sup_g N_g)^2 \Big) \, O_p \bigg( \frac{\sup_g N_g}{n^3} \bigg) = O_p \bigg( \Big( \frac{\sup_g N_g}{n} \Big)^3 \bigg). \]
Therefore, from \eqref{eq:chey_bound}, and by Assumption~\ref{asmp:A3} with $0<\lambda \leq 2$
   \[ \mathbb{E} \left[ \mu_n^2 Z_n^2 \mid X_n, W_n \right] \le 2 \mu_n^2 \bigg( \sum_{g,h=1}^{G} \mathbb{E} [B^2_{gh} \mid X_n, W_n] \bigg) = O_p \bigg(\mu_n^2 \Big( \frac{\sup_g N_g}{n} \Big)^3 \bigg) = o_p(1), \]
 Thus, \eqref{eq:chey_goal} holds, and we conclude \eqref{eq:V_converge}, which is equivalent to \eqref{eq:consistency}.

\medskip \noindent
\emph{Step 3: Riesz-$t$ statistic}. By Slutsky's theorem applied to \eqref{eq:normality} and \eqref{eq:consistency}
    \[ t_{\mathrm{Riesz}} := \widehat V^{-1/2}_{\mathrm{Riesz}} a' (\widehat \beta - \beta) = \big(V_{\mathrm{Riesz}}/V \big)^{-1/2} \, \left[ V^{-1/2} a' (\widehat \beta - \beta) \right] \, \longrightarrow \mathcal{N}(0,1), \]
which is \eqref{eq:t-Riesz}. This completes the proof.
\end{proof}

\paragraph{Proof of Theorem~\ref{thm:feasible}}
\begin{proof} 
The key is to compare the LSQR approximation with the exact Riesz estimator. By definition,
    \[ \widehat V_{\mathrm{Riesz},n}^{\mathrm{LSQR}} - \widehat V_{\mathrm{Riesz},n}^\star = \langle \widehat C_n,\, \widehat D_n - D_n^\star \rangle_{\mathcal K_n} \;\le\; \|\widehat C_n\|_{\mathcal K_n}\, \|\widehat D_n - D_n^\star\|_{\mathcal K_n}. \]

By~\eqref{eq:lsqr-det-bound-mu} in Lemma~\ref{lem:lsqr-residual-error}, the stopping rule \eqref{eq:lsqr-stop} implies a bound in the $\mathcal T_n$-norm,
    \[ \|\widehat D_n - D_n^\star\|_{\mathcal T_n} \;=\; O_p\!\left(\tau_n \sqrt{\frac{\sup_g N_g}{n}}\right), \qquad \mathcal T_n := \mathcal A_n\mathcal A_n^\ast. \]
Moreover, the \emph{proof} of Lemma~\ref{lem:lsqr-residual-error} (in particular, inequality~\eqref{eq:lem_lsqr_2} shows that for all $x\in\mathrm{range}(\mathcal A_n)$,
    \[ \|x\|_{\mathcal K_n} \;\le\; \mathcal C\,\|x\|_{\mathcal T_n}, \]
with $\mathcal C = \lambda_{\min}^{-1/2}$, where $\lambda_{\min}>0$ is the minimal eigenvalue of $\mathcal T_n$ on $\mathrm{range}(\mathcal A_n)$. Since both $\widehat D_n$ and $D_n^\star$ lie in $\mathrm{range}(\mathcal A_n)$ (LSQR is initialized at $0$ and Lemma~\ref{lem:AAstar-psd} ensures invariance of $\mathrm{range}(\mathcal A_n)$), we obtain
    \begin{equation} \label{eq:D-error}
        \|\widehat D_n - D_n^\star\|_{\mathcal K_n} \;\le\; \mathcal C\,\|\widehat D_n - D_n^\star\|_{\mathcal T_n} \;=\; O_p\!\left(\tau_n \sqrt{\frac{\sup_g N_g}{n}}\right).  
    \end{equation}

To bound $\|\widehat C_n\|_{\mathcal K_n}$, note that
    \[ \|\widehat C_n\|_{\mathcal K_n}^2 = \frac{1}{n^2}\sum_{h,h'}\|\widehat U_h \widehat U_{h'}'\|_F^2 = \Big(\frac{1}{n}\sum_{h=1}^G\|\widehat U_h\|^2\Big)^2 \quad \Rightarrow \quad \|\widehat C_n\|_{\mathcal K_n} = \frac{1}{n}\sum_{h=1}^G \|\widehat U_h\|^2. \]
Because $\widehat U = M U$ and $M$ is an orthogonal projection with $\|M\|_{\mathrm{op}} = 1$, we have $\sum_h \|\widehat U_h\|^2 = \|\widehat U\|^2 \le \|U\|^2 = \sum_h \|U_h\|^2$. By Lemma~\ref{pre:bound}(ii) with $\theta=1$, \(\mathbb E[\|U_g\|^2 \mid X,W] \le \mathcal{C} N_g \) uniformly in $g$, hence
    \[ \mathbb E\!\left[\frac{1}{n}\sum_{h=1}^G \|U_h\|^2 \, \Big| \, X, W \right] \le \frac{\mathcal{C}}{n} \sum_{h=1}^G N_h = \mathcal{C}. \]
Markov's inequality then gives
    \[\|\widehat C_n\|_{\mathcal K_n} = \frac{1}{n}\sum_{h=1}^G\|\widehat U_h\|^2 \;\le\; \frac{1}{n}\sum_{h=1}^G\|U_h\|^2 = O_p(1). \]
Combining this with \eqref{eq:D-error} yields
    \[ \big|\widehat V_{\mathrm{Riesz},n}^{\mathrm{LSQR}} - \widehat V_{\mathrm{Riesz},n}^\star \big|  = O_p(1)\cdot O_p\!\left(\tau_n \sqrt{\frac{\sup_g N_g}{n}}\right) = O_p\!\left(\tau_n \sqrt{\frac{\sup_g N_g}{n}}\right). \]
Under condition~\eqref{eq:tau-rate},
    \[ \mu_n \, \big|\widehat V_{\mathrm{Riesz},n}^{\mathrm{LSQR}} - \widehat V_{\mathrm{Riesz},n}^\star \big| = O_p\!\left(\mu_n \tau_n \sqrt{\frac{\sup_g N_g}{n}}\right) = o_p(1). \]
Thus
    \[ \mu_n \, \big|\widehat V_{\mathrm{Riesz},n}^{\mathrm{LSQR}} - V \big| \le \mu_n \, \big|\widehat V_{\mathrm{Riesz},n}^{\mathrm{LSQR}} - \widehat V_{\mathrm{Riesz},n}^\star \big| + \mu_n \big|\widehat V_{\mathrm{Riesz},n}^\star - V \big| = o_p(1) \]
where the second term is due to Theorem~\ref{thm:riesz-t}. This establishes \eqref{eq:consistency-lsqr}
    \[ \widehat V_{\mathrm{Riesz},n}/V \to 1. \]
 
By Slutsky theorem,
   \[ t_{\mathrm{Riesz}}^{\mathrm{LSQR}} := \widehat V_{\mathrm{Riesz},n}^{-1/2} a' (\widehat \beta - \beta) = \big(\widehat V_{\mathrm{Riesz},n}/V \big)^{-1/2} \, \left[ V^{-1/2} a' (\widehat \beta - \beta) \right] \, \longrightarrow \mathcal{N}(0,1), \]
which establishes \eqref{eq:t-lsqr}.
\end{proof}

\newpage
\section{Additional Simulation Results}


\end{document}